\newcommand\independent{\protect\mathpalette{\protect\independenT}{\perp}}
\def\independenT#1#2{\mathrel{\rlap{$#1#2$}\mkern2mu{#1#2}}}
\newtheorem{thm}{Theorem}
\newtheorem*{defn*}{Definition}
\newtheorem{cor}{Corollary}
\begin{document}

\title{\huge Regression Discontinuity Design under Self-selection \thanks{We would like to thank Zhuan Pei, Yanqing Fan, Peter Hull and participants in seminars and conferences at which this paper was presented. All remaining errors are ours.}}

\author{Sida Peng\thanks{Microsoft Research, sidpeng@microsoft.com}~~~and ~~Yang Ning\thanks{Department of Statistics and Data Science, Cornell University, yn265@cornell.edu}}

\date{\today}

\maketitle

\vspace{0.1in}

\begin{abstract}
\noindent In Regression Discontinuity (RD) design, self-selection leads to different distributions of covariates on two sides of the policy intervention, which essentially violates the continuity of potential outcome assumption. The standard RD estimand becomes difficult to interpret due to the existence of some indirect effect, i.e. the effect due to self selection. We show that the direct causal effect of interest can still be recovered under a class of estimands. Specifically, we consider a class of weighted average treatment effects tailored for potentially different target populations. We show that a special case of our estimands can recover the average treatment effect under the conditional independence assumption per \cite{Angrist2015}, and another example is the estimand recently proposed in \cite{FrolichHuber2018}. We propose a set of estimators through a weighted local linear regression framework and prove the consistency and asymptotic normality of the estimators. Our approach can be further extended to the fuzzy RD case. In simulation exercises, we compare the performance of our estimator with the standard RD estimator. Finally, we apply our method to two empirical data sets: the U.S. House elections data in \cite{Lee2008} and a novel data set from Microsoft Bing on  Generalized Second Price (GSP) auction.    \\[1em]
\textbf{Keywords:}
causal inference, regression discontinuity, weighted average treatment effect, inverse probability weighted estimator 
\end{abstract}

\section{Introduction}
Regression discontinuity (RD) design is an important policy evaluation tool that has been widely used in empirical studies.  Under the continuity assumptions \citep{Lee2008}, the RD design gives rise to many testable restriction similar to a randomized control trial, and it allows for the identification of causal effects \citep{Hahn2001}. Standard nonparametric tools like series expansion method or kernel regression method can be applied to estimate this quantity under this minimal assumption, see \cite{Imbens2008} and \cite{CattaneoRD2017}.

An implication of the continuity assumption is that the distribution of the covariates conditional on the assignment variable is continuous at the cutoff. To valid this assumption, a variety of statistical tests and nonparametric inference procedures have been proposed, including \cite{Cattaneo2015} and \cite{Canay2018}. However, this assumption may not hold in many applications. In particular, we consider the following two motivating examples. 

\begin{example}[College scholarship]
Consider the classical scholarship example per \cite{Thistlethwaite1960}. Students receive scholarships if their test score is higher than a threshold. The authors applied the RD framework to estimate the causal effect of the scholarship on student's future education outcomes. In this example, the continuity assumption implies that the distributions of the covariate information for students whose score is just above and below the threshold are similar. However, if students are aware of this policy before they take the SAT exam, some of them may study harder to pass this threshold. As a result, students whose score is on two sides of the threshold may have distinct characteristics (e.g., gender). 
\end{example} 

\begin{example}[Generalized Second Price (GSP) auction]
GSP is widely used by internet search engines like Google or Bing to allocate sponsored search advertisements. In reality, it is implemented through a reservation score system. Each bidder is assigned a score as a function of their bids and characteristics. The bidder's advertisement is displayed if her score passes the reservation. However, due to design of the score system, bidders with low quality may be selected below the reservation score while bidders with high quality may be selected above the reservation score. Given that low quality bidders may have a high willingness to pay, we often observe the distribution of bids to be different on two sides of the reservation score. We will further elaborate this example in the real data analysis. 
\end{example} 
  
Due to different distributions of the covariates at the cutoff in these examples, the standard RD estimand is no longer valid.  We show in those cases the standard RD estimand can be decomposed into a direct treatment effect and an indirect treatment effect. The indirect effect is due to the unbalanced covariates near the cut-off. For example, the policy intervention may result more boys than girls to receive scholarship. If in general boys perform different from girls in SAT test, the difference in SAT score due to gender will also be accounted into the standard RD estimand as if the policy intervention is selecting on genders. However as no direct causal mechanism is assumed between the unbalanced covariates and running variable, the selection could be generated due to an unknown equilibrium, completely reversed or purely spurious. For example, in a GSP auction, the reservation score is designed to separate the bidders by quality and in the meanwhile bidders may self-select such that low quality bidders may have incentive to bid higher. As a result, policy changes to move reservation score can be risky if the equilibrium between reservation score and quality of bidders are not disentangled.  

In this paper, we propose a new framework to address this problem by adjusting unbalanced covariates due to self-selection. Consider the following sharp RD setup: $T$ is a binary treatment variable, $Y(1)$ and $Y(0)$ are the potential outcomes under $T=1$ and $T=0$ and $X$ is the running variable so that the treatment is fully determined by $T = \bold{1}(X>c)$ for a known threshold $c$. In the classical RD framework, it is assumed that $\EE(Y(t)|X=x)$ is continuous at $x=c$ for $t=0, 1$; see Assumption \ref{asscont1}. This essentially assumes away self-selection based on both observed and unobserved covariates. To account for the self-selection effect, we assume that further covariate information can be collected. In particular, we define $Z(1)$ and $Z(0)$ as the potential covariates with or without treatment. By using the potential ``outcome" formulation, we allow the distribution of the covariates on two sides of the threshold to be different, i.e., discontinuity of the covariate distribution. By controlling all unbalanced covariates, we assume that $\EE(Y(t)|X=x, Z(t)=z)$ is continuous in $x$ at the threshold; see Assumption \ref{asscont2}. This is the main assumption made in this paper.   

In the classical RD framework, the standard causal parameter of interest is the marginal treatment effect $\mathbb{E}(Y(1) - Y(0)|X=c)$ at the cutoff. In the presence of self-selection, the continuity assumption on $\EE(Y(t)|X=x)$ may fail. Thus, the marginal treatment effect can be confounded by the discontinuity of the conditional mean of covariates at the cutoff, as seen in equation (\ref{eqtaulinear}). In this paper, we propose a class of estimands for RD design in the framework of weighted average treatment effect (WATE). We show that our estimands can tease out the effect of discontinuity of the conditional distribution of covariates through re-weighting the marginal treatment effects. For instance, under the constant treatment effect model with unbalanced covariates, our estimands reduce to the direct treatment effect. One special case of our estimands is \cite{FrolichHuber2018}.  Another special case can be interpreted as the ``global" average treatment effect (ATE) $\mathbb{E}(Y(1) - Y(0))$ under the conditional independence assumption (CIA) as in \cite{Angrist2015}. The CIA assumes that the treatment is mean independent of the running variable near the cutoff conditional on the covariates. Intuitively, after projecting the outcome variable  onto a rich set of covariates (excluding running variable), the residual should not depend on the running variable and can be viewed as an experiment with randomly assigned treatment. We show that our method identifies ATE under CIA assumption, whereas the standard RD estimand remains a ``local" treatment effect at the cutoff \citep{Lee2008}.

We further provide the nonparametric identification for our estimands and propose nonparametric estimators based on the inverse propensity score weighted (IPW) approach, see \cite{HorvitzThompson1952} and \cite{AbadieImbens2016}. However, notice that the treatment assignment is degenerate with respect to the running variable. We get around this problem by considering the marginal effect of the treatment and re-weighting on the other covariates first. The kernel method is applied to estimate the conditional mean function. The consistency and asymptotic normality of the proposed estimator are established. We further extend our method to the fuzzy RD design where the treatment compliance is imperfect.  Similarly, we provide the nonparametric identification of the causal effect under the fuzzy RD design, and propose a nonparametric estimator. The proposed estimator is similar to that of the local average treatment effect (LATE) in a fractional format but with numerator being adjusted to incorporate additional selections.

This work is connected to the growing literature on the RD design with covariates. In particular, two recent papers provide insightful guidance on the subject. \cite{Matias2017} estimated the marginal treatment effect by a local linear regression with the linear-in-parameters specification for the covariates. The main advantage of their method is that the nonparametric estimation of $\EE(Y(t)|X=x, Z(t)=z)$ is avoided.  In another paper, \cite{FrolichHuber2018} proposed a fully nonparametric estimator of the marginal treatment effect by estimating $\EE(Y(t)|X=x, Z(t)=z)$ nonparametrically. They allowed the conditional density of $Z(t)$ given $X$ to be discontinuous. Our work differs from the above papers by considering a different estimand that is less local under CIA  and identifies the direct treatment effect under the constant treatment effect model. Unlike \cite{Matias2017},  we do not require the continuity of the conditional mean of  $Z(t)$ given $X$. Instead, our Assumption \ref{asscont2} is similar to assumption 1 (iv) in \cite{FrolichHuber2018} under the sharp RD design. The proposed IPW estimator is also different from the above regression based estimators.


The rest of the paper is organized as follows. In Section \ref{sec_estimator}, we propose our new estimands in the form of weighted average treatment effects and establish its nonparametric identification. In Section  \ref{sec_est}, we propose a set of weighted local linear estimators. The theoretical properties are established in Section \ref{sec_asymptotics}. In Section \ref{sec_simulation}, we conduct simulation studies and also apply the method to two empirical data sets: the U.S. House elections data in \cite{Lee2008} and a novel data set from Microsoft Bing on  Generalized Second Price (GSP) auction. Finally, we consider the extension to the fuzzy RD case in Section \ref{sec_Fuzzy}. The proofs are deferred to the appendix.



\section{Sharp RD Design}
\label{sec_estimator}


\subsection{Problem Setup and Continuity Assumption}\label{sec_setup}
In the standard RD design setting, we observe $n$ i.i.d. random samples $\{Y_i, X_i, Z_i, T_i\}_{i = 1}^n$, where $Y_i$ is the outcome variable of interest  for the $i$th sample, $T_i\in\{0,1\}$ is the binary treatment variable, $X_i\in\RR$ is the running variable and $Z_i\in\RR^p$ is the covariate. In the sharp RD design, the treatment $T_i$ is perfectly assigned through the running variable $X_i$ relative to a known cutoff $c$. For example, we have
\[T_i = \bold{1}(X_i>c).\]  
Adopting a potential outcome framework, we can write the observed outcome variable $Y_i$ as
\[Y_i = Y_i(0)\cdot (1-T_i)+Y_i(1)\cdot T_i,\]
where $Y_i(0)$ and $Y_i(1)$ represent the potential outcomes without or with treatment. The average treatment effect is defined as $\mathbb{E}(Y_i(1) - Y_i(0))$. However, this estimand is not identifiable under the RD design as the treatment assignment $T_i$ is a deterministic function of $X_i$. In this framework,  \cite{Hahn2001}, \cite{Lee2008} and \cite{Cattaneo2015} showed that one can still identify the treatment effect at the cutoff
\[\tau_{SRD} = \mathbb{E}(Y_i(1) - Y_i(0)|X_i=c),
\]
under the following continuity assumption. 

\begin{assumption}[Continuity Assumption of $\EE(Y_i(t)|X_i=x)$]\label{asscont1}
\begin{equation*}
\EE(Y_i(1)|X_i=x) \quad \text{   and   } \quad \EE(Y_i(0)|X_i=x)
\end{equation*}
are continuous in $x$ at $x = c$.
\end{assumption}
This assumption implies that the conditional mean of the potential outcomes near the cutoff $x=c$ are similar. There is no discontinuity of the conditional mean functions at the cutoff. This assumption enables us to identify $\tau_{SRD}$ in the RD design. We refer to \cite{Hahn2001}, \cite{Lee2008} and \cite{Cattaneo2015} for further discussion on this assumption.

Now let us consider the case that additional covariates $Z_i$ are observed. Denote
\[Z_i = Z_i(0)\cdot (1-T_i)+Z_i(1)\cdot T_i,\]
where $Z_i(0)$ and $Z_i(1)$ represent the potential covariates without or with treatment. In the presence of covariates $Z_i$, the causal parameter $\tau_{SRD}$ can be rewritten as  
\begin{equation}\label{eqtau}
\tau_{SRD} = \mathbb{E}\Big(\mathbb{E}[Y_i(1) |X_i=c, Z_i(1)]- \mathbb{E}[Y_i(0) |X_i=c, Z_i(0)] \Big|X_i=c\Big).
\end{equation}

In a recent work, \cite{Matias2017} proposed a kernel based estimator of $\tau_{SRD}$ by accounting for the additional covariates $Z_i$. In addition to the continuity Assumption \ref{asscont1}, it is also assumed that $\EE(Z_i(1)|X_i=c) = \EE(Z_i(0)|X_i=c)$ for the consistency of the resulting kernel estimator, that is the potential covariates $Z_i(1)$ and $Z_i(0)$ have the same conditional mean at the cutoff $X_i=c$.

However, in some applications we may observe $\EE(Z_i(1)|X_i=c) \neq \EE(Z_i(0)|X_i=c)$, when self-selection based on the covariates exists. For example, consider the classical scholarship example. Students with SAT score higher than a threshold will receive scholarship. The treatment effect of interest is the effect of scholarship on the students' first semester GPAs. If the cutoff is pre-released, it might be possible that students with some common characteristics (i.e. gender) may study harder to pass the bar. This leads to an ex-ante selection based on the covariates. Thus, one may observe that the conditional mean functions of $Z_i$ given $X_i$ right below or above the threshold are different, i.e., 
\begin{equation}\label{eqself}
\lim_{x\rightarrow c^+}\EE(Z_i |X_i=x)\neq \lim_{x\rightarrow c^-}\EE(Z_i |X_i=x).
\end{equation}
The following simple lemma essentially says that the self-selection based on the covariates (i.e., eq \ref{eqself}) implies $\EE(Z_i(1)|X_i=c) \neq \EE(Z_i(0)|X_i=c)$.

\begin{lemma} 
If $\EE(Z_i(t)|X_i=x)$ is continuous at $x=c$ for $t\in\{0,1\}$, then
$$
\EE(Z_i(1)|X_i=c) = \EE(Z_i(0)|X_i=c) \Longleftrightarrow \lim_{x\rightarrow c^+}\EE(Z_i |X_i=x)=\lim_{x\rightarrow c^-}\EE(Z_i |X_i=x).
$$
\end{lemma}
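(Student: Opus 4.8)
The plan is to exploit the deterministic assignment rule $T_i = \bold{1}(X_i>c)$, which forces the observed covariate $Z_i$ to coincide with exactly one of the two potential covariates on each side of the cutoff, and then pass to one-sided limits using the assumed continuity.

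First I would note that for any $x>c$ the event $\{X_i=x\}$ entails $T_i=1$, so on that event $Z_i = Z_i(0)\cdot 0 + Z_i(1)\cdot 1 = Z_i(1)$, and hence $\EE(Z_i\mid X_i=x) = \EE(Z_i(1)\mid X_i=x)$. Symmetrically, for any $x<c$ we have $T_i=0$, so $Z_i = Z_i(0)$ and $\EE(Z_i\mid X_i=x) = \EE(Z_i(0)\mid X_i=x)$. Taking $x\to c$ from each side and invoking continuity of $x\mapsto\EE(Z_i(1)\mid X_i=x)$ and $x\mapsto\EE(Z_i(0)\mid X_i=x)$ at $x=c$ yields
\[
\lim_{x\rightarrow c^+}\EE(Z_i\mid X_i=x) = \EE(Z_i(1)\mid X_i=c), \qquad \lim_{x\rightarrow c^-}\EE(Z_i\mid X_i=x) = \EE(Z_i(0)\mid X_i=c).
\]
The asserted equivalence is then immediate: the two one-sided limits of $\EE(Z_i\mid X_i=x)$ coincide if and only if $\EE(Z_i(1)\mid X_i=c)=\EE(Z_i(0)\mid X_i=c)$. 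Note the argument is symmetric, so it proves both directions at once.

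The only delicate point — and the closest thing to an obstacle — is the measure-theoretic reading of ``conditioning on $\{X_i=x\}$'' when $X_i$ is continuously distributed and $\{X_i=x\}$ is a null event; what is really meant is a version of the conditional expectation function. It is cleanest to phrase the whole argument directly in terms of one-sided limits, i.e. to work with $\EE(Z_i\mid X_i\in(c,c+\delta)) = \EE(Z_i(1)\mid X_i\in(c,c+\delta))$ (which holds exactly, since $T_i\equiv 1$ on that set) and then let $\delta\to 0$, using continuity of the conditional mean of $Z_i(1)$ to identify the limit with $\EE(Z_i(1)\mid X_i=c)$; the sub-cutoff side is handled identically. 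Apart from this bookkeeping, the proof is a one-line substitution plus a limit.
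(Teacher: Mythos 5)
Your proposal is correct and follows essentially the same route as the paper: identify $\EE(Z_i\mid X_i=x)$ with $\EE(Z_i(1)\mid X_i=x)$ for $x>c$ and with $\EE(Z_i(0)\mid X_i=x)$ for $x<c$ via the deterministic rule $T_i=\bold{1}(X_i>c)$, then use continuity of $\EE(Z_i(t)\mid X_i=x)$ at $x=c$ to equate the one-sided limits with $\EE(Z_i(1)\mid X_i=c)$ and $\EE(Z_i(0)\mid X_i=c)$. Your remark on the measure-theoretic reading of conditioning is a fair caveat but does not change the argument, which matches the paper's proof.
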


\begin{proof}
By the continuity assumption on $\EE(Z_i(t)|X_i=x)$, 
$$
\EE(Z_i(1)|X_i=c)=\lim_{x\rightarrow c^+}\EE(Z_i(1)|X_i=x)=\lim_{x\rightarrow c^+}\EE(Z_i |X_i=x),
$$
and similarly $\EE(Z_i(0)|X_i=c)=\lim_{x\rightarrow c^-}\EE(Z_i |X_i=x)$. The lemma holds.
\end{proof}

The above lemma provides a convenient way to check whether $\EE(Z_i(1)|X_i=c) = \EE(Z_i(0)|X_i=c)$ holds in empirical studies. One may simply plot the observed covariates $Z_i$ against $X_i$ and examine whether there is a discontinuity of the trend around $x=c$. The method is applied in the real data analysis. 

In the following, we investigate the consequence of $\EE(Z_i(1)|X_i=c) \neq \EE(Z_i(0)|X_i=c)$. To be specific, we consider the following constant treatment effect model 
\begin{equation}\label{eqconstantmodel}
\EE(Y_i(t)| X_i, Z_i (t)) = \alpha + \tau \bold{1}(t=1) + g(X_i)+Z_i(t)\gamma,
\end{equation}
for $t\in\{0,1\}$, where $g(\cdot)$ is an arbitrary continuous function. By (\ref{eqtau}), one can show that
\begin{equation}\label{eqtaulinear}
\tau_{SRD} = \tau +  \Big(\mathbb{E}(Z_i(1)|X_i = c) - \mathbb{E}(Z_i(0)|X_i = c)\Big)\gamma.
\end{equation}
The estimand $\tau_{SRD}$ can be decomposed into two terms. The first term $\tau$ represents the direct treatment effect after controlling the running variable $X_i$ and the covariates $Z_i(t)$. The second term in the right hand side of (\ref{eqtaulinear}) can be interpreted as the indirect effect of the policy due to the unbalanced covariates near the cutoff or self-selection, which is nonzero if $\gamma\neq 0$ and $\EE(Z_i(1)|X_i=c) \neq \EE(Z_i(0)|X_i=c)$. In many applications, the direct treatment effect $\tau$ is usually more meaningful and interpretable than $\tau_{SRD}$, as $\tau_{SRD}$ is confounded by the self-selection effect. 

In this example, when the indirect effects in $\tau_{SRD}$ is assumed away by requiring the continuity of the conditional density of the covariates $Z_i$ at the cutoff $X_i = c$, the data around the cutoff can be viewed as a natural experiment and continuity on the covariates implies a balanced design for this experiment so that we can estimate a local average treatment effect. However when the self-selection exits, the experiment is no longer balanced and the average treatment effect $\tau_{SRD}$ will typically differ from the direct effect $\tau$. 



\subsection{Weighted Average Treatment Effect}

As seen in (\ref{eqtaulinear}), if $\EE(Z_i(1)|X_i=c) \neq \EE(Z_i(0)|X_i=c)$ (i.e., the covariates are unbalanced at the cutoff) and $\gamma\neq 0$, $\tau_{SRD}$ can be different from the causal parameter of interest. To overcome this difficulty, we propose a new class of causal parameters, called the weighted average treatment effect (WATE), which are defined as
\begin{align}
\tau^{w}_{SRD}&=\EE\{Y(1)w_1(Z(1))|X=c\}-\EE\{Y(0)w_0(Z(0))|X=c\}\nonumber\\
&=\int \Big[\EE(Y(1)|X=c, Z(1)=z)w_1(z)f_{Z(1)|X}(z|c)\nonumber\\
&~~~~~~ -\EE(Y(0)|X=c, Z(1)=z)w_0(z)f_{Z(0)|X}(z|c)\Big]dz,\label{eqtaunew}
\end{align}
where $w_1(\cdot)$ and $w_0(\cdot)$ denote different choices of weights to form the estimand, and $f_{Z(1)|X}(\cdot |\cdot)$ and $f_{Z(0)|X}(\cdot |\cdot)$ are the conditional density of $Z(1)$ and $Z(0)$ given $X$. In order to interpret (\ref{eqtaunew}) as the WATE, we require the following normalization condition for $w_1(\cdot)$ and $w_0(\cdot)$:
$$
\int w_1(z)f_{Z(1)|X}(z|c)dz=\int w_0(z)f_{Z(0)|X}(z|c)dz=1.
$$
In particular, by choosing appropriate $w_1(\cdot)$ and $w_0(\cdot)$, (\ref{eqtaunew}) can be interpreted as the average of the difference of the conditional mean functions corresponding to a target population. To see this, we consider the following examples. Denote 
$$
\Delta(c,z)=\mathbb{E}(Y(1)|X=c, Z(1)=z) - \mathbb{E}(Y(0)|X=c, Z(0)=z).
$$

\begin{itemize}
\item Average treatment effect over entire population: 
$$
\tau^{w1}_{SRD}=\int \Delta(c,z) f_Z(z)dz,
$$
where $f_Z(\cdot)$ is the p.d.f of the covariates $Z$. In $\tau^{w1}_{SRD}$, we average the conditional mean difference over the entire population whose covariates follow from the {\it marginal distribution} $f_Z(z)$. 
It is easy to see that WATE defined in (\ref{eqtaunew}) recovers $\tau^{w1}_{SRD}$ by taking 
$$
w_1(z)=\frac{f_Z(z)}{f_{Z(1)|X}(z|c)},~~ \textrm{and}~~w_0(z)=\frac{f_Z(z)}{f_{Z(0)|X}(z|c)}.
$$ 
\item Average treatment effect over locally untreated population:
$$
\tau^{w2}_{SRD}=\int \Delta(c,z) f_{Z(0)|X}(z|c)dz.
$$
In this causal parameter, we average the conditional mean difference over the untreated population right below the threshold whose covariates follow from the {\it conditional distribution} $f_{Z(0)|X}(z|c)$. Similarly, we obtain $\tau^{w2}_{SRD}$ by taking
$$
w_1(z)=\frac{f_{Z(0)|X}(z|c)}{f_{Z(1)|X}(z|c)}~~ \textrm{and}~~w_0(z)=1.
$$
\item Average treatment effect over locally randomized population:
$$
\tau^{w3}_{SRD}=\int \Delta(c,z) \frac{f_{Z(0)|X}(z|c) + f_{Z(1)|X}(z|c)}{2} dz.
$$ 
This is the estimand studied by \cite{FrolichHuber2018} in the sharp RD case. Under the proposed WATE framework, $\tau^{w3}_{SRD}$ can be viewed as the average treatment effect over the population around the threshold which is randomized so that their covariates follow from $f_{Z(0)|X}(z|c)$ and $f_{Z(1)|X}(z|c)$ with equal probability. Similarly, we obtain $\tau^{w3}_{SRD}$ by taking
$$
w_1(z)=\frac{f_{Z(1)|X}(z|c)+f_{Z(0)|X}(z|c)}{2f_{Z(1)|X}(z|c)}~~ \textrm{and}~~w_0(z)=\frac{f_{Z(1)|X}(z|c)+f_{Z(0)|X}(z|c)}{2f_{Z(0)|X}(z|c)}.
$$
\item Average treatment effect via classical RD estimand: We note that the proposed WATE reduces to the classical RD estimand $\tau_{SRD}$ (\ref{eqtau}) by taking $w_1(z)=w_0(z)=1$. However, we note that unlike the previous three examples, $\tau_{SRD}$ may not be written as the average treatment effect over one well defined population. To see this, recall that when there exists self-selection, the conditional distributions $f_{Z(1)|X}(z| c)$ and $f_{Z(0)|X}(z| c)$ usually differ from each other. Then $\tau_{SRD}$ in (\ref{eqtau}) can be written as the difference of the average of $\mathbb{E}(Y_i(t)|X_i=c, Z_i(t)=z)$ over two populations (i.e., the population right below and above the threshold) with covariate distributions $f_{Z(1)|X}(z| c)$ and $f_{Z(0)|X}(z| c)$ respectively. This is the reason for which $\tau_{SRD}$ is confounded by the unbalanced covariates. 
\end{itemize} 


A summary of the first three estimands is provided in Table \ref{tab0}. In practice, which causal estimand in above examples to use should depend on the target population of interest and is often determined on a case-by-case basis. Indeed, our framework opens a door towards designing new causal parameters tailored to specific applications. For instance, similar to $\tau^{w2}_{SRD}$, one can define the average treatment effect over locally treated population i.e., $\int \Delta(c,z) f_{Z(1)|X}(z|c)dz$. 
Since the goal of the paper is to deal with unbalanced covariates, to fix the idea we will mainly focus on the first three examples. 


In the following, we comment on two properties of our estimands $\tau^{w1}_{SRD}$, $\tau^{w2}_{SRD}$ and $\tau^{w3}_{SRD}$. First, under the constant treatment effect model (\ref{eqconstantmodel}), direct calculation shows that $\Delta(c,z)=\tau$ and thus $\tau^{w1}_{SRD}=\tau^{w2}_{SRD}=\tau^{w3}_{SRD}$ equals to the direct treatment effect $\tau$ without any further assumption. In contrast, the classical RD estimand $\tau_{SRD}$ reduces to $\tau$ under the extra assumption that $\gamma= 0$ or $\EE(Z_i(1)|X_i=c) = \EE(Z_i(0)|X_i=c)$.

Second, our estimand $\tau^{w1}_{SRD}$ generalizes to the overall average treatment effect (ATE) under the conditional independence assumption (CIA) proposed by \cite{Angrist2015}. Assume that $Z_i$ are the pre-treatment covariates, i.e, $Z_i(1)=Z_i(0)=Z_i$. The CIA is defined as  
\begin{equation}\label{eqcia}
\mathbb{E}(Y_i(1)|X_i, Z_i) = \mathbb{E}(Y_i(1)|Z_i), \qquad \mathbb{E}(Y_i(0)|X_i, Z_i) = \mathbb{E}(Y_i(0)|Z_i),
\end{equation}
which says that the potential outcomes are mean independent of the running variable conditional on the covariates. By controlling a rich set of covariates, CIA seems to be a reasonable assumption as the link between the running variable and outcomes can be blocked \citep{Angrist2015}. Since the CIA (\ref{eqcia}) implies $\Delta(c,z)=\Delta(z)$, our estimand $\tau^{w1}_{SRD}$ reduces to
$$
\tau^{w1}_{SRD}=\int \Delta(z) f_Z(z)dz=\mathbb{E}(Y_i(1) - Y_i(0)),
$$
which is the overall ATE. In contrast, 
$$
\tau^{w2}_{SRD}=\tau^{w3}_{SRD}=\tau_{SRD}=\mathbb{E}(Y_i(1) - Y_i(0)|X_i=c)
$$
remains a ``local" treatment effect at the cutoff. It requires further conditions to generalize to the overall ATE. For instance, if the constant treatment assumption $\mathbb{E}(Y_i(1)|Z_i)-\mathbb{E}(Y_i(0)|Z_i)=a$ holds for some constant $a$, then $\tau_{SRD}$ reduces to $\mathbb{E}(Y_i(1) - Y_i(0))$. Thus, the new estimand $\tau^{w1}_{SRD}$ can represent a causal effect that is less local than the standard RD estimand $\tau_{SRD}$.


\begin{landscape}
\begin{table}
\singlespacing
	\begin{center}
		\begin{tabular}{ccccc}
			\toprule 

Estimand & $w_1(z)$ & $w_0(z)$ & $\pi_1(z)$ & $\pi_0(z)$\\
\hline
$\int \Delta(c,z) f_Z(z)dz$ & $\frac{f_Z(z)}{f_{Z(1)|X}(z|c)}$ & $\frac{f_Z(z)}{f_{Z(0)|X}(z|c)}$ & $\frac{f_{X, Z(1)}(c,z)}{2f_{Z}(z)}$ &$\frac{f_{X, Z(0)}(c,z)}{2f_{Z}(z)}$\\
$\int \Delta(c,z) f_{Z(0)|X}(z|c)dz$ & $\frac{f_{Z(0)|X}(z|c)}{f_{Z(1)|X}(z|c)}$ &$1$ &  $\frac{f_{X, Z(1)}(c,z)}{2f_{Z(0)|X}(z|c)}$ & $\frac{f_X(c)}{2}$\\
$\int \Delta(c,z) \frac{f_{Z(0)|X}(z|c) + f_{Z(1)|X}(z|c)}{2} dz$ & $\frac{f_{Z(1)|X}(z|c)+f_{Z(0)|X}(z|c)}{2f_{Z(1)|X}(z|c)}$ & $\frac{f_{Z(1)|X}(z|c)+f_{Z(0)|X}(z|c)}{2f_{Z(0)|X}(z|c)}$ & $\frac{f_{X, Z(1)}(c,z)}{f_{Z(1)|X}(z|c)+f_{Z(0)|X}(z|c)}$ & $\frac{f_{X, Z(0)}(c,z)}{f_{Z(1)|X}(z|c)+f_{Z(0)|X}(z|c)}$\\
 \bottomrule
		\end{tabular}
	\end{center}
\vspace{-.2in}
\caption{Three examples of the weighted average treatment effect $\tau^w_{SRD}$, where $\Delta(c,z)=\mathbb{E}(Y_i(1)|X_i=c, Z_i(1)=z) - \mathbb{E}(Y_i(0)|X_i=c, Z_i(0)=z)$.} \label{tab0}
\end{table}
\end{landscape}

\subsection{Nonparametric Identification}
In this subsection, we study the nonparametric identification of $\tau^{w1}_{SRD}$, $\tau^{w2}_{SRD}$ and $\tau^{w3}_{SRD}$. Instead of Assumption \ref{asscont1}, we impose the following continuity assumption.

\begin{assumption}[Continuity Assumption of $\EE(Y_i(t)|X_i=x, Z_i(t)=z)$]\label{asscont2}
\begin{equation*}
\EE(Y_i(1)|X_i=x, Z_i(1)=z) \quad \text{   and   } \quad \EE(Y_i(0)|X_i=x, Z_i(0)=z)
\end{equation*}
are right and left continuous in $x$ at $x=c$ for any $z\in\cZ$ respectively.
\end{assumption}

Intuitively, this assumption says, once all unbalanced covariates are controlled, there is no further discontinuity between the running variable and outcomes at the threshold. This assumption is similar to assumption 1 (iv) in \cite{FrolichHuber2018} under the sharp RD design. In addition, this assumption is weaker than CIA, as  (\ref{eqcia}) implies our Assumption \ref{asscont2} when the covariates are pre-determined.


The following theorem shows that $\tau^{w1}_{SRD}$ is identifiable based on the distribution of the observed data under  Assumption \ref{asscont2}. In addition, $\tau^{w2}_{SRD}$ and $\tau^{w3}_{SRD}$ are identifiable under some extra continuity assumptions. 

\begin{theorem}[Nonparametric Identification]
Under Assumption \ref{asscont2}, $\tau^{w1}_{SRD}$ is identifiable: 
$$
\tau^{w1}_{SRD}=\int \big[\mathbb{E}(Y|X=c^+, Z=z)-\mathbb{E}(Y|X=c^-, Z=z)\big]f_Z(z)dz ,
$$
where $\mathbb{E}(Y|X=c^+, Z)=\lim_{x\rightarrow c^+} \mathbb{E}(Y|X=x, Z)$ and $\mathbb{E}(Y|X=c^-, Z)=\lim_{x\rightarrow c^-} \mathbb{E}(Y|X=x, Z)$. In addition, if $f_{Z(0)|X}(z|x)$ is left continuous in $x$ at $x=c$ for any $z\in\cZ$, then $\tau^{w2}_{SRD}$ is identifiable: 
$$
\tau^{w2}_{SRD}=\int \big[\mathbb{E}(Y|X=c^+, Z=z)-\mathbb{E}(Y|X=c^-, Z=z)\big]f_{Z|X}(z|c^-)dz ,
$$
where $f_{Z|X}(z|c^-)=\lim_{x\rightarrow c^-}f_{Z|X}(z|x)$. Furthermore, if $f_{Z(1)|X}(z|x)$ is right continuous in $x$ at $x=c$ for any $z\in\cZ$, then $\tau^{w3}_{SRD}$ is identifiable: 
$$
\tau^{w3}_{SRD}=\int \big[\mathbb{E}(Y|X=c^+, Z=z)-\mathbb{E}(Y|X=c^-, Z=z)\big]\frac{f_{Z|X}(z|c^-)+f_{Z|X}(z|c^+)}{2}dz.
$$
\end{theorem}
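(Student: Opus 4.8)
The plan is to reduce every potential-outcome quantity appearing in $\tau^{w1}_{SRD}$, $\tau^{w2}_{SRD}$, $\tau^{w3}_{SRD}$ to a functional of the law of the observed vector $(Y,X,Z,T)$, using the sharp-RD plumbing $T=\mathbf{1}(X>c)$, $Y=Y(0)(1-T)+Y(1)T$, $Z=Z(0)(1-T)+Z(1)T$, and then passing from limits taken along $x>c$ or $x<c$ to values at $x=c$ via the one-sided continuity in Assumption~\ref{asscont2} (and the extra density-continuity hypotheses for $\tau^{w2}_{SRD}$, $\tau^{w3}_{SRD}$).

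\textbf{Identifying $\Delta(c,z)$.} On $\{X>c\}$ we have $T=1$, so $(Y,Z)$ equals $(Y(1),Z(1))$ a.s.; hence $\mathbb{E}(Y\mid X=x,Z=z)=\mathbb{E}(Y(1)\mid X=x,Z(1)=z)$ for every $x>c$ and every $z$ in the relevant support. Letting $x\downarrow c$ and invoking the right-continuity of $x\mapsto\mathbb{E}(Y(1)\mid X=x,Z(1)=z)$ from Assumption~\ref{asscont2} gives $\mathbb{E}(Y\mid X=c^+,Z=z)=\mathbb{E}(Y(1)\mid X=c,Z(1)=z)$. Symmetrically, on $\{X<c\}$ we have $T=0$ and $(Y,Z)=(Y(0),Z(0))$, so $\mathbb{E}(Y\mid X=x,Z=z)=\mathbb{E}(Y(0)\mid X=x,Z(0)=z)$ for $x<c$, and left-continuity yields $\mathbb{E}(Y\mid X=c^-,Z=z)=\mathbb{E}(Y(0)\mid X=c,Z(0)=z)$. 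Subtracting, $\Delta(c,z)=\mathbb{E}(Y\mid X=c^+,Z=z)-\mathbb{E}(Y\mid X=c^-,Z=z)$, which depends on observables only.

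\textbf{Identifying the weights and assembling.} For $\tau^{w1}_{SRD}$ the weighting density is just the marginal density $f_Z$ of the observed $Z$, so substituting the previous display into $\int\Delta(c,z)f_Z(z)\,dz$ already gives the first identity. For $\tau^{w2}_{SRD}$: on $\{X<c\}$, $Z=Z(0)$, so $f_{Z\mid X}(z\mid x)=f_{Z(0)\mid X}(z\mid x)$ for $x<c$; letting $x\uparrow c$ and using the assumed left-continuity of $f_{Z(0)\mid X}(z\mid x)$ at $x=c$ gives $f_{Z\mid X}(z\mid c^-)=f_{Z(0)\mid X}(z\mid c)$, and plugging this together with the $\Delta$-identity into $\int\Delta(c,z)f_{Z(0)\mid X}(z\mid c)\,dz$ gives the second identity. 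For $\tau^{w3}_{SRD}$, the mirror argument on $\{X>c\}$ with the assumed right-continuity of $f_{Z(1)\mid X}(z\mid x)$ gives $f_{Z\mid X}(z\mid c^+)=f_{Z(1)\mid X}(z\mid c)$; averaging the two density identities and substituting into the definition of $\tau^{w3}_{SRD}$ gives the third.

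\textbf{Main obstacle.} Everything above is bookkeeping once the RD structure is in place; the only delicate points concern supports and well-posedness of the one-sided conditional means and densities. One needs $z$ in the support of $Z(1)\mid X=x$ for $x$ just above $c$ (resp.\ of $Z(0)\mid X=x$ just below $c$) for $\mathbb{E}(Y(1)\mid X=x,Z(1)=z)$ (resp.\ $\mathbb{E}(Y(0)\mid\cdot)$) to be defined along the approaching sequence, and an overlap condition so that $f_Z$ (resp.\ $f_{Z\mid X}(\cdot\mid c^-)$, resp.\ their average) is supported where $\Delta(c,\cdot)$ is defined, so that the outer integral is meaningful. I would record these as mild positivity/regularity conditions; under them no machinery beyond Assumption~\ref{asscont2} is needed, since the limit sits inside the integrand and is never exchanged with the integral, so no dominated-convergence-type argument arises.
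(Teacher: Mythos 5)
Your proposal is correct and follows essentially the same route as the paper: equate observed and potential-outcome conditional means (and conditional densities) separately on each side of the cutoff via $T=\mathbf{1}(X>c)$, then pass to the cutoff pointwise in $z$ using the one-sided continuity in Assumption~\ref{asscont2} and the extra density-continuity hypotheses, with no limit--integral interchange needed. Your added remarks on support/overlap regularity are a minor explicit bookkeeping point that the paper leaves implicit, not a different argument.
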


\begin{proof}
To show the identifiability of $\tau^{w1}_{SRD}$, we note that Assumption \ref{asscont2} implies
\begin{align*}
\int \mathbb{E}(Y(1) |X=c, Z(1)=z) f_Z(z) dz&=\int \lim_{\delta\rightarrow 0^+}\mathbb{E}(Y(1) |X=c+\delta, Z(1)=z) f_Z(z) dz, 
\end{align*}
By the definition of the potential outcome, we have $\mathbb{E}(Y(1) |X=c+\delta, Z(1)=z)=\mathbb{E}(Y |X=c+\delta, Z=z)$ for any $\delta>0$. Thus,
$$
\int \mathbb{E}(Y(1) |X=c, Z(1)=z) f_Z(z) dz=\int \lim_{\delta\rightarrow 0^+}\mathbb{E}(Y |X=c+\delta, Z=z) f_Z(z) dz.
$$
Following the same step, we can show that 
$$
\int \mathbb{E}(Y(0) |X=c, Z(0)=z) f_Z(z) dz=\int \lim_{\delta\rightarrow 0^-}\mathbb{E}(Y |X=c+\delta, Z=z) f_Z(z) dz.
$$
This implies $\tau^{w1}_{SRD}$ is identifiable. To show $\tau^{w2}_{SRD}$ is identifiable, similarly we have $\mathbb{E}(Y(1) |X=c, Z(1)=z)=\mathbb{E}(Y |X=c^+, Z=z)$. In addition, by the left continuity of $f_{Z(0)|X}(z|x)$, we further have $f_{Z(0)|X}(z|c)=\lim_{\delta\rightarrow 0^-} f_{Z(0)|X}(z|c+\delta)=f_{Z|X}(z|c^-)$. Thus, we have
\begin{align*}
\int \mathbb{E}(Y(1) |X=c, Z(1)=z) f_{Z(0)|X}(z|c) dz&=\int \mathbb{E}(Y |X=c^+, Z=z) f_{Z|X}(z|c^-) dz.
\end{align*}
This implies $\tau^{w2}_{SRD}$ is identifiable. The identifiability of $\tau^{w3}_{SRD}$ follows from the same argument. This completes the proof.
\end{proof}

\section{Nonparametric Estimation}\label{sec_est}

In the causal inference literature, inverse propensity score weighting (IPW) is one of the most widely used tools to handle the unbalanced covariates in the treatment and control groups \citep{HorvitzThompson1952}. However, the standard IPW method is not directly applicable because in the sharp RD design the treatment assignment is a deterministic function of the running variable and thus the propensity score is degenerate. In this section, we propose a class of nonparametric estimators of $\tau^{w1}_{SRD}$, $\tau^{w2}_{SRD}$ and $\tau^{w3}_{SRD}$ by modifying the inverse propensity score weighting approach. 

To motivate our nonparametric estimator, we consider the following notation. Denote by $\pi_1(Z_i)$ a function of the covariate $Z_i$ to be chosen later, $K(\cdot)$ a symmetric kernel function and $h$ a bandwidth that shrinks to 0. The detailed conditions on the kernel function and bandwidth are deferred to the next section. Consider the following inverse weighted kernel estimator 
\begin{equation}\label{eqkernel}
\frac{1}{n}\sum_{i=1}^n\frac{Y_iT_i}{\pi_1(Z_i)} \cdot h^{-1}K\Big(\frac{X_i-c}{h}\Big),
\end{equation}
for the estimand $\EE\{Y(1)w_1(Z(1))|X=c\}$, where $\pi_1(Z)$ plays the same role as the propensity score in the IPW method. Since in the RD design the propensity score function is degenerate, in the following we will show that the choice of $\pi_1(Z_i)$ differs from the standard propensity score model. The rationale is to choose $\pi_1(Z_i)$ so that the estimator (\ref{eqkernel}) is asymptotically unbiased,
\begin{equation}\label{unbiasedeq}
\mathbb{E}\left( \frac{Y_iT_i}{\pi_1(Z_i)} \cdot h^{-1}K\Big(\frac{X_i-c}{h}\Big)\right) \approx \EE\{Y_i(1)w_1(Z_i(1))|X_i=c\}.
\end{equation}
Thus, it suffices to calculate the expectation of the kernel estimator as follows
\begin{align}
&\mathbb{E}\left( \frac{Y_iT_i}{\pi_1(Z_i)} \cdot h^{-1}K\Big(\frac{X_i-c}{h}\Big)\right) \nonumber\\
&= \mathbb{E}\left( \frac{T_i}{\pi_1(Z_i(1))} \cdot h^{-1}K\Big(\frac{X_i-c}{h}\Big)\cdot\EE(Y_i(1)|X_i,Z_i(1))\right)\nonumber\\
& = \int \int_{x>c} \mathbb{E}(Y_i(1)|X_i=x, Z_i(1)=z) \frac{ h^{-1}}{\pi_1(z)} K\Big(\frac{x-c}{h}\Big) f_{Z(1),X}(z, x)dxdz,\label{eqstep1}
\end{align}
where the first step follows from $Z_i=Z_i(1)$ under $T=1$ and the last step follows from the definition $T_i = \bold{1}(X_i>c)$. Thus, (\ref{eqstep1}) implies
\begin{align}
&\mathbb{E}\left( \frac{Y_iT_i}{\pi_1(Z_i)} \cdot h^{-1}K\Big(\frac{X_i-c}{h}\Big)\right)\nonumber\\ 
&= \int \int_{u>0} \mathbb{E}(Y_i(1)|X_i=uh+c, Z_i(1)=z) \frac{K(u)}{\pi_1(z)} f_{Z(1),X}(z,uh+c)dudz\nonumber\\
&\approx\int \mathbb{E}(Y_i(1)|X_i=c, Z_i(1)=z) \frac{1}{\pi_1(z)} \frac{f_{Z(1),X}(z, c)}{2}dz,\label{eqstep2}
\end{align}
where the last step follows from the symmetry of the kernel function $\int_{u>0} K(u)du=1/2$ and can be made rigorous given the regularity conditions specified in the next section. Comparing with
$$
\EE\{Y_i(1)w_1(Z_i(1))|X_i=c\}=\int \mathbb{E}(Y_i(1)|X_i=c, Z_i(1)=z) w_1(z)f_{Z(1)|X}(z|c)dz,
$$
we can see that (\ref{unbiasedeq}) holds provided
\begin{equation}\label{eqpi1}
\pi_1(z)=\frac{f_X(c)}{2w_1(z)},
\end{equation}
where $f_X(c)$ is the p.d.f of $X$ at $x=c$. Following from the same argument, one can show that $\pi_0(z)=\frac{f_X(c)}{2w_0(z)}$.  In Table \ref{tab0}, we give the detailed expression of $\pi_1(z)$ and $\pi_0(z)$ for the proposed three estimands $\tau^{w1}_{SRD}$, $\tau^{w2}_{SRD}$ and $\tau^{w3}_{SRD}$. 
Since the weight $\pi_1(z)$ depends on the unknown density functions, we propose to estimate those densities by the following kernel estimators
\begin{equation}\label{eqkernelf1}
\begin{split}
\hat f_{X,Z(1)}(c,z) = 2 \cdot (nh_1^2)^{-1}\sum_{x_i>c}K_1(\frac{c-x_i}{h_1}, \frac{z-z_i}{h_1}),
\end{split}
\end{equation}
\begin{equation}\label{eqkernelf0}
\begin{split}
\hat f_{X,Z(0)}(c,z) = 2 \cdot (nh_1^2)^{-1}\sum_{x_i<c}K_1(\frac{c-x_i}{h_1}, \frac{z-z_i}{h_1}),
\end{split}
\end{equation}
\begin{equation}\label{eqkernelfz}
\hat f_{Z}(z)  =  (nh_2)^{-1}\sum_{i=1}^nK(\frac{z-z_i}{h_2}),~~\textrm{and}~~\hat f_{X}(c)  =  (nh_2)^{-1}\sum_{i=1}^nK(\frac{c-x_i}{h_2}),
\end{equation}
where $h_1$ and $h_2$ are bandwidth parameters. Note that the kernel estimator is known to suffer from the curse of dimensionality and is only applicable when the dimension of $Z_i$ is small. For the applications in which a large number of covariates can be collected, one may consider alternative parametric or semiparametric approaches for density estimation. In this work, we only focus on the above kernel estimators and leave the alternatives for future investigation.


Replacing the unknown density functions in $\pi_1(z)$ and $\pi_0(z)$ as shown in Table \ref{tab0} with the corresponding kernel estimators, we can obtain $\hat\pi_1(z)$ and $\hat\pi_0(z)$. 
While we can construct the final estimator by plugging $\hat\pi_1(z)$ into (\ref{eqkernel}), for practical use and theoretical analysis we recommend the local linear estimator, since it has smaller asymptotic bias and better finite sample behavior near the boundary \citep{fan1996local}. Motivated by the formulation of the kernel estimator (\ref{eqkernel}), we propose the following weighted local linear (WLL) estimator 
\begin{equation}\label{est}
\begin{split}
\Big(\hat{\alpha}_{0}, \hat{\beta}_{0}\Big) &= \arg \min_{\alpha, \beta} \sum_{\{i:X_i<c\}} \frac{1-T_i}{\hat\pi_0(Z_{i})} \Big( Y_i- \alpha - (X_i-c)\beta  \Big)^2 K\Big( \frac{X_i-c}{h}\Big),\\
\Big(\hat{\alpha}_{1}, \hat{\beta}_{1}\Big) &= \arg \min_{\alpha, \beta} \sum_{\{i:X_i>c\}} \frac{T_i}{\hat\pi_1(Z_{i})} \Big( Y_i- \alpha - (X_i-c)\beta\Big)^2 K\Big( \frac{X_i-c}{h}\Big),
\end{split}
\end{equation}
Thus, we can estimate the WATE ${\tau}_{SDR}^w$ by 
\begin{equation}\label{eqhattau}
\hat{\tau}_{SDR}^w = \hat{\alpha}_{1} - \hat{\alpha}_{0}. 
\end{equation}

\section{Theoretical Results}
\label{sec_asymptotics}

In this section, we study the asymptotic properties of the proposed estimator. We focus on the local linear estimator (\ref{est}) for the estimand $\tau^{w1}_{SRD}$, due to the nice properties of $\tau^{w1}_{SRD}$ as explained in Section 2.2. The analysis of the estimators for the other two estimands $\tau^{w2}_{SRD}$ and $\tau^{w3}_{SRD}$ is similar, but may require different technical conditions. 

Let $\delta$ denote a small positive constant. For notational simplicity, define $\cF^-$ as the class of functions of $x\in(c-\delta, c]$ and $z\in\cZ$ such that for any $f\in\cF^-$, $\frac{\partial^3}{\partial a\partial b\partial c}f(x,z)$ is continuous, where $a,b,c=\{x,z\}$. Here, the derivatives of $f(x,z)$ with respect to $x$ at $x=c$ (say $\frac{\partial^3}{\partial x^3}f(c,z)$) are interpreted as left derivatives. Similarly, we define $\cF^+$ as the class of functions of $x\in[c, c+\delta)$ and $z\in\cZ$ such that for any $f\in\cF^+$, $\frac{\partial^3}{\partial a\partial b\partial c}f(x,z)$ is continuous, where $a,b,c=\{x,z\}$. The derivatives at $x=c$ correspond to the right derivatives. For simplicity, we only consider the case that $\textrm{dim}(Z)=1$. The generalization to multivariate covariates follows from the similar argument.


\begin{assumption}[Smoothness condition]\label{ContinuityofConditionalDensity}
Assume $f_{X, Z(0)}(x,z)\in \cF^-$ and $f_{X, Z(1)}(x,z)\in \cF^+$. Denote $m_t(x,z)= \EE(Y_i(t)|X_i=x, Z_i(t)=z)$ for $t=\{0,1\}$. Assume $m_0(x,z)\in \cF^-$ and $m_1(x,z)\in \cF^+$.
\end{assumption}

Since the proposed method requires to estimate the unknown density functions $f_{X, Z(0)}(x,z)$ and $f_{X, Z(1)}(x,z)$, we assume they are sufficiently smooth. In addition, we also need the smoothness of $m_t(x,z)$ in order to study the local linear estimator (\ref{est}). This assumption implies Assumption \ref{asscont2}, which is required for identification purpose. 



\begin{assumption}\label{SharpRDdesign}
Assume the following conditions hold:
\begin{enumerate}
\item $X\in[x_l, x_u]$ such that $x_l<c<x_u$ and $Z$ also has bounded support.
\item The kernel $K(u)$ is a non-negative, symmetric and bounded function with compact support which satisfies
  \begin{displaymath}
    \int K(z)\,dz=1, ~~~\textrm{and}~~~ K(u) = K(-u).
  \end{displaymath}
The bivariate kernel $K_1(u,v)$ is a product kernel  $K_1(u,v)=K(u)K(v)$, where $K(\cdot)$ satisfies the above conditions. 
\item $f_{X,Z(1)}(x,z)$ is bounded way from 0 by a constant for $x\in[c, c+\delta)$ and $z\in\cZ$ and $f_{X,Z(0)}(x,z)$ is bounded way from 0 by a constant for $x\in(c-\delta, c]$ and $z\in\cZ$.
\item $\sigma_t^2=\EE[(Y(t)-m_{t}(X,Z(t)))^2|X=x,Z(t)=z]<\infty$ for $t\in\{0,1\}$.
\end{enumerate} 
\end{assumption}

Assumption \ref{SharpRDdesign} has four parts. The first and second parts are standard assumptions in kernel density estimation problems. Since our RD estimator applies local linear estimators at the boundary, the uniform kernel or triangular kernel has been shown to have good performance under such scenario \citep{Matias2017}. 
The third part requires $f_{X,Z(1)}(x,z)$ and $f_{X,Z(0)}(x,z)$ to be bounded away from 0 so that the inverse weights in the local linear estimator can be well controlled. This is similar to IPW estimator which requires the propensity score to be bounded away from 0.  The last part assumes that the (homoscedastic) noise has finite variance. 

Denote $\alpha_t=\int \mathbb{E}(Y(t)|X=c, Z(t)=z) f_Z(z)dz$, and recall that $\tau^{w1}_{SRD}=\alpha_1-\alpha_0$. The main theorem in this section shows the rate of convergence of the local linear estimators $\hat{\alpha}_{1}$ and $\hat{\alpha}_{0}$ and their limiting distributions. 
\begin{thm}\label{thm2}
Assume that assumptions \ref{ContinuityofConditionalDensity} and \ref{SharpRDdesign} hold, and we choose $h\asymp \sqrt{h_1}\asymp h_2$. Then
\begin{equation}\label{eqthm2_rate}
|\hat{\alpha}_{t} - \alpha_{t}| = O_p(h^2+(nh^2)^{-1/2}).
\end{equation}
Denote $d_t(x_i, z_i) = m_t(x_i, z_i) - \alpha_t$. Furthermore, if $h^{3}n^{1/2}=o(1)$ holds, we have
\begin{equation*}
\begin{split}
\sqrt{nh^2}(\hat{\alpha}_{1} - \alpha_{1}) & \sim  N\left(0,  C_v\cdot \mathbb{E}_Z\left(\frac{f_Z(z_i)}{f_{X, Z(1)}(c^+, z_i)}d_1(c^+, z_i)^2 \right)  \right),
\end{split}
\end{equation*}
and
\begin{equation*}
\begin{split}
\sqrt{nh^2}(\hat{\alpha}_{0} - \alpha_{0}) & \sim  N\left(0,  C_v\cdot\mathbb{E}_Z\left(\frac{f_Z(z_i)}{f_{X, Z(0)}(c^-, z_i)}d_0(c^-, z_i)^2 \right)  \right),
\end{split}
\end{equation*}
where $\EE_Z$ denotes the expectation under the marginal distribution of $Z$, and 
\[C_v = \frac{\kappa_2^2\kappa_{20}  + \kappa_1^2\kappa_{22} - 2\kappa_1\kappa_2\kappa_{21}}{\left(\frac{1}{2}\kappa_2-\kappa_1^2\right)^2},\]
with
\[\kappa_{q} = \int_{u>0} K(u)u^q du \quad \text{  and  } \quad \kappa_{2q} = \int_{u>0} K(u)^2u^q du, \quad \text{ for  } q = 0, 1, 2. \] 
\end{thm}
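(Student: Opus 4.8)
## Proof Proposal

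The plan is to reduce the analysis of each weighted local linear estimator to a standard boundary-kernel local polynomial argument, with the only new wrinkle being that the weights $\hat\pi_t(Z_i)$ are themselves estimated and must be shown not to contribute to the first-order asymptotics. I will treat $\hat\alpha_1$; the argument for $\hat\alpha_0$ is symmetric. First I would write the closed-form solution of the weighted least squares problem (\ref{est}): $\hat\alpha_1$ equals a ratio of weighted kernel sums, $\hat\alpha_1 = e_1^\top (\mathbf{X}^\top W \mathbf{X})^{-1}\mathbf{X}^\top W Y$, where $W$ has diagonal entries $\hat\pi_1(Z_i)^{-1}T_i h^{-1}K((X_i-c)/h)$ and $\mathbf{X}$ has rows $(1, X_i-c)$. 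The strategy is then a two-stage expansion: (i) replace $\hat\pi_1$ by the truth $\pi_1(z) = f_X(c)/(2w_1(z)) = f_{X,Z(1)}(c,z)/(2 f_Z(z))$ and show the resulting ``oracle'' estimator $\tilde\alpha_1$ has the stated rate and limit; (ii) bound $|\hat\alpha_1-\tilde\alpha_1|$ by controlling the uniform estimation error of the plug-in densities.

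For step (i), I would compute the expectations of the entries of $\mathbf{X}^\top W \mathbf{X}$ and $\mathbf{X}^\top W Y$ using exactly the change-of-variables $u=(x-c)/h$ and Taylor expansion already carried out in (\ref{eqstep1})--(\ref{eqstep2}), now keeping the $u$, $u^2$ moments so the $2\times 2$ matrix appears. The key identities are $\EE[\pi_1(Z)^{-1}T h^{-1}K((X-c)/h)(X-c)^q] \to 2\kappa_q \cdot\EE_Z[f_Z(Z)] = 2\kappa_q$ (using $\int_{u>0}u^q K(u)du = \kappa_q$ and $\pi_1(z)f_{Z(1)|X}(z|c) = f_X(c)/2$ times the weight structure) and $\EE[\pi_1(Z)^{-1}T h^{-1}K((X-c)/h)(X-c)^q Y] \to 2\kappa_q \int m_1(c^+,z)f_Z(z)\,dz$ plus a first-order bias term of size $h$ from the linear Taylor coefficient, which cancels in the local-linear combination $e_1^\top(\cdot)^{-1}$, leaving an $O(h^2)$ bias (this is why $\frac12\kappa_2-\kappa_1^2$ shows up in the denominator of $C_v$). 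A law of large numbers gives convergence of the random sums to these limits, so $\tilde\alpha_1-\alpha_1 = O_p(h^2)$ plus a stochastic term; a CLT applied to $\frac1n\sum (\text{numerator terms})$, with variance $\EE[\pi_1(Z)^{-2}T h^{-1}K^2 (Y-\alpha_1-\ldots)^2(\ldots)]$ computed the same way, yields the $\sqrt{nh^2}$ scaling and the asymptotic variance. Writing $d_1(c^+,z)=m_1(c^+,z)-\alpha_1$, the residual in the numerator is $Y_i-m_1(X_i,Z_i) + d_1(X_i,Z_i)$; the conditional variance of the first piece contributes $\sigma_1^2$-type terms and the second contributes the $d_1(c^+,z_i)^2$ term, and — carefully — the structure of the problem makes the stated variance depend only on $d_1$ (the $\sigma_1^2$ part enters at the same rate; I would double-check whether the homoscedasticity assumption and the specific combination $\kappa_{2q}$ means it has been absorbed, or whether the statement implicitly treats the dominant term). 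The bias condition $h^3 n^{1/2}=o(1)$ ensures $\sqrt{nh^2}\cdot h^2 = h^3\sqrt{n}\cdot\frac{1}{\text{?}}$... more precisely $\sqrt{nh^2}\,h^2 = \sqrt{n}\,h^3 \to 0$, killing the bias in the limit distribution.

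For step (ii), I would invoke the rates for the kernel density estimators (\ref{eqkernelf1})--(\ref{eqkernelfz}): under Assumption \ref{ContinuityofConditionalDensity} and the bandwidth link $h\asymp\sqrt{h_1}\asymp h_2$, one has $\sup_z |\hat f_{X,Z(1)}(c,z)-f_{X,Z(1)}(c,z)| = O_p(h_1^2 + (nh_1^2)^{-1/2})$ and similarly for $\hat f_Z$, hence $\sup_z|\hat\pi_1(z)-\pi_1(z)| = o_p(1)$ with a polynomial rate; since $\pi_1$ is bounded away from $0$ (Assumption \ref{SharpRDdesign}, part 3) the same holds for $\hat\pi_1^{-1}-\pi_1^{-1}$. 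Plugging $\hat\pi_1^{-1} = \pi_1^{-1} + (\hat\pi_1^{-1}-\pi_1^{-1})$ into the weighted sums and using that each sum is $O_p(1)$ after the $h^{-1}K$ normalization, the perturbation contributes at most $\sup_z|\hat\pi_1^{-1}-\pi_1^{-1}|\cdot O_p(1)$; the bandwidth calibration $h_1\asymp h^2$ makes $h_1^2+(nh_1^2)^{-1/2}\asymp h^4 + (nh^4)^{-1/2}$, which is $o((nh^2)^{-1/2})$ under $h^3 n^{1/2}=o(1)$-type conditions, so this term is asymptotically negligible relative to $(nh^2)^{-1/2}$ and does not affect the limiting distribution. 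The main obstacle will be step (ii) done rigorously and uniformly in $z$: one needs the density-estimation error to be negligible not just pointwise but after being integrated against the kernel-weighted empirical measure in $X$, and one must verify that the specific bandwidth relations in the theorem indeed make the product of the two rates small enough — this is where the choice $h\asymp\sqrt{h_1}\asymp h_2$ is really being used, and getting the bookkeeping exactly right (rather than the self-evident structure of the local-linear main term) is the delicate part.
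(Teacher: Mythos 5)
There is a genuine gap, and it is the central idea of the theorem rather than bookkeeping. Your decomposition -- an ``oracle'' WLL estimator with the true weights $\pi_1$ giving the CLT, plus a plug-in error shown to be negligible -- is exactly backwards relative to what Theorem \ref{thm2} asserts. With known weights, the local linear estimator would fluctuate at rate $(nh)^{-1/2}$ and its asymptotic variance would involve the noise variance $\sigma_1^2$; the theorem instead claims the slower rate $(nh^2)^{-1/2}$ and a variance depending only on $d_1(c^+,z)^2$. That mismatch (which you noticed but left as a ``double-check'') is the signal that the estimation error of $\hat\pi_1$ is the \emph{dominant} first-order term, not a negligible perturbation. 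The paper's proof linearizes $\hat f_Z/\hat f_{X,Z(1)}$ around the truth, writes the estimating equation as a two-sample $U$-statistic in $(i,j)$, and applies a H\'ajek/Hoeffding projection (Lemmas \ref{lemkernel}--\ref{lemmahoff}): the projection term generated by estimating $f_{X,Z(1)}(c,z)$, proportional to $(h/h_1)K((c-x_j)/h_1)\,f_Z(z_j)d_1(c^+,z_j)/f_{X,Z(1)}(c^+,z_j)$, has second moment of order $h^2/h_1\asymp 1$ per observation (since $h_1\asymp h^2$), whereas the oracle-score term $\EE(\phi_{ij}\mid i)$ has second moment of order $h$; hence the weight-estimation term dominates, produces the $\sqrt{nh^2}$ normalization, and yields exactly $C_v\,\EE_Z\big(f_Z(z)d_1(c^+,z)^2/f_{X,Z(1)}(c^+,z)\big)$, with the $\sigma^2$ contribution relegated to lower order. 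Your step (i) would therefore converge to a different limit at a different rate, and no amount of tightening step (ii) can recover the stated theorem, because the object you are discarding is the one whose CLT is being stated.

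Your step (ii) also fails on its own terms. Because $\hat f_{X,Z(1)}(c,z)$ is a boundary estimate using only one-sided smoothness, its bias is $O(h_1)$ (see the $2h_1\kappa_1\,\partial_x f_{X,Z(1)}(c^+,z)$ term in Lemma \ref{lemkernel}), not $O(h_1^2)$, and its uniform stochastic error is $O_p(\sqrt{\log n/(nh_1^2)})$; with $h_1\asymp h^2$ the latter is of order $h^{-1}(nh^2)^{-1/2}\sqrt{\log n}$, which is \emph{larger} than $(nh^2)^{-1/2}$, so the bound $\sup_z|\hat\pi_1^{-1}-\pi_1^{-1}|\cdot O_p(1)$ cannot show negligibility -- consistent with the fact that the contribution is genuinely non-negligible. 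A correct argument must replace the sup-norm plug-in bound by an asymptotically linear (second-order $U$-statistic) expansion of the weight-estimation error, which is precisely the route the paper takes; the choice $h\asymp\sqrt{h_1}\asymp h_2$ and the undersmoothing condition $h^3 n^{1/2}=o(1)$ are then used to control the $O_p(h^3)$ degenerate and bias pieces of that expansion, not to make the weights disappear.
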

We note that from (\ref{eqthm2_rate}) the optimal choice of $h$ is of order $O(n^{-1/6})$ and the corresponding convergence rate is $|\hat{\alpha}_{t} - \alpha_{t}|=O_p(n^{-1/3})$ due to the boundary effect. Specifically, the estimated weights $\hat\pi_1(z)$ and $\hat\pi_0(z)$ depend on the density estimators at the boundary. Since we only require $f_{X, Z(0)}(x,z)\in \cF^-$ and $f_{X, Z(1)}(x,z)\in \cF^+$ to be smooth from one side,  the corresponding density estimators have a slower rate. So, the plug-in error becomes the dominant term when establishing the rate of $\hat{\alpha}_{t}$. However, if $Z_i$ are the pre-treatment covariates, i.e, $Z_i(1)=Z_i(0)=Z_i$, we can estimate the density $f_{X,Z}(c,z)$ by 
\begin{equation}\label{eqkernelf00}
\begin{split}
\hat f_{X,Z}(c,z) =  (nh_1^2)^{-1}\sum_{i=1}^n K_1(\frac{c-x_i}{h_1}, \frac{z-z_i}{h_1}).
\end{split}
\end{equation}
The following corollary shows that in this case $\hat{\alpha}_{t}$ has an improved rate. With an optimal choice of the bandwidth parameters, we prove that $|\hat{\alpha}_{t} - \alpha_{t}|=O_p(n^{-2/5})$, that is the boundary effect for estimating $\alpha_t$ is automatically removed without applying any additional bias correction procedures. 

\begin{cor}\label{corrd}
Assume that assumptions \ref{ContinuityofConditionalDensity} and \ref{SharpRDdesign} hold and  $Z_i$ are the pre-treatment covariates. Choosing $h\asymp h_1\asymp h_2$, we have
\begin{equation*}
\begin{split}
|\hat{\alpha}_{t} - \alpha_{t}| = O_p(h^2+(nh)^{-1/2}).
\end{split}
\end{equation*}
If $h^5n=o(1)$, we have
\begin{equation*}
\begin{split}
\sqrt{nh}(\hat{\alpha}_{t} - \alpha_{t}) & \sim  N\left(0,  C_v\cdot \sigma^2\int\frac{f_Z(z)^2}{f_{X, Z}(c, z)}dz   \right).
\end{split}
\end{equation*}

\end{cor}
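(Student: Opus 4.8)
Here is a plan for proving Corollary \ref{corrd}.

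The plan is to rerun the proof of Theorem \ref{thm2} essentially verbatim, isolating the single place where the pre-treatment structure buys an improvement. Since $Z_i(1)=Z_i(0)=Z_i$, we have $f_{X,Z(0)}=f_{X,Z(1)}=f_{X,Z}$, and Assumption \ref{ContinuityofConditionalDensity}, which now places this one density in both $\cF^-$ and $\cF^+$, forces $f_{X,Z}$ to be three-times continuously differentiable in a full two-sided neighborhood of $x=c$. Consequently the two-sided estimator $\hat f_{X,Z}(c,z)$ in (\ref{eqkernelf00}) is an \emph{interior-point} kernel density estimator: its bias is $O(h_1^2)$ rather than the $O(h_1)$ boundary bias incurred by $\hat f_{X,Z(1)}$ and $\hat f_{X,Z(0)}$ in Theorem \ref{thm2}, while its stochastic error remains $O_p((nh_1^2)^{-1/2})$. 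This is the sole structural change; everything else is bookkeeping under the new bandwidth balance $h\asymp h_1\asymp h_2$.

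As in Theorem \ref{thm2}, write $\hat\alpha_1-\alpha_1=(\hat\alpha_1-\hat\alpha_1^\star)+(\hat\alpha_1^\star-\alpha_1)$, where $\hat\alpha_1^\star$ is the oracle weighted local linear estimator obtained by replacing $\hat\pi_1$ with the true weight $\pi_1(z)=f_{X,Z}(c,z)/(2f_Z(z))$ in (\ref{est}), and treat $\hat\alpha_0$ symmetrically on $\{X_i<c\}$. For the oracle term, a standard boundary local-linear expansion, using Assumption \ref{SharpRDdesign}(1)--(3) and the smoothness of $m_1\in\cF^+$, gives $\hat\alpha_1^\star-\alpha_1=O(h^2)+\frac1{nh}\sum_{X_i>c}W_i^\star\big(\varepsilon_i+d_1(c,Z_i)\big)+o_p((nh)^{-1/2})$, where $\varepsilon_i=Y_i-m_1(X_i,Z_i)$, $d_1(c,z)=m_1(c,z)-\alpha_1$, and $W_i^\star$ is the equivalent-kernel weight that produces the constant $C_v$. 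Because $\alpha_1$ is precisely the $f_Z$-weighted average of $m_1(c,\cdot)$ and the oracle weight satisfies $f_{X,Z}(c,z)/\pi_1(z)=2f_Z(z)$, the $d_1$ term is centered and contributes only to the variance of $\hat\alpha_1^\star$, which works out to $\tfrac{C_v}{nh}\int\big(\sigma_1^2+d_1(c,z)^2\big)\tfrac{f_Z(z)^2}{f_{X,Z}(c,z)}\,dz+o((nh)^{-1})$; note the two contributions $\sigma_1^2$ (regression noise) and $d_1(c,z)^2$ ($Z$-heterogeneity of the conditional mean).

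The crux is the plug-in term $\hat\alpha_1-\hat\alpha_1^\star$. Linearizing $1/\hat\pi_1=2\hat f_Z/\hat f_{X,Z}(c,\cdot)$ around $(f_Z,f_{X,Z}(c,\cdot))$ and substituting the kernel forms (\ref{eqkernelfz}) and (\ref{eqkernelf00}) expresses this term as a conditionally degenerate second-order $U$-statistic in the sample. Its deterministic part inherits the $O(h_1^2)=O(h^2)$ interior bias of $\hat f_{X,Z}$ and is therefore $o((nh)^{-1/2})$ once $h^5n=o(1)$; under the balance $h\asymp h_1\asymp h_2$ its stochastic part is exactly of order $(nh)^{-1/2}$, the \emph{same} order as the oracle term rather than dominating it as happens in Theorem \ref{thm2} when $h\asymp\sqrt{h_1}$. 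I expect the non-degenerate component of this $U$-statistic to have the effect of residualizing $Y_i$ against $m_1(c,\cdot)$ among observations with $X_i\approx c^+$ --- the mechanism being analogous to the efficiency gain from using nonparametrically estimated (rather than known) propensity scores: the nonparametric estimation of $\pi_1$ projects out the $Z$-heterogeneity $d_1(c,Z_i)$, so in the combined expansion the $d_1(c,z)^2$ contribution to the variance cancels and only $\varepsilon_i$ survives. After this cancellation $\hat\alpha_1-\alpha_1=O(h^2)+\frac1{nh}\sum_{X_i>c}\widetilde W_i\varepsilon_i+o_p((nh)^{-1/2})$, a genuine one-dimensional boundary local-linear statistic, whence the rate $O_p(h^2+(nh)^{-1/2})$ and the optimal $h\asymp n^{-1/5}$ giving $O_p(n^{-2/5})$; a triangular-array CLT for the leading linear term together with Slutsky, with $h^5n=o(1)$ killing the $O(h^2)$ bias, then delivers the stated limit with variance $C_v\,\sigma^2\int f_Z(z)^2/f_{X,Z}(c,z)\,dz$ (and likewise for $\hat\alpha_0$).

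I expect the $U$-statistic bookkeeping for the plug-in term to be the main obstacle. One must (i) show that all degenerate components are $o_p((nh)^{-1/2})$; (ii) identify the non-degenerate (projection) component precisely enough to see the \emph{exact} cancellation of the $d_t(c,z)^2$ contribution against the oracle term --- a partial cancellation would leave a residual $d_t^2$ in the asymptotic variance and contradict the stated formula; and (iii) verify that the diagonal ($i=j$) contributions of the density estimators, as well as the randomness in the normalizing denominators of both the local-linear and the density estimators, are of smaller order under $h\asymp h_1\asymp h_2$. Step (ii) is where the argument genuinely departs from Theorem \ref{thm2}, and where both the interior-rate density estimator and the bandwidth balance $h\asymp h_1$ are essential.
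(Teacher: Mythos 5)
Your plan is correct and follows essentially the same route as the paper: the paper also reruns the Theorem \ref{thm2} argument with the two-sided estimator $\hat f_{X,Z}(c,z)$, whose interior bias $O(h_1^2)$ replaces the boundary rate, linearizes $1/\hat\pi_1$, and treats the resulting second-order $U$-statistic via its Hoeffding projection under $h\asymp h_1\asymp h_2$. In particular, the exact cancellation you anticipated in step (ii) is precisely how the paper's variance computation works (Lemma \ref{phiijrd} and the three terms $\mathbb{E}(\mathbb{E}(\phi_{ij}|i)^2)$, $\mathbb{E}(\mathbb{E}(\phi_{ji}|i)^2)$, $2\mathbb{E}(\mathbb{E}(\phi_{ij}|i)\mathbb{E}(\phi_{ji}|i))$, whose $d_1(c^+,z)^2$ contributions cancel when $h_1\asymp h$), leaving $C_v\,\sigma^2\int f_Z^2/f_{X,Z}(c,z)\,dz$ as stated.
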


\section{Simulation and Empirical Examples}
\label{sec_simulation}

\subsection{Simulation Study}

In this first setting, consider the following data generating process:
\[y_i(1) = 2+x_i+\beta z_i + \epsilon_i,\]
\[y_i(0) = 1+x_i+\beta z_i + \epsilon_i,\]
where $x_i$, $z_i$, and $\epsilon_i$ are generated independently from $N(0,1)$ distribution. The treatment $T_i$ is assigned at the cutoff 0: $T_i = \bold{1}(x_i>0)$. In this case, there is no discontinuity of the conditional distribution of $z_i$ given $x_i=0$. Thus, our estimaind $\tau^{w1}_{SRD}$ equals the standard RD estimand $\tau_{SRD}$ (both are equal to 1). We vary $\beta$ from 0 to 5 and compare weighted local linear (WLL) estimator with the standard RD estimator (\cite{Imbens2008}) in terms of bias, variance, root-mean-squared error (MSE), coverage probability of 95\% confidence intervals (Coverage) and its length (CI length).  When implementing both methods, we set the bandwidth parameter for standard RD estimator using cross-validation and then use the same bandwidth for WLL.
The results based on 500 simulations are shown in Table \ref{tab1}. When $\beta=0$, there is no covariates involved in the outcome function. Standard RD estimator performs neck to neck with our estimator. When $\beta\neq 0$, the standard RD estimator performs slightly better in terms of bias, however, our estimator consistently has smaller variance and MSE. 

\begin{table}[H]
\singlespacing
	\begin{center}
	\scalebox{0.8}{
		\begin{tabular}{llcccccccccc}
			\toprule 

&  &  \multicolumn{2}{c}{bias} &  \multicolumn{2}{c}{variance}  & \multicolumn{2}{c}{MSE}   & \multicolumn{2}{c}{Coverage}   & \multicolumn{2}{c}{CI length}   \\
 \cmidrule(r){3-4} \cmidrule(r){5-6} \cmidrule(r){7-8} \cmidrule(r){9-10}  \cmidrule(r){11-12}
       n & $\beta$       &    RD & WLL   & RD & WLL  & RD & WLL & RD & WLL  & RD & WLL \\
\midrule
\\
\multirow{5}{*}{500} & 0   & -0.0106 & \textbf{-0.0102} & 0.3731 & \textbf{0.3767} & 0.3733 & \textbf{0.3768}& 0.9650 & 0.9750 & 1.6666 & 1.6853\\
                     & 1   & 0.0236 & \textbf{0.0226} & 0.6090 & \textbf{0.5423} & 0.6094, & \textbf{0.5427} & 0.9250 & 0.9400 & 2.2335 & 2.1753\\
                     & 2   & -0.0583 & \textbf{-0.0548} & 0.9004& \textbf{0.7886} & 0.9023 & \textbf{0.7905}& 0.9600 & 0.9600 & 3.6103 & 3.2369\\
                     & 5   & -0.0522  & \textbf{-0.0423} & 2.1757 & \textbf{1.8264} & 2.1763 & \textbf{1.8268}& 0.8800 & 0.9000 & 7.1005 & 6.1439\\
                     \\
                     \hline
                     \\
\multirow{5}{*}{1000} & 0   & \textbf{0.0021} & 0.0034 & 0.2853 & \textbf{0.2841} & 0.2854 & \textbf{0.2841}& 0.9900 & 0.9900 & 1.3447 & 1.4083\\
                     & 1    & \textbf{-0.0088} & -0.0118 & 0.4379 & \textbf{0.4387} & 0.4740& \textbf{0.4388} &0.9200 & 0.9450 & 1.6833 & 1.6661\\
                     & 2   & \textbf{0.0380} & 0.0431 & 0.6046 & \textbf{0.5118} & 0.6058 & \textbf{0.5136} &0.9400 & 0.9550 & 2.2516 & 2.2635\\
                     & 5   & \textbf{0.1106} & 0.1129 & 1.4676 & \textbf{1.2770} & 1.4718 & \textbf{1.2820} &0.9600 & 0.9850 & 6.3293 & 6.9139\\
                     \\
                     \hline
                     \\
\multirow{5}{*}{2000} & 0   & \textbf{-0.0107} & -0.0111 & 0.2098 & \textbf{0.2096} & 0.2101 & \textbf{0.2099} & 0.9300 & 0.9200 & 0.7400 & 0.7229\\
                     & 1   & -0.0320 & \textbf{-0.0225} & 0.3151 & \textbf{0.2920} & 0.3167 & \textbf{0.2929} &0.9850 & 0.9850 & 1.3572 & 1.3237\\
                     & 2   & -0.0177 & \textbf{-0.0151} & 0.5234 & \textbf{0.4595} & 0.5237 & \textbf{0.4598} &0.9500 & 0.9650 & 2.0911 & 1.9680\\
                     & 5   & \textbf{0.1331} & 0.1360 & 1.1707 & \textbf{1.0019} & 1.1783 & \textbf{1.0111} &0.9150 & 0.9200 & 4.2082 & 3.7723\\                     
                     \\
                     \hline
                     \\
\multirow{5}{*}{5000} & 0   & -0.0149 & \textbf{-0.0136} & \textbf{0.1606} & 0.1616 & \textbf{0.1613} & 0.1622 &0.9200 & 0.9250 & 0.5699 & 0.5725\\
                     & 1   & \textbf{0.0034} & 0.0049 & 0.2451 & \textbf{0.2275} & 0.2451 & \textbf{0.2275} &0.9350 & 0.9400 & 0.9012 & 0.8793\\
                     & 2   & \textbf{-0.0130} & -0.0137 & 0.3301 & \textbf{0.2940} & 0.3304 & \textbf{0.2943} &0.9450 & 0.9600 & 1.3654 & 1.2931\\
                     & 5   & \textbf{0.0053} & 0.0139 & 0.8355 & \textbf{0.7016} & 0.8355 & \textbf{0.7018} &0.9250 & 0.9750 & 3.1578 & 3.1535\\                     

 \bottomrule
		\end{tabular}}
	\end{center}
\vspace{-.2in}
\caption{Comparison of the standard RD estimator and the proposed weighted local linear estimator (WLL) in the first setting.} \label{tab1}
\end{table}

Figure \ref{bandwidth} compares the MSE of the standard RD estimator with our WLL estimator across different bandwidth choices. The figure is consistent with corollary \ref{corrd} as our estimator is asymptotically more efficient through including additional covariates into the estimation. Moreover, the advantage of WLL estimator is greater when perform under-smoothing and the difference of the two estimators becomes smaller as the bandwidth increases.

\begin{figure}[H]
\begin{center}
\subfigure[][]{
\label{bandwidth200}
    \includegraphics[height=2.2in]{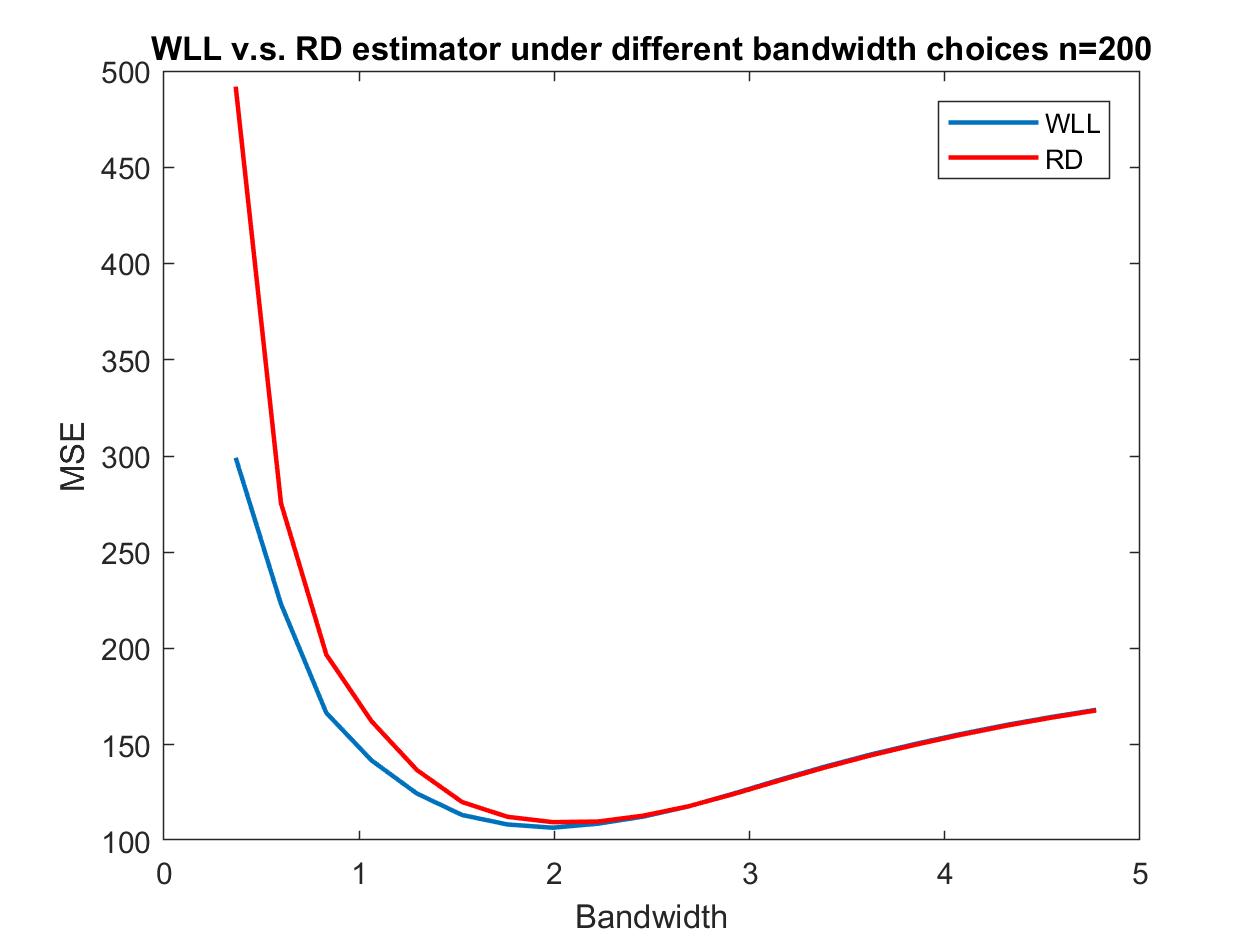}}
    \subfigure[][]{
    \hspace{8pt}
    \label{bandwidth500}
    \includegraphics[height=2.2in]{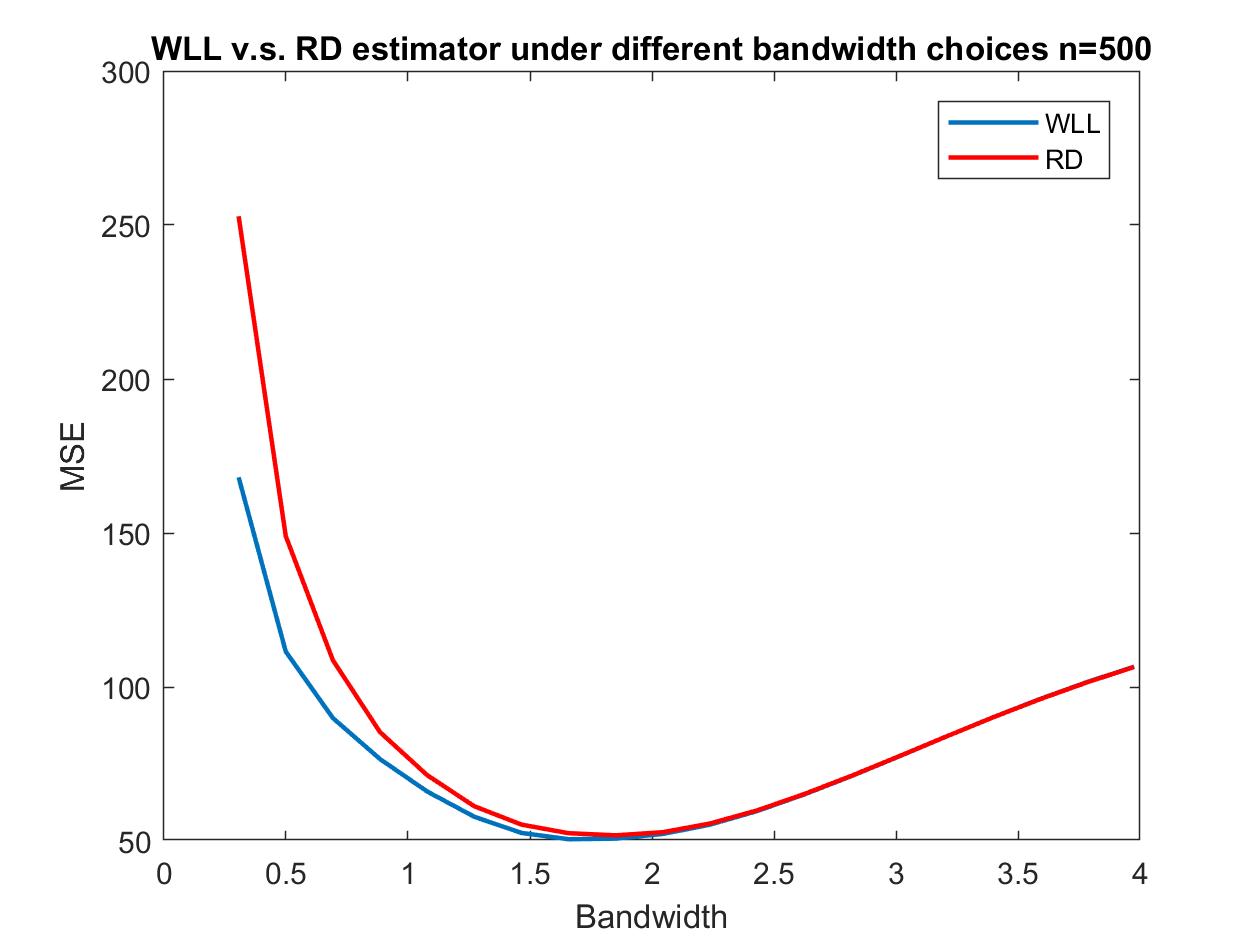}
    }\\

\subfigure[][]{  
\label{bandwidth1000}  
    \includegraphics[height=2.2in]{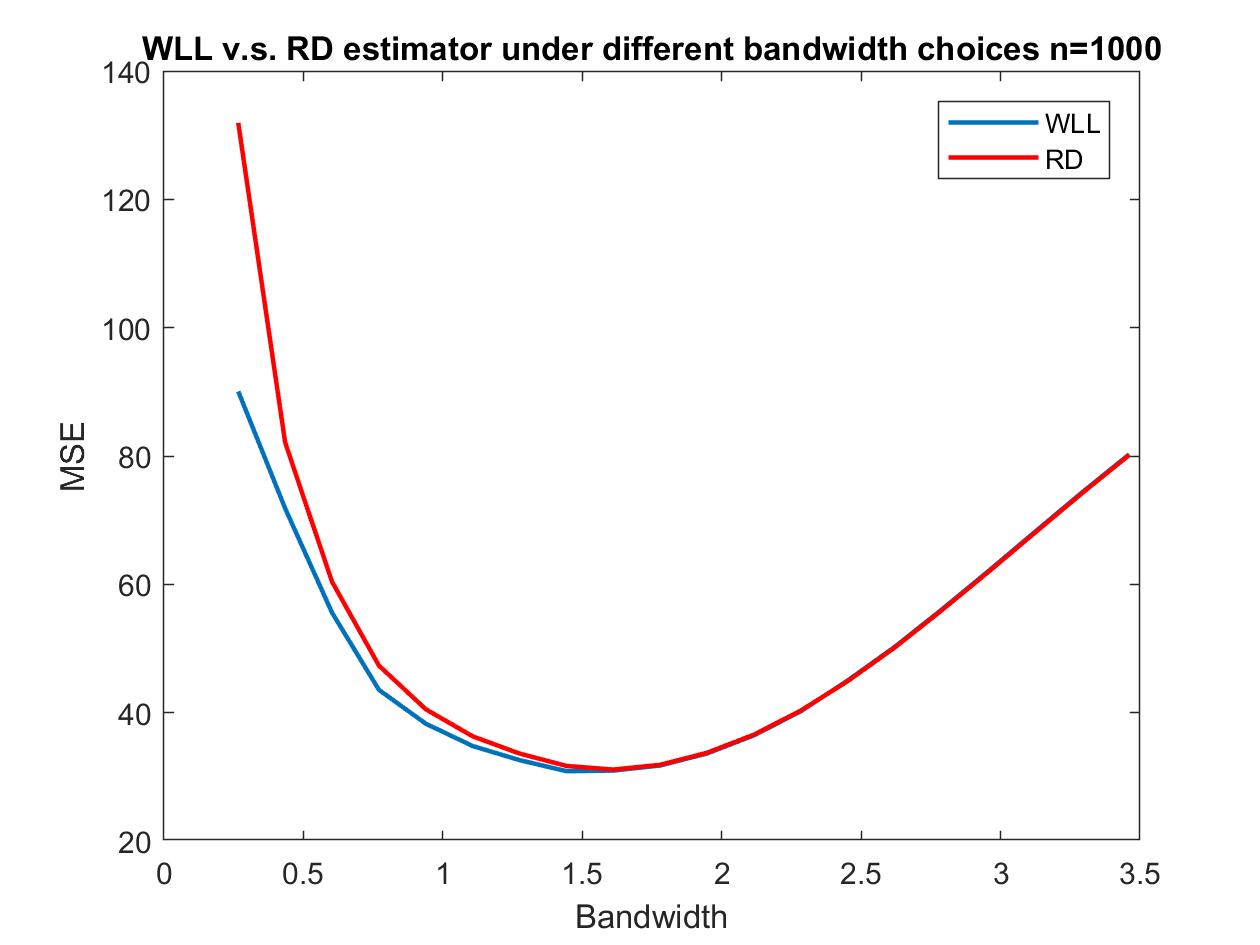}}
    \hspace{8pt}
    \subfigure[][]{
    \label{bandwidth2000}
    \includegraphics[height=2.2in]{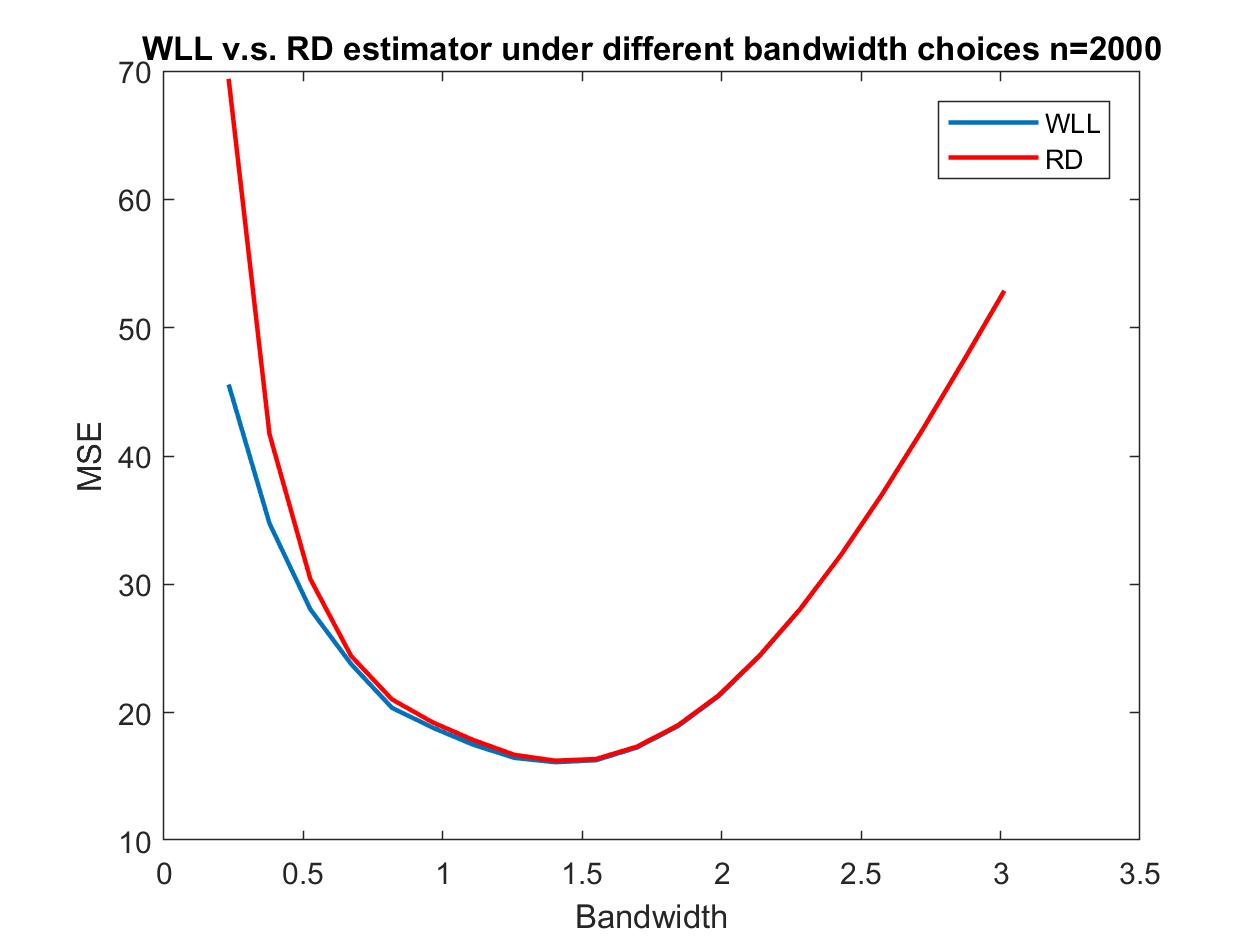}
}

\end{center}
\caption{Comparing MSE of WLL and RD estimator under the same bandwidth choices }
\footnotesize{\singlespace\justify  
}\label{bandwidth}
\end{figure}


In the second setting, we consider the following data generating process:
\[y_i(1) = 3+x_i+z_i + \epsilon_{1i},\]
\[y_i(0) = 1+x_i+z_i + \epsilon_{0i},\]
where $x_i$ and $\epsilon_i$ are generated independently from $N(0,1)$ distribution, however, $z_i$ is generated from another independent $N(0,1)$ process with a discontinuity at $X>0$, i.e. $z_i = \gamma \cdot \bold{1}(x_i>0)+z_i^*$, where $z_i^* \sim N(0,1)$. The treatment $T_i$ is assigned at the cutoff 0: $T_i = \bold{1}(x_i>0)$. When $\gamma=0$, again there is no discontinuity of the conditional distribution of $z_i$ given $x_i=0$. Both our estimaind $\tau^{w1}_{SRD}$ and the standard RD estimand $\tau_{SRD}$ are equal to 2.  However, as $\gamma$ differs from 0, the conditional distribution of $z_i$ given $x_i$ is discontinuous at $x_i=0$.  Our estimand $\tau^{w1}_{SRD}$ is still equal to 2, which is the direct causal effect of interest. If we adopt the standard RD framework and ignore the discontinuity of the conditional distribution of $z_i$ given $x_i$, we would expect that the standard RD estimator is biased for estimating the direct causal effect of interest (which is 2 in this example). In the data generating process, we vary $\gamma$ from 0 to 1 and compare our estimator with the standard RD estimator. The results are shown in Table \ref{tab2}. The standard RD estimator has large bias and very poor coverage probability when $\gamma$ is close to 1, which agrees with our expectation. In contrast, the proposed estimator has relatively small MSE and accurate coverage probabilities across different choices of the sample size $n$ and the parameter $\gamma$. In summary, our simulation studies confirm that one should apply the proposed framework to the RD study if there exists some potential discontinuity of the conditional distribution of the covariates given the running variable.

\begin{table}[H]
\singlespacing
	\begin{center}
	\scalebox{0.8}{
		\begin{tabular}{llcccccccccc}
			\toprule 

&  &  \multicolumn{2}{c}{bias} &  \multicolumn{2}{c}{variance}  & \multicolumn{2}{c}{MSE}   & \multicolumn{2}{c}{Coverage}   & \multicolumn{2}{c}{CI length}  \\
 \cmidrule(r){3-4} \cmidrule(r){5-6} \cmidrule(r){7-8} \cmidrule(r){9-10}  \cmidrule(r){11-12} 
       n & $\gamma$       &    RD & WLL   & RD & WLL  & RD & WLL & RD & WLL & RD & WLL\\
\midrule
\\
\multirow{5}{*}{500}     & 0.2 & 0.1858 & \textbf{0.0762} & 0.6181 & \textbf{0.4318} & 0.4166 & \textbf{0.1922} & 0.9350 & 0.9550 & 2.4229 & 1.6925\\
 & 0.4 & 0.4457 & \textbf{0.1150} & 0.5519 & \textbf{0.4327} & 0.5032 & \textbf{0.2005} & 0.8700 & 0.9450 & 2.1634 & 1.6963\\
 & 0.6 & 0.5721 & \textbf{0.1817} & 0.5898 & \textbf{0.4913} & 0.6751 & \textbf{0.2744} & 0.8500 & 0.9400 & 2.3119 & 1.9260\\
 & 0.8 & 0.7620 & \textbf{0.2445} & 0.5686 & \textbf{0.4698} & 0.9040 & \textbf{0.2805} & 0.7100 & 0.9250 & 2.2290 & 1.8416\\
 & 1 & 0.9753 & \textbf{0.3628} & 0.5346 & \textbf{0.4404} & 1.2369 & \textbf{0.3256} & 0.5750 & 0.8800 & 2.0956 & 1.7264\\
                     \\
                     \hline
                     \\
\multirow{5}{*}{1000}   & 0.2 & 0.1851 & \textbf{0.0869} & \textbf{0.4124} & 0.4897 & \textbf{0.2044} & 0.2473 & 0.9200 & 0.9650 & 1.6168 & 1.9196\\
 & 0.4 & 0.4082 & \textbf{0.1271} & 0.4101 & \textbf{0.3360} & 0.3349 & \textbf{0.1291} & 0.8450 & 0.9350 & 1.6077 & 1.3171\\
 & 0.6 & 0.5534 & \textbf{0.1795} & 0.4034 & \textbf{0.3428} & 0.4690 & \textbf{0.1497} & 0.7200 & 0.9200 & 1.5814 & 1.3437\\
 & 0.8 & 0.8240 & \textbf{0.2458} & 0.4625 & \textbf{0.3859} & 0.8930 & \textbf{0.2093} & 0.5750 & 0.900 & 1.8131 & 1.5125\\
 & 1 & 1.0130 & \textbf{0.3193} & \textbf{0.4445} & 0.5142 & 1.2237 & \textbf{0.3664} & 0.3950 & 0.9400 & 1.7423 & 2.0157\\
                    \\
                     \hline
                     \\
\multirow{5}{*}{2000} & 0.2 & 0.2052 & \textbf{0.0592} & 0.3162 & \textbf{0.2705} & 0.1421 & \textbf{0.0767} & 0.9050 & 0.9700 & 1.2393 & 1.0602\\
 & 0.4 & 0.3755 & \textbf{0.0887} & 0.3072 & \textbf{0.2500} & 0.2354 & \textbf{0.0704} & 0.7550 & 0.9400 & 1.2042 & 0.9802\\
 & 0.6 & 0.5787 & \textbf{0.1697} & 0.3196 & \textbf{0.2568} & 0.4370 & \textbf{0.0947} & 0.5500 & 0.9000 & 1.2529 & 1.0067\\
 & 0.8 & 0.8448 & \textbf{0.2208} & 0.3173 & \textbf{0.2979} & 0.8143 & \textbf{0.1375} & 0.2450 & 0.9100 & 1.2438 & 1.1676\\
 & 1 & 1.0107 & \textbf{0.2234} & \textbf{0.2876} & 0.3556 & 1.1043 & \textbf{0.1763} & 0.0450 & 0.9500 & 1.1272 & 1.3938\\
                    \\
                     \hline
                     \\
\multirow{5}{*}{5000} & 0.2 & 0.2110 & \textbf{0.0418} & 0.2249 & \textbf{0.1748} & 0.0951 & \textbf{0.0323} & 0.8300 & 0.9450 & 0.8817 & 0.6852\\
 & 0.4 & 0.4138 & \textbf{0.0650} & 0.2397 & \textbf{0.1947} & 0.2287 & \textbf{0.04212} & 0.6200 & 0.9450 & 0.9396 & 0.7631\\
 & 0.6 & 0.6099 & \textbf{0.1301} & 0.2319 & \textbf{0.2230} & 0.4257 & \textbf{0.06666} & 0.3000 & 0.9350 & 0.9091 & 0.8741\\
 & 0.8 & 0.8327 & \textbf{0.1826} & 0.2305 & \textbf{0.2741} & 0.7465 & \textbf{0.1085} & 0.0500 & 0.9150 & 0.9035 & 1.0744\\
 & 1 & 1.0259 & \textbf{0.2331} & 0.2250 & \textbf{0.2130} & 1.1032 & \textbf{0.0997} & 0.0050 & 0.9750 & 0.8821 & 0.8350\\

 \bottomrule
		\end{tabular}}
	\end{center}
\caption{Comparison of the standard RD estimator and the proposed weighted local linear estimator (WLL) in the second setting.} \label{tab2}
\end{table}

\subsection{Incumbency Advantage}
We further apply our method to study the ``incumbency advantage" in the U.S. House elections as in \cite{Lee2008}. The ``incumbency advantage" states that current incumbent party in a district are more likely to win the next. The dataset contains 6560 observations on elections to the United States House of Representatives (1946-1998). We evaluate the probability a Democrat both running in and winning election $t+ 1$ as a function of the Democratic vote share margin of victory in election $t$. The other covariates we are considering includes the Democratic vote share at time $t-1$, Democrat winning at $t-1$, Democrat's political experience, opponent's political experience, Democrat's electoral experience, opponent's electoral experience. This is the same setting as studied in \cite{Lee2008} where he uses local 4th order polynomial RD estimates and includes these additional covariates as robustness check. We apply our method in the same five settings and results are reported in Table \ref{tab3} below:

\begin{table}[H]
\begin{center}
\singlespacing
\scalebox{0.8}{
		\begin{tabular}{lccccccccccc}
			\toprule 		
  Vote share $t+1$  &&  \multicolumn{2}{c}{(1)} &  \multicolumn{2}{c}{(2)}  & \multicolumn{2}{c}{(3)}  &     \multicolumn{2}{c}{(4)} &  \multicolumn{2}{c}{(5)}   \\
\\
 && Polyn & WLL & Polyn & WLL & Polyn & WLL & Polyn & WLL & Polyn & WLL \\ 
 \cmidrule(r){3-4}\cmidrule(r){5-6}\cmidrule(r){7-8} \cmidrule(r){9-10}  \cmidrule(r){11-12}
Victory, election $t$ && 0.077   & 0.076    &  0.078  & 0.076   &  0.077  & 0.076   & 0.077   & 0.076   & 0.078   & 0.076\\
                      && (0.011) & (0.010) & (0.011) & (0.011) & (0.011) & (0.010) & (0.011) & (0.010) & (0.011) & (0.010)\\
                      \\
Dem. vote share   && -   & -    &  Y  & Y   &  -  & -   & -   & -   & Y   & Y  \\
$t-1$\\
\\
Dem. win, $t-1$ && -   & -    &  Y  & Y   &  -  & -   & -   & -   & Y   & Y  \\
\\
Dem. political && -   & -    &  -  & -   &  Y  & Y   & -   & -   & Y   & Y  \\
experience\\
\\
Opp. political && -   & -    &  -  & -   &  Y  & Y   & -   & -   & Y   & Y  \\
experience\\
\\
Dem. electoral && -   & -    &  -  & -   &  -  & -   & Y   & Y   & Y   & Y  \\
experience\\
\\
Opp. electoral && -   & -    &  -  & -   &  -  & -   & Y   & Y   & Y   & Y  \\
experience \\
 \bottomrule
		\end{tabular}}
\end{center}
\caption{Effect of winning an election on subsequent party electoral success. Polyn method is replicated as in \cite{Lee2008} using 4th order polynomial. WLL represents the results using our method. $Y$ indicates a covariates is been included in the calculation. Standard error is provided in the bracket using bootstrap for both method.} \label{tab3}
\end{table}

As shown in Table \ref{tab3}, there is no significant change on the estimated effect when using our method. The main reason is that the conditional densities of covariates are truly continuous at the cutoff in this data set. We expect that both standard RD estimator and our method are valid and therefore the results are very similar.  

\subsection{Generalized Second Price Auction (GSP)}

Next we apply our method to study the generalized second price auction (GSP) problem. GSP is an auction mechanism for multiple items and it has been used widely for the assignment of advertisement positions by internet search engine like google and bing. Let $n$ be the number of bidders, and let $b^1 \geq b^2 \geq \cdots\geq b^n$ be the bids from high to low. Denote by $v_{(1)}, v_{(2)}, \cdots, v_{(n)}$ the bidders' valuation associated with the rank of bids and $r^k$ the click through rate for the $k$th position. The $k$th bidder's payoff in a GSP is given as $(v_{(k)} - b^{k+1})r^k$.     

An important metrics is the click through rate $r^k$ for the $k$th position. Bidders are interested in the potential growth in their search traffic by winning the auction. And furthermore, in a Vickrey-Clarke-Groves (VCG) auction, $r^k$ will determine the total cost for placing each bidder in the sponsored advertisement region. In real world, GSP is usually implemented through a reservation score. A search score is formed for every bidder based on their bid and other quality measures. When the search score is bigger than a pre-set reservation score, the bidder's link will be displayed in the sponsored area. Otherwise, they will be displayed after all the sponsored advertisements. The reservation score cut-off creates a natural regression discontinuity setting to evaluate $r^k$.   

We study the Microsoft Bing search data from Oct 2nd to Oct 22nd in 2015 and estimate the effect of advertisement positions. We focus on a set of searches with first advertisement positions displayed but without third advertisement position displayed.\footnote{Microsoft Bing allows a maximum of 4 advertisement to be displayed at the time of study.} This allows us to analyze the effect of second advertisement positions by comparing click-abilities for bidders near the search score cutoff. Figure \ref{scorevsclick} plots the connection between search score and click-ability for the second position in the sponsored advertisement area. Once the score passed the cut-off at 15.17, the customers' links will be placed at the sponsored advertisement area and a significant increase in the click traffic can be observed.     
\begin{figure}
\begin{center}
\includegraphics[scale=0.30]{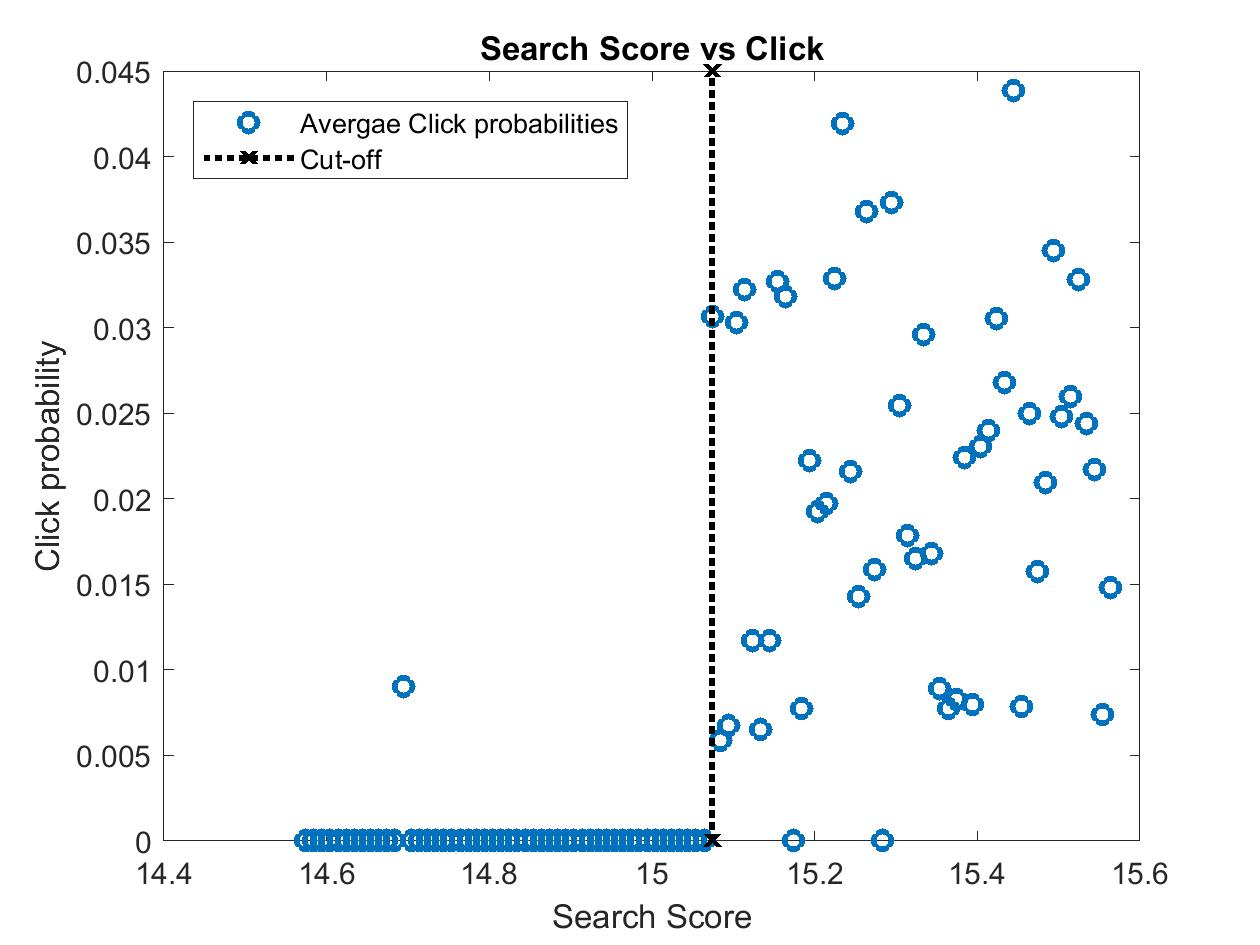}
\end{center}
\caption{Plot of search score and click-ability. If the search score is above 15.17, the bidder's advertisement will be displayed at the second position. Otherwise, the advertisement will be displayed based on its regular search orders.      }\label{scorevsclick}
\end{figure}
Consider the bid as a covariate. Figure \ref{scorevsbid} plots the mean bidding price before and after the search score cut-off. The mean bids before the cut-off is higher than the mean bids after the cut-off, implying the discontinuity of the conditional distribution of the covariate. See also Figure \ref{density} for the conditional density of the covariate before and after the cut-off. Although a local envy free equilibrium exists when bidders are all bidding their true valuation \citep{Edelman2007}, information asymmetry or bidder inertia may still lead to bidder selections. For example, active bidders may have the incentive to bid more aggressively to take advantage of the bidders with high inertia.     
\begin{figure}
\begin{center}
\includegraphics[scale=0.30]{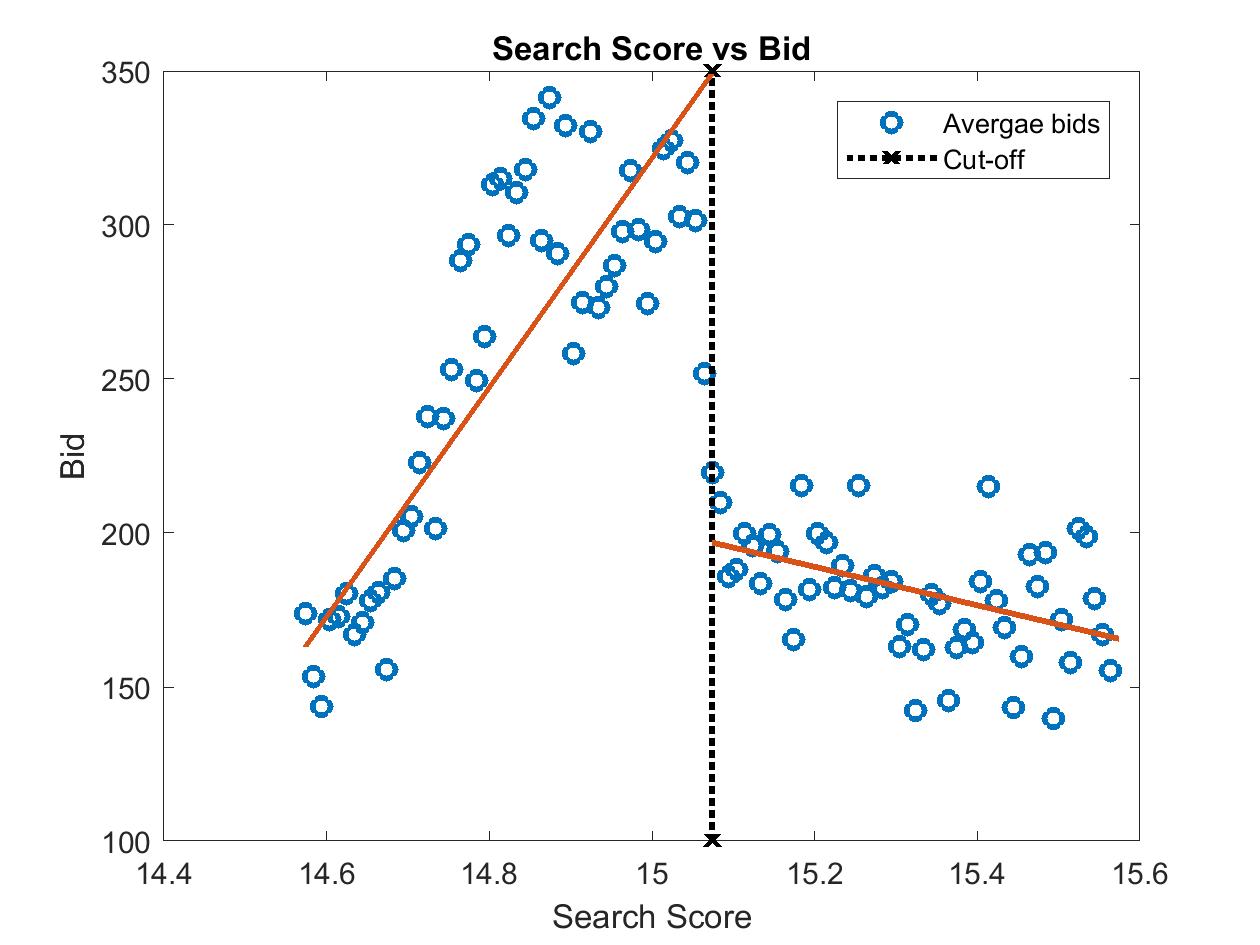}
\end{center}
\caption{Search Score v.s. Bids}\label{scorevsbid}
\end{figure}

\begin{figure}
\begin{center}
\includegraphics[scale=0.30]{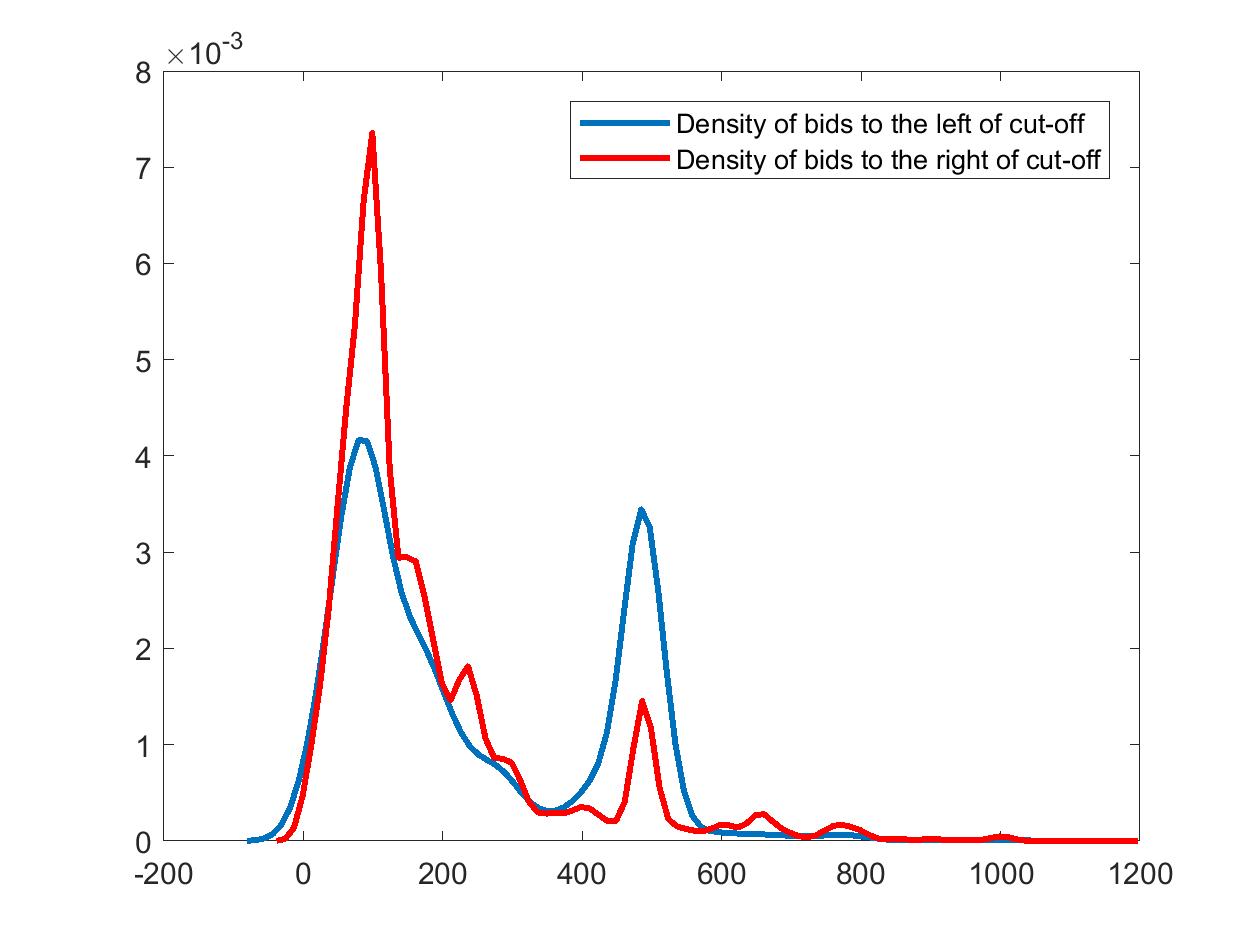}
\end{center}
\caption{Density plot for bids before and after the cut-off}\label{density}
\end{figure}
Table \ref{tab4} presents the results of our estimator and a polynomial RD estimator. The classic RD estimator may not be valid in this case due to the discontinuity in the covariates. It estimates that placing the advertisement on the second position can increase the click-ability by 1.91\% and it is statistically significant. On the other hand, the proposed estimator delivers only 1.20\%, 57\% less than the RD estimator and it is not statistically significant at 5\% level. The difference is mainly because low quality bidders with high willingness to pay have the incentive to bid higher to move their search scores pass the threshold. But when we match bidders with similar bids before and after the cut-off, the effect goes away.  
\begin{table}[H]
\begin{center}
\singlespacing
\scalebox{0.8}{
		\begin{tabular}{lcc}
		\toprule
  & RD & WLL\\
  Estimates & 1.91\%$^{***}$ & 1.20\%\\
  Standard Error & 0.0033 & 0.0090\\
  \bottomrule
		\end{tabular}}
\end{center}
\vspace{-.2in}
\caption{Estimated effect of adversing at second position using the standard RD estimator and the proposed method. The bandwidth parameters are selected using cross-validation and standard errors are obtained via bootstrap. $^{***}$ represents significance at $p<0.05$} \label{tab4}
\end{table}

\section{Extension to Fuzzy RD Design}
\label{sec_Fuzzy}

Fuzzy RD design refers to the case of a RD design when the treatment compliance is imperfect. Although $T_i$ is no longer a deterministic function of $X_i$, there is still a discontinuity in $\mathbb{P}(T_i|X_i, Z_i)$. \cite{Hahn2001} showed the equivalence between the fuzzy RD design and local average treatment effect (LATE). Under independence and monotonicity assumption, the cutoff can be used as an instrument for the treatment status and the LATE  can be interpreted as an intention-to-treat effect on the compliers: subjects take treatment as their assigned one. 

Introducing self-selection under the Fuzzy RD design is the same as adding an additional endogenous variable. As a result, additional instrumental variable is required for identification. We consider a scenario when the self-selection is based on the cut-off instead of the treatment assignment mechanism. This setting allows us to deal with unobservables that affect the covariates. Let $\tau$ denote the type of individual: complier ($co$), always-taker ($at$) and never taker ($nt$). A post-selection may exist and we can write 
\[Z_i = \bold{1}(X_i>c)Z_i(1) + \bold{1}(X_i\leq c)Z_i(0).\]
Notice that in sharp RD design, the self-selection based on the cut-off is the same as the selection on treatment assignment mechanism since the assignment mechanism is a deterministic function of $X_i$. We study the selection based on the cut-off in fuzzy RD design as it is a more practical design. For example, students may notice that with SAT score higher than a threshold may increase their chance to receive scholarship, but the true assignment mechanism (gender, family income, etc.) may not be disclosed to them.      

To consider self-selection under this setting, we first need to modify the independence and monotonicity assumption to incorporate additional covariates.    

\begin{assumption}[Fuzzy RD design].\label{fuzzyrd}
\begin{enumerate}
\item Independence: $\mathbb{P}(\tau_\epsilon  | X_i = c^+, Z_i(1) = z) = \mathbb{P}(\tau_\epsilon | X_i = c^-, Z_i(0) = z)$ for $\tau_\epsilon \in \{co, at, nt\}$.
\item Monotonicity: $\mathbb{P}(\tau_\epsilon = co) + \mathbb{P}(\tau = at) + \mathbb{P}(\tau_\epsilon = nt) = 1$
\item Exclusion: 
\begin{align*}
\mathbb{E}(Y_i(1)|Z_i(1)=z, \tau_\epsilon = at) = \mathbb{E}(Y_i(1)|Z_i(0)=z, \tau_\epsilon = at)\\
\mathbb{E}(Y_i(1)|Z_i(1)=z, \tau_\epsilon = nt) = \mathbb{E}(Y_i(1)|Z_i(0)=z, \tau_\epsilon = nt)
\end{align*}
\end{enumerate}
\end{assumption}

Due to self-selection, $Z_i$ is discontinuous at $X_i = c$, thus the independence assumption needs to be modified to insure individual types are remain constant under self-selection. Monotonicity assumption is essential to the identification by restricting the direction of selection. We may only have compliers, always-takers, never-takers in the population. As a result, we can view the treatment as an instrument as in \cite{Imbens1994}.

By allowing types of individuals to depend on $Z_i$, our framework allows selection on treatment compliance to vary across $Z_i$. For example, individuals with family incomes less than a threshold are more likely to be always takers for the scholarship. An extreme scenario when $Z_i$ is binary would be that always takers all have $Z_i=1$. In a similar spirit to the sharp RD design, we consider the following three estimands under our WATE framework:  
\begin{equation} \label{fuzzy}
\begin{split}
\mathbb{E}_{Z(0)}\Big(\mathbb{E}(Y_i(1)|Z_i(1)=z, \tau_\epsilon = co) - \mathbb{E}(Y_i(0)|Z_i(0)=z, \tau_\epsilon = co)\Big|  X_i = c^-, \tau_\epsilon = co\Big),
\end{split}
\end{equation}
\begin{equation} \label{fuzzy2}
\begin{split}
\lim_{\epsilon \rightarrow 0} \mathbb{E}\Big(\mathbb{E}(Y_i(1)|Z_i(1)=z, \tau_\epsilon = co) - \mathbb{E}(Y_i(0)|Z_i(0)=z, \tau_\epsilon = co)\Big|X \in \mathcal{N}_\epsilon,  \tau_\epsilon = co\Big),
\end{split}
\end{equation}
\begin{equation} \label{fuzzy3}
\begin{split}
\mathbb{E}_{Z}\Big(\mathbb{E}(Y_i(1)|Z_i(1)=z, \tau_\epsilon = co) - \mathbb{E}(Y_i(0)|Z_i(0)=z, \tau_\epsilon = co)\Big),
\end{split}
\end{equation}
where $\mathcal{N}_\epsilon$ is a symmetric $\epsilon$ neighborhood around $c$.
\begin{thm}[Identification under Fuzzy RD Design]\label{thm4}
Under assumption \ref{fuzzyrd}, we have 
\begin{align*}
&\mathbb{E}_{Z(0)}\Big(\mathbb{E}(Y_i(1)|Z_i(1)=z, \tau_\epsilon = co) - \mathbb{E}(Y_i(0)|Z_i(0)=z, \tau_\epsilon = co)\Big|  X_i = c^-, \tau_\epsilon = co\Big)\\
& = \frac{\mathbb{E}_{Z(0)}\left(\mathbb{E}(Y_i|X_i = c^+,Z_i=z) - \mathbb{E}(Y_i|X_i = c^-,Z_i=z)\Big| X_i = c^-\right)}{\mathbb{E}_{Z(0)}\left(\mathbb{E}(T_i|X_i = c^+,Z_i=z) - \mathbb{E}(T_i|X_i = c^-,Z_i=z)\Big| X_i = c^-\right)},\\
&\lim_{\epsilon \rightarrow 0} \mathbb{E}\Big(\mathbb{E}(Y_i(1)|Z_i(1)=z, \tau_\epsilon = co) - \mathbb{E}(Y_i(0)|Z_i(0)=z, \tau_\epsilon = co)\Big|X \in \mathcal{N}_\epsilon,  \tau_\epsilon = co\Big)\\
& = \frac{\int \left(\mathbb{E}(Y_i|X_i = c^+,Z_i=z) - \mathbb{E}(Y_i|X_i = c^-,Z_i=z)\right) (f_{Z(0)|X}(z_i|c) + f_{Z(1)|X}(z_i|c)) dz }{\int \left(\mathbb{E}(T_i|X_i = c^+,Z_i=z) - \mathbb{E}(T_i|X_i = c^-,Z_i=z)\right) (f_{Z(0)|X}(z_i|c) + f_{Z(1)|X}(z_i|c)) dz},\\
&\text{If we further have the conditional independence assumption (CIA), }\\
&\mathbb{E}_{Z}\Big(\mathbb{E}(Y_i(1)|Z_i(1)=z, \tau_\epsilon = co) - \mathbb{E}(Y_i(0)|Z_i(0)=z, \tau_\epsilon = co)\Big)\\
& = \frac{\mathbb{E}_{Z}\left(\mathbb{E}(Y_i|X_i = c^+,Z_i=z) - \mathbb{E}(Y_i|X_i = c^-,Z_i=z)\right)}{\mathbb{E}_{Z}\left(\mathbb{E}(T_i|X_i = c^+,Z_i=z) - \mathbb{E}(T_i|X_i = c^-,Z_i=z)\right)}.
\end{align*}
\end{thm}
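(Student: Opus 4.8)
The plan is to run, for each fixed covariate value $z$, the classical Wald/LATE decomposition conditionally on $Z_i=z$, and then to integrate the resulting pointwise identities against the appropriate covariate density for each of the three target populations. Fix $z$. Since $Z_i=Z_i(1)$ on the event $X_i>c$ and $Z_i=Z_i(0)$ on the event $X_i\le c$, the limits $\mathbb{E}(Y_i|X_i=c^+,Z_i=z)$ and $\mathbb{E}(Y_i|X_i=c^-,Z_i=z)$ expand by the law of total expectation over the type $\tau_\epsilon\in\{co,at,nt\}$. On the $c^+$ side compliers and always-takers have $T_i=1$ and never-takers have $T_i=0$; on the $c^-$ side only always-takers have $T_i=1$. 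By the Independence part of Assumption \ref{fuzzyrd} the type probabilities on the two sides agree, so subtracting the two expansions the always-taker and never-taker blocks cancel by the Exclusion restriction (together with continuity of the type-conditional means at $c$), leaving
\begin{equation*}
\mathbb{E}(Y_i|X_i=c^+,Z_i=z)-\mathbb{E}(Y_i|X_i=c^-,Z_i=z)=p_{co}(z)\,\delta(z),
\end{equation*}
where $p_{co}(z):=\mathbb{P}(\tau_\epsilon=co\mid z)$ and $\delta(z):=\mathbb{E}(Y_i(1)|Z_i(1)=z,\tau_\epsilon=co)-\mathbb{E}(Y_i(0)|Z_i(0)=z,\tau_\epsilon=co)$. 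The identical bookkeeping applied to $T_i$, whose type-conditional values are deterministic, gives
\begin{equation*}
\mathbb{E}(T_i|X_i=c^+,Z_i=z)-\mathbb{E}(T_i|X_i=c^-,Z_i=z)=p_{co}(z).
\end{equation*}

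\textbf{Step 2 (first estimand).} By Bayes' rule together with Independence, the conditional law of $Z(0)$ given $\{X_i=c^-,\tau_\epsilon=co\}$ has density proportional to $p_{co}(z)f_{Z(0)|X}(z|c)$. Hence estimand (\ref{fuzzy}) equals $\frac{\int \delta(z)\,p_{co}(z)\,f_{Z(0)|X}(z|c)\,dz}{\int p_{co}(z)\,f_{Z(0)|X}(z|c)\,dz}$; substituting the two identities of Step 1 and noting that $\mathbb{E}_{Z(0)}(\cdot\mid X_i=c^-)$ is integration against $f_{Z(0)|X}(z|c)$ turns the numerator and denominator into exactly the claimed ratio.

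\textbf{Step 3 (second and third estimands).} For (\ref{fuzzy2}), as $\epsilon\downarrow0$ the symmetric window $\mathcal{N}_\epsilon$ places half its mass just above and half just below $c$, where $Z_i$ equals $Z_i(1)$ and $Z_i(0)$ respectively, and further conditioning on $\tau_\epsilon=co$ reweights by $p_{co}(z)$; thus the limiting covariate law is proportional to $p_{co}(z)\bigl(f_{Z(0)|X}(z|c)+f_{Z(1)|X}(z|c)\bigr)$ and the factor $\tfrac12$ cancels in the Wald ratio. Integrating the Step 1 identities against $f_{Z(0)|X}(z|c)+f_{Z(1)|X}(z|c)$ and forming the ratio yields the second displayed formula. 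For (\ref{fuzzy3}), the CIA (\ref{eqcia}) removes the dependence of the conditional means on $X_i$ near $c$, so the cutoff-conditional complier quantities in Step 1 coincide with the unconditional ones appearing in (\ref{fuzzy3}) and the relevant covariate law becomes the marginal $f_Z$; integrating the Step 1 identities against $f_Z$ and taking the ratio gives the last formula.

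\textbf{Main obstacle.} The substance is not any single computation but the careful tracking, on each side of the cutoff, of which types are treated and of the fact that $Z_i$ is a \emph{different} potential covariate above and below $c$ — so that it is Independence that aligns the type probabilities and Exclusion that aligns the always-/never-taker outcome means before the subtraction. The secondary delicate point is verifying that conditioning on $\{\tau_\epsilon=co\}$ induces precisely the $p_{co}(z)$-reweighting of the covariate density claimed for each population (and, for (\ref{fuzzy3}), that CIA is exactly what lets one pass from the local cutoff population to the global one); everything else is the standard Wald-ratio manipulation.
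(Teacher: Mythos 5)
Your proposal is correct and follows essentially the same route as the paper: the same pointwise complier/always-taker/never-taker decomposition at $z$, with Independence aligning the type probabilities across the cutoff and Exclusion cancelling the always-/never-taker blocks to yield $\mathbb{E}(Y_i|X_i=c^+,Z_i=z)-\mathbb{E}(Y_i|X_i=c^-,Z_i=z)=p_{co}(z)\delta(z)$ and the analogous identity for $T_i$, followed by integration against $f_{Z(0)|X}(z|c^-)$, $f_{Z(0)|X}(z|c^-)+f_{Z(1)|X}(z|c^+)$, and $f_Z(z)$ respectively, with the Bayes-rule reweighting by $p_{co}(z)$ identifying the complier-conditional covariate law in each Wald ratio. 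The only cosmetic difference is that the paper decomposes $T_iY_i$ and $(1-T_i)Y_i$ separately before summing, whereas you apply the law of total expectation to $Y_i$ directly, which is equivalent.
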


Theorem \ref{thm4} proves the identification results for the three WATEs under the fuzzy RD design. The proof for the first two results are straightforward. The third result relies on the CIA in addition. This is because the sharp RD estimand is to ignore the conditioning on $X_i$, however, the fuzzy RD has to condition on $X_i$ to cancel with the denominator. The CIA is to directly remove the conditioning on $X_i$ and thus is necessary for the results to hold. The local linear estimators of these three estimands can be developed following the argument in Section \ref{sec_est}. However, the technical details and the theoretical results are much more involved, and are beyond the scope of this work. Given the practical importance of the fuzzy RD design, we think this is an important future research problem to explore.

\section{Discussion}\label{sec_discussion}
In this paper we study the RD problem when the conditional distribution of the covariates given running variables is discontinuous at the cutoff, for example, due to self-selection.  The standard RD design is no longer valid as the continuity of potential outcomes assumption is violated. We show that casual effect can still be recovered when the covariates related to self-selection are observed. We thus propose a set of estimands under the framework of WATE and show that these estimands can be estimated using a class of weighted local linear estimators. We derive the theory for our estimators include consistency and asymptotic normality. We further compare our estimator with the standard RD estimator in simulation exercises to demonstrate its finite sample performance. 

We apply our estimator to two empirical examples. First, we study the U.S. House elections data in \cite{Lee2008}, which is the classical data for RD estimator.  Our estimator and local polynomial RD estimator have similar performance. Second, we apply our method to evaluate the effect of a GSP auction. We show that the result from our estimator is different from standard RD estimator due to self-selection. In particular, the obtained effect of advertising by using our estimator is smaller and statistically insignificant when self-selection is taking into account.  


\newpage
\appendix

\section{Proofs}
\label{sec_proofs}





In the proof, we use $C$ to denote a generic constant which may change from line to line. Define the following notations:  
\[\kappa_{\iota} = \int_{u>0} K(u)u^\iota du \quad \text{  and  } \quad \kappa_{2\iota} = \int_{u>0} K(u)^2u^\iota du, \quad \text{ for  } \iota = 0, 1, 2, \cdots \] 
For the ease of presentation, we introduce the following notations. Let $f_{X|Z(1)}'(c^+|z_i)$, $f_{X|Z(1)}''(c^+|z_i)$ and $f_{X|Z(1)}'''(c^+|z_i)$ denote the right derivatives $\frac{\partial f_{X|Z(1)}(c^+|z_i)}{\partial x}$, $\frac{\partial^2 f_{X|Z(1)}(c^+|z_i)}{\partial x^2}$ and $\frac{\partial^3 f_{X|Z(1)}(c^+|z_i)}{\partial x^3}$. Denote 
$$m_1(x,z)=\EE(Y(1)| X=x, Z(1)=z),
$$
 \[\alpha_1 = \EE\{Y(1)w_1(Z(1))|X=c^+\} = \int   \frac{f_{X|Z(1)}(c^+|z)}{\pi_1(z)}  m_1(c^+, z) f_{Z(1)}(z)d{z},\]
 and
\[\beta_1=  \int   \frac{f'_{X|Z(1)}(c^+|z)}{\pi_1(z)}  m_1(c^+, z) f_{Z(1)}(z)d{z}+\int   \frac{f_{X|Z(1)}(c^+|z)}{\pi_1(z)}  m'_1(c^+, z) f_{Z(1)}(z)d{z},
\]
where $m'_1(c^+, z)=\frac{\partial m_1(c^+,z)}{\partial x}$. Define $R_{i} = y_i -\alpha_1-  \beta_1 (x_i-c)$,  
\begin{align*}
A_n &= \sum_{i=1}^n \frac{T_i}{\pi_1(z_{i})}  K\Big( \frac{x_i-c}{h}\Big)(x_i-c)^2,\\
B_n &= -\sum_{i=1}^n \frac{T_i}{\pi_1(z_{i})}  K\Big( \frac{x_i-c}{h}\Big)(x_i-c),\\
C_n &= \sum_{i=1}^n \frac{T_i}{\pi_1(z_{i})}  K\Big( \frac{x_i-c}{h}\Big),\\
D_n &= A_nC_n-B_n^2.
\end{align*}
Since $A_n, B_n, C_n, D_n$ all depend on $\pi_t(z)$, we denote the corresponding version with $\hat\pi_1(z)$ as $\hat A_n, \hat B_n, \hat C_n, \hat D_n$.


\begin{lemma}\label{lemABCD}
Under assumption \ref{asscont2}, \ref{ContinuityofConditionalDensity} and \ref{SharpRDdesign}, we have
\[\Big| \frac{\hat A_n}{n} - h^3\kappa_2- h^4\mathbb{E}\left(\frac{f_{X|Z(1)}'(c^+|z_i)}{\pi_1(z_i)}\right)\kappa_3\Big|   = O_p(h^5+rh^3+\frac{h^{5/2}}{n^{1/2}}),\]
\[\Big|\frac{\hat B_n}{n} + h^2\kappa_1 +h^3\mathbb{E}\left(\frac{f_{X|Z(1)}'(c^+|z_i)}{\pi_1(z_i)}\right)\kappa_2\Big|   = O(h^4+rh^2+\frac{h^{3/2}}{n^{1/2}}),\]
\begin{equation}\label{eqC}
\Big|\frac{\hat C_n}{n} - h/2 \Big| =O_p(h^2+rh+\frac{h^{1/2}}{n^{1/2}}),
\end{equation}
\[\Big|\frac{\hat D_n}{n^2} -  h^4 \left(\frac{1}{2}\kappa_2 - \kappa_1^2\right)\Big| = O_p(h^5+rh^4+\frac{h^{7/2}}{n^{1/2}}),\]
where $r$ satisfies $\sup_{z\in\cZ}|\hat\pi_1(z)-\pi_1(z)|=O_p(r)$. 
\end{lemma}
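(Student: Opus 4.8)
The plan is to analyze $\hat A_n,\hat B_n,\hat C_n$ one at a time via the decomposition
\[
\frac{\hat A_n}{n}=\Big(\frac{\hat A_n}{n}-\frac{A_n}{n}\Big)+\Big(\frac{A_n}{n}-\EE\Big[\frac{A_n}{n}\Big]\Big)+\EE\Big[\frac{A_n}{n}\Big]
\]
(and likewise for $\hat B_n/n$ and $\hat C_n/n$), treating the plug-in error, the stochastic error, and the bias separately, and then to get $\hat D_n/n^2=(\hat A_n/n)(\hat C_n/n)-(\hat B_n/n)^2$ by multiplying out the three resulting expansions. For the plug-in error, I would first use parts~(1) and~(3) of Assumption~\ref{SharpRDdesign} and the smoothness of the densities to see that $\pi_1(\cdot)$ is bounded above and away from $0$ on $\cZ$, so that on the event $\{\sup_z|\hat\pi_1(z)-\pi_1(z)|\le\epsilon\}$ (of probability $\to1$ since $r=o(1)$) one has $\sup_z|\hat\pi_1(z)^{-1}-\pi_1(z)^{-1}|\le C\sup_z|\hat\pi_1(z)-\pi_1(z)|=O_p(r)$. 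Consequently
\[
\Big|\frac{\hat A_n}{n}-\frac{A_n}{n}\Big|\;\le\;O_p(r)\cdot\frac1n\sum_{i=1}^n T_i\,K\Big(\frac{x_i-c}{h}\Big)(x_i-c)^2\;=\;O_p(r)\cdot O_p(h^3)\;=\;O_p(rh^3),
\]
where $(x_i-c)^2\le(Mh)^2$ on the compact support $[-M,M]$ of $K$ makes the last average $O_p(h^3)$; identical arguments give $O_p(rh^2)$ and $O_p(rh)$ for $\hat B_n/n$ and $\hat C_n/n$.

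Next I would compute the bias. Writing $T_i=\mathbf{1}(X_i>c)$ and using $Z_i=Z_i(1)$ on $\{X_i>c\}$, conditioning gives $\EE[A_n/n]=\iint_{x>c}\pi_1(z)^{-1}K((x-c)/h)(x-c)^2 f_{X,Z(1)}(x,z)\,dx\,dz$; the substitution $u=(x-c)/h$ and a second-order Taylor expansion of $f_{X,Z(1)}(c+uh,z)$ in $uh$ — legitimate since $f_{X,Z(1)}\in\cF^+$ has continuous, hence bounded, third derivatives on the compact domain and $K$ has compact support, so the remainder integrates to $O(h^5)$ — yield $\EE[A_n/n]=h^3\kappa_2\int\pi_1(z)^{-1}f_{X,Z(1)}(c^+,z)\,dz+h^4\kappa_3\int\pi_1(z)^{-1}\partial_x f_{X,Z(1)}(c^+,z)\,dz+O(h^5)$. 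The defining identities for $\pi_1$ (the ones that make $\hat\alpha_1$ consistent in Section~\ref{sec_est}) reduce the two integrals to $1$ and $\EE(f_{X|Z(1)}'(c^+|z_i)/\pi_1(z_i))$, giving the stated $h^3\kappa_2+h^4\kappa_3\,\EE(\cdot)$. The same computation with one power of $(x-c)$ gives $\EE[B_n/n]=-h^2\kappa_1-h^3\kappa_2\,\EE(f_{X|Z(1)}'(c^+|z_i)/\pi_1(z_i))+O(h^4)$ (the minus sign from the definition of $B_n$), and with no power gives $\EE[C_n/n]=\tfrac12 h+O(h^2)$, using $\kappa_0=\int_{u>0}K(u)\,du=\tfrac12$. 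The stochastic errors come from Chebyshev: $A_n/n$ averages the i.i.d.\ variables $\xi_i=\pi_1(z_i)^{-1}T_iK((x_i-c)/h)(x_i-c)^2$, and the same change of variables shows $\EE[\xi_i^2]=O(h^5)$, so $\mathrm{Var}(A_n/n)=O(h^5/n)$ and $A_n/n-\EE[A_n/n]=O_p(h^{5/2}n^{-1/2})$; likewise $O_p(h^{3/2}n^{-1/2})$ for $\hat B_n/n$ and $O_p(h^{1/2}n^{-1/2})$ for $\hat C_n/n$. Adding the three contributions gives the asserted expansions for $\hat A_n/n$, $\hat B_n/n$, $\hat C_n/n$.

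Finally, for $\hat D_n$ I would write $\hat A_n/n=h^3\kappa_2+O_p(h^4+rh^3+h^{5/2}n^{-1/2})$, $\hat C_n/n=\tfrac12 h+O_p(h^2+rh+h^{1/2}n^{-1/2})$, and $\hat B_n/n=-h^2\kappa_1+O_p(h^3+rh^2+h^{3/2}n^{-1/2})$, multiply out, and retain only terms not of order $o(h^4)$: using $r=o(1)$ and $h=o(1)$ to check that each cross term is $O_p(h^5+rh^4+h^{7/2}n^{-1/2})$, this gives $(\hat A_n/n)(\hat C_n/n)=\tfrac12 h^4\kappa_2+O_p(h^5+rh^4+h^{7/2}n^{-1/2})$ and $(\hat B_n/n)^2=h^4\kappa_1^2+O_p(h^5+rh^4+h^{7/2}n^{-1/2})$, whence $\hat D_n/n^2=h^4(\tfrac12\kappa_2-\kappa_1^2)+O_p(h^5+rh^4+h^{7/2}n^{-1/2})$. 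I expect the only real difficulty to be bookkeeping: making the plug-in step genuinely uniform in $z$ (where the boundedness away from $0$ of $\pi_1$ and the compact supports of $K$ and $Z$ are indispensable) and then carrying the three distinct error scales — the $h$-bias, the $O_p(r)$ plug-in term, and the $n^{-1/2}$ stochastic term — through the product defining $\hat D_n$ without dropping a power, so as to land exactly on $O_p(h^5+rh^4+h^{7/2}n^{-1/2})$.
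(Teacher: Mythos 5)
Your proposal is correct and follows essentially the same route as the paper: split each statistic into a plug-in error bounded by $\sup_{z}|\hat\pi_1(z)-\pi_1(z)|=O_p(r)$ times an $O_p(h^k)$ kernel average, a bias computed by the change of variables $u=(x-c)/h$ plus a Taylor expansion and the normalization identity $\int \pi_1(z)^{-1}f_{X|Z(1)}(c^+|z)f_{Z(1)}(z)\,dz=1$, and a stochastic term controlled via a second-moment (Markov/Chebyshev) bound of order $h^{k}/n$; the paper writes this out only for $\hat C_n$ and treats the remaining cases, including the product expansion for $\hat D_n$, as analogous, exactly as you do.
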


\begin{proof}
We will focus on the proof of (\ref{eqC}). The remain results can be shown following the similar steps. By triangle inequality,
\begin{align*}
&\Big|\frac{\hat C_n}{n} - \frac{h}{2} \Big|\\
&\leq \Big|\frac{1}{n}\sum_{i=1}^n \frac{T_i}{\pi_1(z_{i})}  K\Big( \frac{x_i-c}{h}\Big)- \frac{h}{2}\Big|+ \Big|\frac{1}{n}\sum_{i=1}^n \frac{T_i(\pi_1(z_i)-\hat\pi_1(z_i))}{\pi_1(z_{i})\hat\pi_1(z_i)}  K\Big( \frac{x_i-c}{h}\Big)\Big|:=I_1+I_2.
\end{align*}
For term $I_1$,  we further decompose it into two terms,
\begin{equation}\label{eqlemABCD1}
I_1\leq \Big|\frac{1}{n}\sum_{i=1}^n \frac{T_i}{\pi_1(z_{i})}  K\Big( \frac{x_i-c}{h}\Big)- \EE\Big(\frac{T_i}{\pi_1(z_{i})}  K\Big( \frac{x_i-c}{h}\Big)\Big)\Big|+\Big| \EE\Big(\frac{T_i}{\pi_1(z_{i})}  K\Big( \frac{x_i-c}{h}\Big)\Big)- h/2\Big|:=I_{11}+I_{12}.
\end{equation}
The bias term $I_{12}$ is computed as
\begin{align*}
I_{12}&=\Big|\int\int_{x>c}\frac{1}{\pi_1(z)}K\Big( \frac{x_i-c}{h}\Big) f_{X,Z(1)}(x,z)dxdz-\frac{h}{2}\Big|\\
& =h\Big|\int\int_{u>0}\frac{1}{\pi_1(z)}K(u) f_{X,Z(1)}(uh+c,z)dudz-\frac{1}{2}\Big|\\
&\leq h\Big|\int\int_{u>0}\frac{1}{\pi_1(z)}K(u) f_{X,Z(1)}(c^+,z)dudz-\frac{1}{2}\Big|\\
&~~~~+h\Big|\int\int_{u>0}\frac{1}{\pi_1(z)}K(u) f_{X|Z(1)}'(\tilde u|z) uh f_{Z(1)}(z)dudz\Big|,
\end{align*}
where the last step follows from the mean value theorem for some intermediate value $\tilde u$. Our assumption implies that $|f_{X|Z(1)}'(\tilde u|z)|$ is bounded. In addition, by assumption $\pi_1(z)$ is bounded away from 0 by a constant,  thus the second term is of order $h^2$. For the first term, by the choice of $\pi_1(z)$ we get 
\begin{equation}\label{weightrestriction}
\int\frac{1}{\pi_1(z)}f_{X|Z(1)}(c^+|z)f_{Z(1)}(z)dz=1,
\end{equation}
which implies
$$
\int\int_{u>0}\frac{1}{\pi_1(z)}K(u) f_{X|Z(1)}(c^+|z)f_{Z(1)}(z)dudz=\int\frac{1}{\pi_1(z)}f_{X|Z(1)}(c^+|z)f_{Z(1)}(z)dz \int_{u>0}K(u) du=\frac{1}{2}.
$$
Thus, we have $I_{12}=O(h^2)$. Now we consider $I_{11}$. By the Markov inequality, $I_{11}\lesssim (\EE I_{11}^2)^{1/2}$. Thus, it suffices to compute $\EE I_{11}^2$, 
\begin{align*}
\EE I_{11}^2&=\frac{1}{n} \EE\Big(\frac{T_i}{\pi_1^2(z_{i})}  K^2\Big( \frac{x_i-c}{h}\Big)\Big)-\frac{1}{n}\Big[\EE\Big(\frac{T_i}{\pi_1(z_{i})}  K\Big( \frac{x_i-c}{h}\Big)\Big)\Big]^2\\
&\leq \frac{1}{n}\int\int_{x>c}\frac{1}{\pi_1^2(z)}K^2\Big( \frac{x-c}{h}\Big) f_{X,Z(1)}(x,z)dxdz\\
&=\frac{h}{n}\int\int_{u>0}\frac{1}{\pi_1^2(z)}K^2(u) f_{X,Z(1)}(uh+c,z)dudz\\
&=O(\frac{h}{n}), 
\end{align*}
where the last step follows from the same argument in $I_{21}$. Putting them together into (\ref{eqlemABCD1}), we have $I_1=O_p(h^2+\frac{h^{1/2}}{n^{1/2}})$. For the last term $I_2$, we have
\begin{align*}
I_2&\leq \sup_{z\in\cZ}|\hat\pi_1(z)-\pi_1(z)| \frac{1}{n}\sum_{i=1}^n \frac{T_i}{\pi_1(z_{i})\hat\pi_1(z_i)}  K\Big( \frac{x_i-c}{h}\Big)\\
&\leq O_p(r)  \frac{1}{n}\sum_{i=1}^n \frac{T_i}{\pi_1(z_{i})}  K\Big( \frac{x_i-c}{h}\Big)=O_p(rh),
\end{align*}
where the last step holds by the bound for the $I_1$ term. This implies (\ref{eqC}). 
\end{proof}

\begin{lemma}\label{lemkernel}
Under assumption \ref{asscont2}, \ref{ContinuityofConditionalDensity} and \ref{SharpRDdesign}, we have
\begin{align*}
&\mathbb{E}\left( (h_2)^{-1}K(\frac{z_i-z_j}{h_2}) \Big|z_i\right) = f_Z(z_i) + O_p(h_2^2),\\[10pt]
&\mathbb{E}\left(  2(h_1^2)^{-1}T_j K_1(\frac{c-x_j}{h_1}, \frac{z_i-z_j}{h_1}) \Big| z_i,x_i \right) = f_{X, Z(1)}(c^+, z_i) + 2h_1\kappa_1 \frac{\partial f_{X, Z(1)}(c^+, z_i)}{\partial x_i} + O_p(h_1^2),\\[10pt]
&\sup_{z_i}\Big|(nh_2)^{-1}\sum_{j=1}^nK(\frac{z_i-z_j}{h_2}) - f_Z(z_i)\Big|=O_p\left(\sqrt{\frac{\log n}{nh_2}} + h_2^2\right),\\[10pt]
&\sup_{z_i}\Big|2(nh_1^2)^{-1}\sum_{j=1}^nT_jK(\frac{c-x_j}{h_1}, \frac{z_i-z_j}{h_1}) - f_{X, Z(1)}(c^+, z_i)\Big|=O_p\left(\sqrt{\frac{\log n}{nh_1^2}} + h_1\right).\\[-2pt]
\end{align*}

In case when $Z = Z(1) = Z(0)$,
\begin{align*}
&\mathbb{E}\left(  (h_1^2)^{-1}T_j K_1(\frac{c-x_j}{h_1}, \frac{z_i-z_j}{h_1}) \Big| z_i,x_i \right) = f_{X, Z}(c, z_i) + O_p(h_1^2),\\[10pt]
&\sup_{z_i}\Big|(nh_1^2)^{-1}\sum_{j=1}^nK(\frac{c-x_j}{h_1}, \frac{z_i-z_j}{h_1}) - f_{X, Z}(c, z_i)\Big|=O_p\left(\sqrt{\frac{\log n}{nh_1^2}} + h_1^2\right).
\end{align*}
\end{lemma}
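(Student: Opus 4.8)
The plan is to read each displayed identity as a standard kernel-smoothing statement — a bias expansion (first, second and fifth lines) or a uniform-deviation bound (third, fourth and sixth lines) — the only nonroutine feature being the boundary effect in the $x$-direction created by the weight $T_j=\mathbf{1}(x_j>c)$. Throughout I treat the generic summand indexed by $j$; for $j\neq i$ the pair $(x_j,z_j)$ is independent of the conditioning variables $(z_i,x_i)$, so the conditioning is inert, while in the averaged (uniform) statements the single diagonal term $j=i$ contributes a term of order $1/(nh)$, negligible relative to the stated rates.

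For the bias statements I would change variables. In the univariate case, with $u=(z_i-z_j)/h_2$ the conditional expectation becomes $\int K(u)\,f_Z(z_i-uh_2)\,du$; Taylor-expanding $f_Z$ to second order, the first-order term vanishes by symmetry of $K$ ($\int uK(u)\,du=0$), the zeroth-order term is $f_Z(z_i)$, and the remainder is $O(h_2^2)$ uniformly in $z_i$ because $f_Z$ has bounded derivatives on the bounded support $\cZ$ (Assumptions \ref{ContinuityofConditionalDensity}--\ref{SharpRDdesign}). In the weighted bivariate case, with $u=(c-x_j)/h_1$ and $v=(z_i-z_j)/h_1$ the constraint $x_j>c$ becomes $u<0$, and using symmetry of $K$ the expectation equals $2\int_{u>0}\!\int K(u)K(v)\,f_{X,Z(1)}(c+uh_1,\,z_i-vh_1)\,du\,dv$, where the prefactor $2$ offsets $\int_{u>0}K(u)\,du=\tfrac12$. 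Expanding $f_{X,Z(1)}$ about $(c^+,z_i)$: the constant term gives $f_{X,Z(1)}(c^+,z_i)$; the $\partial_z$ term vanishes by symmetry in $v$; the $\partial_x$ term gives $2\kappa_1 h_1\,\partial_x f_{X,Z(1)}(c^+,z_i)$ since $\kappa_1=\int_{u>0}uK(u)\,du\neq 0$; and the remainder is $O(h_1^2)$ thanks to the bounded third derivatives guaranteed by $f_{X,Z(1)}\in\cF^+$. This reproduces the stated expansion, the non-vanishing $h_1$ term being the signature of the boundary. The case $Z=Z(1)=Z(0)$ is the same computation for the estimator (\ref{eqkernelf00}), whose sum runs over all observations, so the $u$-integral runs over all of $\mathbb{R}$ and the first-order term in $x$ also vanishes by symmetry, leaving bias $O(h_1^2)$ (this uses smoothness of $f_{X,Z}$ on a two-sided neighbourhood of $c$).

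For the two uniform statements I would split $\sup_{z_i}$ of the error into the deterministic bias just computed — which is $O(h_2^2)$, resp.\ $O(h_1)$, uniformly over $\cZ$ — plus the stochastic oscillation $\sup_{z_i}\big|\tfrac1n\sum_j\{\xi_j(z_i)-\EE\,\xi_j(z_i)\}\big|$, with $\xi_j(z_i)=h_2^{-1}K((z_i-z_j)/h_2)$ (or its bivariate, $T_j$-weighted analogue). Since $K$ is bounded with compact support, the class $\{z\mapsto K((z-\cdot)/h)\}$ is a bounded VC-type class, so a standard maximal inequality / uniform-in-bandwidth law of large numbers for kernel estimators gives oscillation rate $\sqrt{\log n/(nh_2)}$, resp.\ $\sqrt{\log n/(nh_1^2)}$; equivalently one can cover $\cZ$ by a polynomially fine grid, apply Bernstein's inequality at each grid point (the summands are bounded by $Ch^{-1}$, resp.\ $Ch^{-2}$, with variance of the same order) together with a union bound, and control the inter-grid oscillation via the Lipschitz property of $K$. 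Adding the two pieces yields the claimed rates, and the case $Z=Z(1)=Z(0)$ follows the same way. The main obstacle is precisely this uniform-in-$z_i$ control: one must verify the entropy/VC property of the kernel class and invoke (or reprove by discretization) an appropriate maximal inequality — the bias calculations themselves are mechanical Taylor expansions whose only delicate point is tracking the half-space integration that produces the $\kappa_1 h_1$ term.
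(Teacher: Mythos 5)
Your proposal is correct and follows essentially the same route as the paper: for the bias statements, the same change of variables and Taylor expansion, with the half-space integration over $u>0$ producing the non-vanishing $2\kappa_1 h_1\,\partial_x f_{X,Z(1)}(c^+,z_i)$ boundary term (and noting, as the paper does via $f_Z(z)=\int_{x>c}f_{X,Z(1)}(x,z)\,dx+\int_{x<c}f_{X,Z(0)}(x,z)\,dx$, that the smoothness of $f_Z$ is inherited from the joint densities), and the same treatment of the pre-treatment case where two-sided smoothness removes the first-order term. The only difference is that for the uniform-in-$z_i$ rates the paper simply invokes the known result of \cite{Fan1993} combined with the bias calculation, whereas you sketch the standard discretization/VC-class maximal-inequality proof of that step, which is the argument underlying the cited result.
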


\begin{proof}
First, note that
$$
\PP(Z=z)=\PP(Z=z,T=1)+\PP(Z=z,T=0)=\PP(Z(1)=z,X>c)+\PP(Z(0)=z,X<c).
$$
Then, we have 
$$
f_Z(z)=\int_{x>c} f_{X,Z(1)} (x,z)dx+\int_{x<c} f_{X,Z(0)} (x,z)dx,
$$
which implies that $f_Z(z)$ is second order continuously differentiable by the continuously differentiable property of $f_{X,Z(1)} (x,z)$ and $f_{X,Z(0)} (x,z)$ in assumption \ref{ContinuityofConditionalDensity}. Thus, the standard calculation in nonparametric density estimation yields
\begin{equation*}
\begin{split}
\mathbb{E}\left( (h_2)^{-1}K(\frac{z_i-z_j}{h_2}) \Big|z_i\right) &= \frac{1}{h_2}\int  K\left(\frac{z_i-z_j}{h_2}\right) f_Z(z_j)dz_j = \int  K\left(u\right) f_Z(z_i + uh_2)du\\
& = f_Z(z_i) +O_p(h_2^2).
\end{split}
\end{equation*}
To show the second result, following the similar argument, we get
\begin{equation*}
\begin{split}
&\mathbb{E}\left(  2(h_{11}h_{12})^{-1}T_j K_1(\frac{c-x_j}{h_{11}}, \frac{z_i-z_j}{h_{12}}) \Big| z_i,x_i \right) = \frac{2}{h_{11}h_{12}}\int \int T_j K_1(\frac{c-x_j}{h_{11}}, \frac{z_i-z_j}{h_{12}}) f_{X, Z(1)}(x_j, z_j)dz_jdx_j\\
& = \frac{2}{h_{11}}\int T_j K(\frac{c-x_j}{h_{11}}) \int K(v) f_{X, Z(1)}(x_j, z_i+vh_{12})dvdx_j\\
& = \frac{2}{h_{11}}\int T_j K(\frac{c-x_j}{h_{11}}) \left(f_{X, Z(1)}(x_j, z_i) + Ch_{12}^2\right) dx_j\\
& = 2\int_{u>0} K(u) \left(f_{X, Z(1)}(c+uh_{11}, z_i) + Ch_{12}^2\right) du\\
& = f_{X, Z(1)}(c^+, z_i) + 2h_{11}\kappa_1 \frac{\partial f_{X, Z(1)}(c^+, z_i)}{\partial x_i} + O(h_{12}^2+h_{11}^2),
\end{split}
\end{equation*} 
where $C$ is a generic constant. The last two results follow from \cite{Fan1993} together with the above bias calculation. 
\end{proof}

\subsection{Proof of Theorem \ref{thm2}}
Since the local linear estimator is invariant to the scale of $\pi_1(z)$, we can simply take $\frac{1}{\pi_1(z)} = \frac{f_Z(z)}{f_{X, Z(1)}(c, z)}$ in the rest of the proof. It can be estimated by the following kernel estimator: 
\begin{equation*}
\begin{split}
\frac{1}{\hat\pi_1(z)}=\frac{\hat{f}_Z(z)}{\hat{f}_{X,Z(1)}(c,z)} = \frac{(nh_2)^{-1}\sum_{i=1}^nK(\frac{z-z_i}{h_2})}{2 \cdot (nh_1^2)^{-1}\sum_{x_i>c}K_1(\frac{c-x_i}{h_1}, \frac{z-z_i}{h_1})}.
\end{split}
\end{equation*}
Start with the following minimization problem: 
\begin{equation}
\begin{split}
\Big(\hat{\alpha}_1, \hat{\beta}_1\Big) &= \arg \min_{\alpha, \beta} \sum_{i} \frac{T_i}{\hat \pi_1{(z_i)}} \Big( y_i- \alpha - (x_i-c)\beta\Big)^2 K\Big( \frac{x_i-c}{h}\Big). \label{eqopt}
\end{split}
\end{equation}
Recall that for any kernel estimates $\hat f=\hat f(x)$ and $\hat g=\hat g(x)$, if $f$ is bounded away from 0, then 
\begin{equation}\label{kerexp}
\begin{split}
\frac{1}{\hat f} = \frac{1}{f} - \frac{1}{f^2}(\hat f - f)+ O_p(r^2),
\end{split}
\end{equation}
where $\|\hat f - f\|_\infty=O_p(r)$ and $\|\hat g - g\|_\infty=O_p(s)$. Thus, if $g$ is bounded from above, then
\begin{equation*}
\begin{split}
\frac{\hat g}{\hat f} &= \frac{g + (\hat g - g)}{f} - \frac{g + (\hat g - g)}{f^2}(\hat f - f)+O_p(r^2)\\
& = \frac{g}{f} - \frac{g}{f^2}(\hat f - f) + \frac{\hat g - g}{f} +O_p(r^2+rs )\\
& = \frac{g}{f} - \frac{g\hat f}{f^2} + \frac{\hat g}{f}+ O_p(r^2+rs).
\end{split}
\end{equation*}
Following the above discussion, we can show that 
\begin{equation*}
\begin{split}
&\frac{1}{2\cdot(nh_1^2)^{-1}\sum_{j=1}^nT_jK_1(\frac{c-x_j}{h_1}, \frac{z_i-z_j}{h_1})} \\
&=  - \frac{1}{f_{X, Z(1)}(c^+, z_i)^2}\left(2(nh_1^2)^{-1}\sum_{j=1}^nT_jK_1(\frac{c-x_j}{h_1}, \frac{z_i-z_j}{h_1}) - f_{X, Z(1)}(c^+, z_i)\right)\\
& + \frac{1}{f_{X, Z(1)}(c^+, z_i)}+  O_p\left(r^2\right),
\end{split}
\end{equation*}
where Lemma \ref{lemkernel} implies
$$
r=\sqrt{\frac{\log n}{nh_1^2}} + h_1.
$$
The gradient of (\ref{eqopt}) can be written as the following $U$-statistic:
\begin{equation*}
\begin{split}
&\frac{1}{n}\sum_{i=1}^n \frac{T_i \hat{f}_Z(z_i)}{\hat{f}_{X,Z(1)}(c,z_i)}  K\Big( \frac{x_i-c}{h}\Big) R_i = \frac{1}{n}\sum_{i=1}^n\frac{T_i(nh_2)^{-1}\sum_{j=1}^nK(\frac{z_i-z_j}{h_2})}{2 \cdot(nh_1^2)^{-1}\sum_{j=1}^nT_jK_1(\frac{c-x_j}{h_1}, \frac{z_i-z_j}{h_1})}K\Big( \frac{x_i-c}{h}\Big)R_i\\
& = \frac{1}{n}\sum_{i=1}^n \frac{T_if_Z(z_i)}{f_{X, Z(1)}(c^+, z_i)}K\Big( \frac{x_i-c}{h}\Big)R_i + \frac{T_i}{f_{X, Z(1)}(c^+, z_i)}\left((nh_2)^{-1}\sum_{j=1}^nK(\frac{z_i-z_j}{h_2})  \right)K\Big( \frac{x_i-c}{h}\Big)R_i \\
&- \frac{T_if_Z(z_i)}{f_{X, Z(1)}(c^+, z_i)^2}\left(2\cdot (nh_1^2)^{-1}\sum_{j=1}^nT_jK_1(\frac{c-x_j}{h_1}, \frac{z_i-z_j}{h_1}) \right)K\Big( \frac{x_i-c}{h}\Big)R_i + O_p\left(r^2 +rs\right)\\
& = \frac{1}{n^2}\sum_{i=1}^n\sum_{j=1}^n \Big\{\frac{T_i}{f_{X, Z(1)}(c^+, z_i)} (h_2)^{-1}K(\frac{z_i-z_j}{h_2}) K\Big( \frac{x_i-c}{h}\Big)R_i\\
& -2 \cdot\frac{T_iT_jf_Z(z_i)}{f_{X, Z(1)}(c^+, z_i)^2}(h_1^2)^{-1}K_1(\frac{c-x_j}{h_1}, \frac{z_i-z_j}{h_1})K\Big( \frac{x_i-c}{h}\Big)R_i + \frac{T_if_Z(z_i)}{f_{X, Z(1)}(c^+, z_i)}K\Big( \frac{x_i-c}{h}\Big)R_i\Big\}+ O_p\left(r^2+rs\right),
\end{split}
\end{equation*}
where 
$$
s=\sqrt{\frac{\log n}{nh_2}} + h_2^2,
$$
implied by Lemma \ref{lemkernel}. Define 
\begin{equation*}
\begin{split}
\phi_{i,j}  &= \frac{T_i}{f_{X, Z(1)}(c^+, z_i)} (h_2)^{-1}K(\frac{z_i-z_j}{h_2}) K\Big( \frac{x_i-c}{h}\Big)R_i\\
 &-2 \cdot \frac{T_iT_jf_Z(z_i)}{f_{X, Z(1)}(c^+, z_i)^2}(h_1^2)^{-1}K_1(\frac{c-x_j}{h_1}, \frac{z_i-z_j}{h_1})K\Big( \frac{x_i-c}{h}\Big)R_i + \frac{T_if_Z(z_i)}{f_{X, Z(1)}(c^+, z_i)}K\Big( \frac{x_i-c}{h}\Big)R_i.
\end{split}
\end{equation*}
Thus,
\begin{equation}
\begin{split}
\frac{1}{n}\sum_{i=1}^n \frac{T_i \hat{f}_Z(z)}{\hat{f}_{X|Z(1)}(c,z)}  K\Big( \frac{x_i-c}{h}\Big) R_i &= \frac{1}{n^2}\sum_{i}\sum_{j}\phi_{i,j}+O_p\left(r^2+rs\right) \\
&= \frac{1}{2n^2}\sum_{i}\sum_{j}(\phi_{i,j}+\phi_{j,i}) +O_p\left(r^2+rs\right)\\
&= \frac{1}{n^2}\sum_{i<j}(\phi_{i,j}+\phi_{j,i})+\frac{1}{n^2}\sum_{i}\phi_{i,i}+ O_p\left(r^2+rs\right),\label{eqgradient}
\end{split}
\end{equation}
where the first (leading) term is a $U$-statistic after rescaling. By lemma \ref{lemmahoff} and Theorem 12.3 in \cite{van2000asymptotic}, we have
\begin{equation}\label{equstat}
\frac{n^{1/2}}{h^{1/2}}\Big(\frac{1}{n(n-1)}\sum_{i<j}(\phi_{i,j}+\phi_{j,i})-\delta/2-\frac{1}{n}\sum_{i=1}^n \{\EE(\phi_{i,j}+\phi_{j,i}|i)-\delta\}\Big)=o_p(1),
\end{equation}
where $\delta=\EE(\phi_{i,j}+\phi_{j,i})$, and we use $\EE(\cdot |i)$ to denote the conditional expectation given the $i$th sample. In the following, we approximate $\EE(\phi_{i,j}+\phi_{j,i}|i)$. Define $d(x_i, z_i) = m(x_i, z_i) - \alpha_1$. By Lemma \ref{phiij}, we have
\[\mathbb{E}(\phi_{i,j}) =  O_p(h^3).\]
The central limit theorem implies
\begin{equation}\label{eqclt}
\frac{1}{(nh)^{1/2}}\sum_{i=1}^n \{\EE(\phi_{i,j}+\phi_{j,i}|i)-\delta\}+O_p(\frac{n^{1/2}}{h^{1/2}}(h^3))\rightarrow_d N(0,\xi^2/h),
\end{equation}
where $\delta =  O_p(h^3)$ and $\xi^2=\EE\{(\EE(\phi_{i,j}+\phi_{j,i}|i)-\delta)^2\}=\mathbb{E}\left\{\mathbb{E}(\phi_{ij}|i)^2 + \mathbb{E}(\phi_{ji}|i)^2 + 2\mathbb{E}(\phi_{ij}|i)\mathbb{E}(\phi_{ji}|i) \right\}-\delta^2$. We now calculate the asymptotic variance as follows. Since $\mathbb{E}(\phi_{j,i}|i)=\mathbb{E}(\phi_{i,j}|j)$, from lemma \ref{phiij}, we have
\begin{equation*}
\begin{split}
\mathbb{E}(\phi_{ij}|j)  &=  \frac{h}{2} d_1(c^+, z_j) -T_j\frac{h}{h_1} K(\frac{c-x_j}{h_1})  \frac{f_Z(z_j)}{f_{X, Z(1)}(c^+,z_j)}d_1(c^+, z_j) + O_p(h^2+hh_1+hh_2).
\end{split}
\end{equation*}
Similarly, we can show that 
\[\mathbb{E}(\phi_{i,j}|i) = \frac{T_if_Z(z_i)}{f_{X, Z(1)}(c^+, z_i)}K\Big( \frac{x_i-c}{h}\Big)R_i +O_p(h_2^2 + h_1).\]
Recall that  $\sigma^2=\EE(Y(1)-m_1(X, Z(1)))^2$. Since $h\asymp \sqrt{h_1}\asymp h_2$, after some tedious calculation we can show that 
\begin{equation*}
\begin{split}
\frac{1}{n} \mathbb{E}\left(\mathbb{E}(\phi_{ji}|i)^2 \right)&  = \frac{1}{n}\frac{h^2}{h_1^2} \int_{z_i}\int_{x_i}T_iK(\frac{c-x_i}{h_1})^2  \frac{f_Z(z_i)^2}{f_{X, Z(1)}(c^+,z_i)^2}d(c^+, z_i)^2f_{X, Z(1)}(x_i, z_i)dx_idz_i + O\left(\frac{h^2}{n}\right)\\
& = \frac{h^2}{h_1} \frac{\kappa_{20}}{n}\int_{z_i}\frac{f_Z(z_i)^2}{f_{X, Z(1)}(c^+, z_i)}d(c^+, z_i)^2 dz_i+O\left(\frac{h^2}{n}\right)\\
& = \frac{h^2}{h_1} \frac{\kappa_{20}}{n}\mathbb{E}_Z\left(\frac{f_Z(z_i)}{f_{X, Z(1)}(c^+, z_i)}d(c^+, z_i)^2 \right)+O\left(\frac{h^{2}}{n}\right),
\end{split}
\end{equation*}
And
$$
\xi^2 = \kappa_{20} \underbrace{\mathbb{E}_Z\left(\frac{f_Z(z_i)}{f_{X, Z(1)}(c^+, z_i)}d_1(c^+, z_i)^2 \right)}_{\omega}  + O(h).
$$
Combining (\ref{equstat}) and (\ref{eqclt}), 
$$
n^{1/2}\Big(\frac{1}{n(n-1)}\sum_{i<j}(\phi_{i,j}+\phi_{j,i})-\delta/2\Big)+O\Big(n^{1/2}h^3\Big)\rightarrow_d N(0,\xi^2).
$$
Finally, note that in (\ref{eqgradient}), 
$$
\frac{1}{n^2}\sum_{i=1}^n\phi_{i,i}\lesssim \frac{1}{n}\EE(\phi_{i,j}) = O_p\left( \frac{h^2}{n}\right),
$$
and therefore we obtain that
$$
\frac{1}{n^{1/2}}\sum_{i=1}^n \frac{T_i \hat{f}_Z(z)}{\hat{f}_{X, Z(1)}(c,z)}  K\Big( \frac{x_i-c}{h}\Big)R_i+\phi\rightarrow_d N(n^{1/2}\delta/2,\xi^2),
$$
where  
$$
\phi=O_p\Big(n^{1/2}(h^3+r^2+rs)\Big).
$$
Following the similar argument, we can show the joint convergence
$$
\frac{1}{n^{1/2}}\sum_{i=1}^n \frac{T_i \hat{f}_Z(z)}{\hat{f}_{X, Z(1)}(c,z)}  K\Big( \frac{x_i-c}{h}\Big) R_i [1, (x_i-c)]^T\rightarrow_d N\left(n^{1/2} \left(\begin{matrix} O_p(h^3) \\ O_p(h^4)
\end{matrix}\right), \omega\left(\begin{matrix}
 \kappa_{20} & h\kappa_{21} \\
  h\kappa_{21} & h^2\kappa_{22} \\
 \end{matrix}\right)
\right).
$$
By the least squared formulation, the estimator $\hat\alpha_1$ satisfies
$$
\sqrt{nh^2}(\hat{\alpha}_1 - \alpha_1)=-e_1^T\left(\begin{matrix}
\hat C_n/(nh) & -\hat B_n/(nh)  \\ -\hat B_n/(nh)  & \hat A_n/(nh) 
\end{matrix}\right)^{-1} \frac{1}{n^{1/2}}\sum_{i=1}^n \frac{T_i \hat{f}_Z(z)}{\hat{f}_{X, Z(1)}(c,z)}  K\Big( \frac{x_i-c}{h}\Big) R_i [1, (x_i-c)]^T,
$$
where $e_1^T=(1,0)$.
From lemma \ref{lemABCD} and the matrix inversion formula, 
$$
\left(\begin{matrix}
\hat C_n/(nh) & -\hat B_n/(nh)  \\ -\hat B_n/(nh)  & \hat A_n/(nh) 
\end{matrix}\right)^{-1}=\frac{1}{\hat D_n/(nh)^2} \left(\begin{matrix}
\hat A_n/(nh) & \hat B_n/(nh) \\ \hat B_n/(nh) & \hat C_n/(nh)
\end{matrix}\right)\rightarrow_p \frac{1}{h^2(\kappa_2/2 -\kappa_1^2)}\left(\begin{matrix}
h^2 \kappa_2 & -h \kappa_1 \\ -h \kappa_1 & \frac{1}{2}
\end{matrix}\right).
$$
Thus, the asymptotic bias of $\sqrt{nh^2}(\hat{\alpha}_1 - \alpha_1)$ is
$$
\frac{-e_1^T}{h^2(\kappa_2/2 -\kappa_1^2)}\left(\begin{matrix}
h^2 \kappa_2 & -h \kappa_1 \\ -h \kappa_1 & \frac{1}{2}
\end{matrix}\right)n^{1/2} \left(\begin{matrix}
O_p(h^3) \\ O_p(h^4)
\end{matrix}\right)=O(n^{1/2}h^{3})=o(1).
$$
Similarly, the asymptotic variance  of $\sqrt{nh^2}(\hat{\alpha}_1 - \alpha_1)$ is
\begin{align*}
&\frac{\omega }{h^4(\kappa_2/2 -\kappa_1^2)^2}e_1^T\left(\begin{matrix}
h^2 \kappa_2 & -h \kappa_1 \\ -h \kappa_1 & \frac{1}{2}
&\end{matrix}\right)\left(\begin{matrix}
 \kappa_{20} & h\kappa_{21} \\
  h\kappa_{21} & h^2\kappa_{22} \\
 \end{matrix}\right)
\left(\begin{matrix}
h^2 \kappa_2 & -h \kappa_1 \\ -h \kappa_1 & \frac{1}{2}
\end{matrix}\right)e_1\\
&=\mathbb{E}_Z\left(\frac{f_Z(z_i)}{f_{X, Z(1)}(c^+, z_i)}d(c^+, z_i)^2 \right)\cdot C_v,
\end{align*}
where 
\[C_v = \frac{\kappa_2^2\kappa_{20}  + \kappa_1^2\kappa_{22} - 2\kappa_1\kappa_2\kappa_{21}}{\left(\frac{1}{2}\kappa_2-\kappa_1^2\right)^2}.\]
This completes the proof.

\begin{lemma} \label{lemM}
Under the same condition as in Theorem \ref{thm2}, 
\begin{equation*}
\begin{split}
M(c, z_i) &:= \frac{1}{h}\int T_iK\left(\frac{x_i-c}{h}\right)\EE(R_i|x_i,z_i) f_{X, Z(1)}(x_i, z_i) dx_i \\
& = \frac{1}{2} m_1(c^+, z_i) f_{X, Z(1)}(c^+, z_i) + h \kappa_1 \frac{\partial m_1(c^+, z_i)}{\partial x_i} f_{X, Z(1)}(c^+, z_i) + h \kappa_1 m_1(c^+, z_i) \frac{\partial f_{X, Z(1)}(c^+, z_i)}{\partial x_i}\\
& - \frac{1}{2}\alpha_1f_{X, Z(1)}(c^+, z_i) -  h \kappa_1\beta_1f_{X, Z(1)}(c^+, z_i) - h\kappa_1 \alpha_1\frac{\partial f_{X, Z(1)}(c^+, z_i)}{\partial x_i} + O_p(h^2),
\end{split}
\end{equation*}
where $O_p$ terms are valid uniformly over $i$.
\end{lemma}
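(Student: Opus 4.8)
The plan is to evaluate $M(c,z_i)$ by a one-sided (boundary) kernel expansion. First I would note that because $T_i=\mathbf{1}(x_i>c)$, the factor $T_i$ confines the integral to $x_i>c$, and on that region the observed outcome coincides with the treated potential outcome, so that $\EE(R_i\mid x_i,z_i)=m_1(x_i,z_i)-\alpha_1-\beta_1(x_i-c)$ for $x_i>c$, directly from the definition $R_i=y_i-\alpha_1-\beta_1(x_i-c)$ and $m_1(x,z)=\EE(Y(1)\mid X=x,Z(1)=z)$. Hence
\[
M(c,z_i)=\frac1h\int_{x_i>c}K\!\Big(\frac{x_i-c}{h}\Big)\bigl(m_1(x_i,z_i)-\alpha_1-\beta_1(x_i-c)\bigr)f_{X,Z(1)}(x_i,z_i)\,dx_i .
\]

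Next I would substitute $u=(x_i-c)/h$, giving $M(c,z_i)=\int_{u>0}K(u)\bigl(m_1(c+uh,z_i)-\alpha_1-\beta_1 uh\bigr)f_{X,Z(1)}(c+uh,z_i)\,du$, and Taylor expand both $m_1(c+uh,z_i)$ and $f_{X,Z(1)}(c+uh,z_i)$ to first order in $uh$ about $x=c^+$. Assumption \ref{ContinuityofConditionalDensity} provides continuous third-order partials of $m_1$ and $f_{X,Z(1)}$ on $[c,c+\delta)\times\cZ$, hence bounded second-order partials there; since $Z$ has bounded support (Assumption \ref{SharpRDdesign}), the two first-order remainders are bounded by $Ch^2u^2$ uniformly in $z_i$. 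Multiplying the two expansions,
\begin{align*}
&\bigl(m_1(c+uh,z_i)-\alpha_1-\beta_1 uh\bigr)f_{X,Z(1)}(c+uh,z_i)\\
&\quad=\bigl(m_1(c^+,z_i)-\alpha_1\bigr)f_{X,Z(1)}(c^+,z_i)\\
&\qquad+uh\Big[\bigl(\tfrac{\partial m_1(c^+,z_i)}{\partial x_i}-\beta_1\bigr)f_{X,Z(1)}(c^+,z_i)+\bigl(m_1(c^+,z_i)-\alpha_1\bigr)\tfrac{\partial f_{X,Z(1)}(c^+,z_i)}{\partial x_i}\Big]+O(h^2u^2).
\end{align*}

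Finally I would integrate term by term against $K(u)$ over $u>0$, using $\int_{u>0}K(u)\,du=\tfrac12$ (symmetry of $K$ together with $\int K=1$) and $\int_{u>0}uK(u)\,du=\kappa_1$, while the compact support of $K$ makes $\int_{u>0}u^2K(u)\,du<\infty$ so the accumulated remainder is $O(h^2)$. Splitting $(m_1-\alpha_1)f_{X,Z(1)}$ and the $h\kappa_1$ bracket into their constituent pieces reproduces exactly the six terms in the displayed formula. The main thing to be careful about---and really the only obstacle---is making the Taylor remainder $O(h^2)$ \emph{uniformly over $i$}, which is why the smoothness in Assumption \ref{ContinuityofConditionalDensity} and the boundedness of the support in Assumption \ref{SharpRDdesign} are invoked; the remaining algebra recombining the product of the two expansions is routine.
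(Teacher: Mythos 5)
Your proposal is correct and follows essentially the same route as the paper: restrict to $x_i>c$ via $T_i$, change variables to $u=(x_i-c)/h$, Taylor expand $m_1$ and $f_{X,Z(1)}$ in $x$ at $c^+$ (the paper carries the expansion to third order with explicit remainders, you stop at first order with an $O(h^2u^2)$ bound, which suffices), and integrate against $K(u)$ on $u>0$ using $\int_{u>0}K(u)\,du=1/2$ and $\int_{u>0}uK(u)\,du=\kappa_1$, with uniformity over $i$ justified by the smoothness in Assumption \ref{ContinuityofConditionalDensity} and the bounded supports. No gaps beyond the level of rigor the paper itself adopts.
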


\begin{proof}
Following the standard Taylor expansion, we can show that
\begin{align*}
& M(c, z_i) = \frac{1}{h}\int T_iK\left(\frac{x_i-c}{h}\right)\EE(R_i|x_i,z_i) f_{X, Z(1)}(x_i, z_i) dx_i \\
& = \int_{u>0} K(u)\left(m_1(c+uh, z_i) - \alpha_1-  uh\beta_1  \right) f_{X, Z(1)}(c+uh, z_i) du\\
& = \int_{u>0} K(u)\left(m_1(c^+, z_i) + \frac{\partial m_1(c^+, z_i)}{\partial x_i}uh +\frac{\partial^2 m_1(c^+, z_i)}{\partial x_i^2}\frac{u^2h^2}{2} +\frac{\partial^3 m_1(\tilde c^{(2)}, z_i)}{\partial x_i^3}\frac{u^3h^3}{3!} - \alpha_1-  uh\beta_1  \right) \\
&\left(f_{X, Z(1)}(c^+, z_i) + \frac{\partial f_{X, Z(1)}(c^+, z_i)}{\partial x_i}uh + \frac{\partial^2 f_{X, Z(1)}(c^+, z_i)}{\partial x_i^2}\frac{u^2h^2}{2}+ \frac{\partial^3 f_{X, Z(1)}(\tilde c^{(3)}, z_i)}{\partial x_i^3}\frac{u^3h^3}{3!} \right) du \\
& = \frac{1}{2} (m_1(c^+, z_i) - \alpha_1) f_{X, Z(1)}(c^+, z_i) + h \kappa_1\left(\frac{\partial m_1(c^+, z_i)}{\partial x_i} - \beta_1\right)f_{X, Z(1)}(c^+, z_i)\\
&  + h \kappa_1(m_1(c^+, z_i) - \alpha_1)\frac{\partial f_{X, Z(1)}(c^+, z_i)}{\partial x_i} +O_p(h^2),
\end{align*}
where $O_p$ terms are valid uniformly over $i$ as the (mixed) third derivatives of $f_{X, Z(1)}(x_i, z_i)$ are all bounded. 
\end{proof}

\begin{lemma}\label{phiij}
Recall that 
\begin{equation*}
\begin{split}
\phi_{i,j}  &= \frac{T_i}{f_{X, Z(1)}(c^+, z_i)} (h_2)^{-1}K(\frac{z_i-z_j}{h_2}) K\Big( \frac{x_i-c}{h}\Big)R_i\\
 &-2 \cdot \frac{T_iT_jf_Z(z_i)}{f_{X, Z(1)}(c^+, z_i)^2}(h_1^2)^{-1}K_1(\frac{c-x_j}{h_1}, \frac{z_i-z_j}{h_1})K\Big( \frac{x_i-c}{h}\Big)R_i + \frac{T_if_Z(z_i)}{f_{X, Z(1)}(c^+, z_i)}K\Big( \frac{x_i-c}{h}\Big)R_i.
\end{split}
\end{equation*}
Under the same condition as in Theorem \ref{thm2}, and when $h_1 = h^2$
\begin{align*}
&\mathbb{E}(\phi_{i,j}|j) = \frac{h}{2} d(c^+, z_j) -T_j\frac{h}{h_1} K(\frac{c-x_j}{h_1})  \frac{f_Z(z_j)}{f_{X, Z(1)}(c^+,z_j)}d_1(c^+, z_j) + O_p(h^2+hh_1+hh_2),\\
&\mathbb{E}(\phi_{i,j}|i) = \frac{T_if_Z(z_i)}{f_{X, Z(1)}(c^+, z_i)}K\Big( \frac{x_i-c}{h}\Big)R_i +O_p(h_2^2 + h_1), \\
&\mathbb{E}(\phi_{i,j}) =  O_p(h^3) 
\end{align*}
where $O_p$ terms are valid uniformly over $i$ or $j$.
\end{lemma}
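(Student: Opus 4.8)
The plan is to split $\phi_{i,j}=\Phi_1+\Phi_2+\Phi_3$ along the three summands in its definition (the $(h_2)^{-1}K$ term, the $-2(h_1^2)^{-1}K_1$ term, and the weight-only term), compute $\EE(\phi_{i,j}\mid i)$ and $\EE(\phi_{i,j}\mid j)$ separately, and then integrate once more for the unconditional bound. The $\EE(\cdot\mid i)$ step rests on Lemma \ref{lemkernel} and the $\EE(\cdot\mid j)$ step on Lemma \ref{lemM}; throughout I use $h_1=h^2$ and $h\asymp\sqrt{h_1}\asymp h_2$ to merge error terms.

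For $\EE(\phi_{i,j}\mid i)$: with $(x_i,z_i,y_i)$ frozen, the $j$th sample enters only through $(h_2)^{-1}K\!\big(\tfrac{z_i-z_j}{h_2}\big)$ and $T_j(h_1^2)^{-1}K_1\!\big(\tfrac{c-x_j}{h_1},\tfrac{z_i-z_j}{h_1}\big)$. Lemma \ref{lemkernel} gives the respective expectations over $j$ as $f_Z(z_i)+O_p(h_2^2)$ and $\tfrac12 f_{X,Z(1)}(c^+,z_i)+O_p(h_1)$. Substituting, $\Phi_1$ contributes $\tfrac{T_if_Z(z_i)}{f_{X,Z(1)}(c^+,z_i)}K\!\big(\tfrac{x_i-c}{h}\big)R_i+O_p(h_2^2)$ and $\Phi_2$ contributes minus the same leading term plus $O_p(h_1)$ (the factor $2$ and one power of $f_{X,Z(1)}(c^+,z_i)$ cancel against the squared denominator), while $\Phi_3$ is already a function of the $i$th sample. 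The two leading terms cancel, giving exactly the claimed identity with error $O_p(h_1+h_2^2)$.

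For $\EE(\phi_{i,j}\mid j)$: freeze $(x_j,z_j)$ and integrate over $(x_i,z_i,y_i)$. The key device is that the $x_i$-integral against $T_iK\!\big(\tfrac{x_i-c}{h}\big)$ produces $h\,M(c,z_i)$, with $M$ expanded via Lemma \ref{lemM}. In $\Phi_1$, after the $x_i$-integral one is left with $\tfrac{h}{h_2}\!\int\!\tfrac{M(c,z_i)}{f_{X,Z(1)}(c^+,z_i)}K\!\big(\tfrac{z_i-z_j}{h_2}\big)dz_i$; the substitution $z_i=z_j+vh_2$ together with $\int K(v)v\,dv=0$ kills the $O(h_2)$ term, and $M(c,z_j)/f_{X,Z(1)}(c^+,z_j)=\tfrac12 d_1(c^+,z_j)+O_p(h)$ produces $\tfrac h2 d_1(c^+,z_j)$ with remainder $O_p(h^2+hh_2^2)$. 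In $\Phi_2$ the factor $T_j(h_1^2)^{-1}K\!\big(\tfrac{c-x_j}{h_1}\big)$ is held fixed; integrating the rest over $(x_i,z_i)$ and changing variables $z_i=z_j+wh_1$ yields $h h_1\tfrac{f_Z(z_j)}{f_{X,Z(1)}(c^+,z_j)^2}M(c,z_j)+O_p(hh_1^3)$, and the leading term of $M$ produces $-T_j\tfrac{h}{h_1}K\!\big(\tfrac{c-x_j}{h_1}\big)\tfrac{f_Z(z_j)}{f_{X,Z(1)}(c^+,z_j)}d_1(c^+,z_j)$ plus lower-order pieces. Together with the $O(h^2)$ constant left by $\Phi_3$, and using $h_1=h^2$, $h_2\asymp h$, all remainders collapse into $O_p(h^2+hh_1+hh_2)$.

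For $\EE(\phi_{i,j})$, take expectation over $j$ in the previous display. The choice $\pi_1^{-1}(z)=f_Z(z)/f_{X,Z(1)}(c^+,z)$ forces $\alpha_1=\int m_1(c^+,z)f_Z(z)\,dz$, hence $\EE\!\big[\tfrac h2 d_1(c^+,z_j)\big]=0$; integrating the $\Phi_2$ term over $x_j$ first turns $T_j(h_1^2)^{-1}K\!\big(\tfrac{c-x_j}{h_1}\big)$ into $\tfrac12 f_{X,Z(1)}(c^+,z_j)+O(h_1)$, after which $\int f_Z(z)d_1(c^+,z)\,dz=0$ again removes the leading piece, leaving $O(hh_1)=O(h^3)$. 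It then remains to confirm that the residual $O(h^2)$ terms --- the second-order Taylor pieces of $M$ and the constant contributed by $\Phi_3$ --- also integrate against $f_Z$ to order $h^3$; this is exactly the point at which the definition of $\beta_1$ is used, $\beta_1$ being the population value of the local-linear slope chosen so that the associated normal equation holds to the requisite order. Verifying this last batch of $z$-integrated cancellations, while tracking the kernel moments $\kappa_0=\tfrac12,\kappa_1,\kappa_2$, is the main obstacle; the rest is routine change of variables and Taylor expansion, with uniformity over $i$ (resp.\ $j$) inherited from the bounded third derivatives in Assumption \ref{ContinuityofConditionalDensity}.
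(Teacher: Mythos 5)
Your handling of the two conditional expectations is correct and is essentially the paper's argument: the same three-way split of $\phi_{i,j}$, Lemma \ref{lemkernel} for $\EE(\phi_{i,j}\mid i)$ (the conditional means of the first two pieces cancel at leading order, leaving the third piece plus $O_p(h_1+h_2^2)$), and Lemma \ref{lemM} for $\EE(\phi_{i,j}\mid j)$, with the $x_i$-integral producing $hM(c,z_i)$ and the $z_i$-smoothing against $h_2^{-1}K$ or $h_1^{-1}K$ collapsing $M(c,\cdot)/f_{X,Z(1)}(c^+,\cdot)$ to $\tfrac12 d_1(c^+,z_j)$.

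The gap is in the third display. You propose to get $\EE(\phi_{i,j})=O_p(h^3)$ by averaging your formula for $\EE(\phi_{i,j}\mid j)$ over $j$, but that formula is only established up to $O_p(h^2+hh_1+hh_2)$, which with $h_2\asymp h$ and $h_1=h^2$ is $O_p(h^2)$; integrating it can therefore never deliver better than $O_p(h^2)$. The $O_p(h^3)$ bound forces you back to the exact expression and a genuine second-order expansion of the unconditional expectation: in the paper this is the explicit evaluation of $\EE(\mathrm{Part.AIII})=h\int f_Z(z)\,M(c,z)/f_{X,Z(1)}(c^+,z)\,dz$ with the full $\kappa_1$-terms of Lemma \ref{lemM}, where the order-$h$ term dies by $\alpha_1=\int m_1(c^+,z)f_Z(z)\,dz$ and the order-$h^2$ terms are cancelled against one another using $\beta_1=\int \partial_x m_1(c^+,z) f_Z(z)\,dz+\int m_1(c^+,z)\,\frac{\partial_x f_{X,Z(1)}(c^+,z)}{f_{X,Z(1)}(c^+,z)}f_Z(z)\,dz$, together with the $2h_1\kappa_1$-correction relating $\EE(\mathrm{Part.AII})$ to $\EE(\mathrm{Part.AIII})$ (of size $O(hh_1)=O(h^3)$) and the $O(hh_2^2)$ correction for $\EE(\mathrm{Part.AI})$. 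You name exactly this step (``the residual $O(h^2)$ terms \ldots this is exactly the point at which the definition of $\beta_1$ is used'') and then declare it ``the main obstacle'' without carrying it out. Since the zeroth- and first-order cancellations are trivial consequences of the normalization of $\alpha_1$, this deferred cancellation is the entire content of the claim, and it is not automatic: even after inserting the definitions of $\alpha_1$ and $\beta_1$, the expansion of $\EE(\mathrm{Part.AIII})$ still produces a term $-h^2\kappa_1\,\alpha_1\int \frac{\partial_x f_{X,Z(1)}(c^+,z)}{f_{X,Z(1)}(c^+,z)}f_Z(z)\,dz$ that must be tracked and accounted for before one can assert the stated $O_p(h^3)$ rate. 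As written, your proposal leaves the lemma unproved at its one nontrivial point; to close it you must compute $\EE(\Phi_1)$, $\EE(\Phi_2)$, $\EE(\Phi_3)$ directly as the paper does, rather than integrating the cruder conditional-on-$j$ approximation.
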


\begin{proof}
\begin{equation*}
\begin{split}
\mathbb{E}(\phi_{i,j}|j)  &=\underbrace{\mathbb{E}\left( \frac{T_i}{f_{X, Z(1)}(c^+, z_i)}  (h_2)^{-1}K(\frac{z_i-z_j}{h_2})  K\Big( \frac{x_i-c}{h}\Big)R_i\Big|z_j\right)}_{\text{(Part.BI)}}\\
 &-2T_j \cdot\underbrace{\mathbb{E}\left(  \frac{T_if_Z(z_i)}{f_{X, Z(1)}(c^+, z_i)^2} (h_1^2)^{-1}K_1(\frac{c-x_j}{h_1}, \frac{z_i-z_j}{h_1}) K\Big( \frac{x_i-c}{h}\Big)R_i\Big|z_j, x_j \right)}_{\text{(Part.BII)}}\\
 & + \underbrace{\mathbb{E}\left(\frac{T_if_Z(z_i)}{f_{X, Z(1)}(c^+, z_i)}K\Big( \frac{x_i-c}{h}\Big)R_i\Big| z_j,x_j \right)}_{\text{(Part.BIII)}}.
\end{split}
\end{equation*}
From lemma \ref{lemM},
\begin{equation*}
\begin{split}
\text{(Part.BI)} & = \mathbb{E}\left( \frac{T_i}{f_{X, Z(1)}(c^+, z_i)}  (h_2)^{-1}K(\frac{z_i-z_j}{h_2})  K\Big( \frac{x_i-c}{h}\Big)R_i\Big|z_j\right)\\
& = h\int_{z_i} \frac{1}{f_{X, Z(1)}(c^+, z_i)}  (h_2)^{-1}K(\frac{z_i-z_j}{h_2}) M(c, z_i) dz_i\\
& = h\int_{z_i}   (h_2)^{-1}K(\frac{z_i-z_j}{h_2})\frac{1}{2} (m_1(c^+, z_i) - \alpha_1) dz_i + O_p(h^2)\\
& = \frac{h}{2} (m_1(c^+, z_j) - \alpha_1) + O_p(h^2 +h_2h),
\end{split}
\end{equation*}
where $O_p$ terms are valid uniformly over $j$.  Similarly, for Part.BII and Part.BIII, we can show that
\begin{equation*}
\begin{split}
&\text{(Part.BII)} = \mathbb{E}\left(  \frac{T_if_Z(z_i)}{f_{X, Z(1)}(c^+, z_i)^2} (h_1^2)^{-1}K_1(\frac{c-x_j}{h_1}, \frac{z_i-z_j}{h_1}) K\Big( \frac{x_i-c}{h}\Big)R_i\Big|z_j, x_j \right) \\
& = h\int_{z_i} \frac{f_Z(z_i)}{f_{X, Z(1)}(c^+, z_i)^2}  (h_1^2)^{-1}K_1(\frac{c-x_j}{h_1}, \frac{z_i-z_j}{h_1}) M(c, z_i)dz_i\\
& = \frac{h}{2}\int_{z_i}  \frac{f_Z(z_i)}{f_{X, Z(1)}(c^+, z_i)} (h_1^2)^{-1}K_1(\frac{c-x_j}{h_1}, \frac{z_i-z_j}{h_1})  (m_1(c^+, z_i) - \alpha_1) dz_i + O_p(h^2)\\
& = \frac{h}{2h_1} K(\frac{c-x_j}{h_1})  \frac{f_Z(z_j)}{f_{X, Z(1)}(c^+,z_j)} (m_1(c^+, z_j) - \alpha_1)+ O_p\left(hh_1\right)  + O_p(h^2),
\end{split}
\end{equation*}
\begin{equation*}
\begin{split}
\text{(Part.BIII)} & = \mathbb{E}\left(\frac{T_if_Z(z_i)}{f_{X, Z(1)}(c^+, z_i)}K\Big( \frac{x_i-c}{h}\Big)R_i\Big| z_j,x_j \right)
 = \frac{h}{2}\int_{z_i}  (m_1(c^+, z_i) - \alpha_1)f_Z(z_i) dz_i + O_p(h^2) = O_p(h^2),
\end{split}
\end{equation*}
where the last step follows from the definition of $\alpha_1$. Define $d_1(x_i, z_i) = m_1(x_i, z_i) - \alpha_1$. Combining the Part BI, BII and BIII, we obtain 
\begin{equation*}
\begin{split}
\mathbb{E}(\phi_{ij}|j)  &=  \frac{h}{2} d_1(c^+, z_j) -T_j\frac{h}{h_1} K(\frac{c-x_j}{h_1})  \frac{f_Z(z_j)}{f_{X, Z(1)}(c^+,z_j)}d_1(c^+, z_j) + O_p(h^2+hh_1+hh_2).
\end{split}
\end{equation*}
Following the similar calculation, by lemma \ref{lemkernel} we have 
\begin{equation*}
\begin{split}
\mathbb{E}(\phi_{i,j}|i)  &= \underbrace{\frac{T_i}{f_{X, Z(1)}(c^+, z_i)} \mathbb{E}\left( (h_2)^{-1}K(\frac{z_i-z_j}{h_2}) \Big|z_i\right) K\Big( \frac{x_i-c}{h}\Big)R_i}_{\text{Part. AI}}\\
 &-\underbrace{2 \cdot \frac{T_if_Z(z_i)}{f_{X, Z(1)}(c^+, z_i)^2}\mathbb{E}\left(  (h_1^2)^{-1}T_jK_1(\frac{c-x_j}{h_1}, \frac{z_i-z_j}{h_1}) \Big| z_i,x_i \right)K\Big( \frac{x_i-c}{h}\Big)R_i}_{\text{Part. AII}} \\
 &+ \underbrace{\frac{T_if_Z(z_i)}{f_{X, Z(1)}(c^+, z_i)}K\Big( \frac{x_i-c}{h}\Big)R_i}_{\text{Part. AIII}}\\
&= \frac{T_if_Z(z_i)}{f_{X, Z(1)}(c^+, z_i)}K\Big( \frac{x_i-c}{h}\Big)R_i +O_p(h_1 + h_2^2).
\end{split}
\end{equation*}
Finally, we calculate $\mathbb{E}(\phi_{i,j})$ using Lemma \ref{lemM}, 
\begin{equation*}
\begin{split}
\mathbb{E}(\text{ Part.AIII} ) & = h\int   \frac{ f_Z(z_i)}{f_{X, Z(1)}(c^+, z_i)} M(c, z_i)dz_i  \\
& = \frac{h}{2}\int  (m_1(c^+, z_i) - \alpha_1) f_Z(z_i)dz_i- h^2\kappa_1\beta_1- h^2\kappa_1 \alpha_1 \int \frac{\partial f_{X, Z(1)}(c^+, z_i)}{\partial x_i} \frac{f_Z(z_i)}{f_{X, Z(1)}(c^+, z_i)} dz_i\\
& + h^2 \kappa_1\int\left(\frac{\partial m_1(c^+, z_i)}{\partial x_i}f_Z(z_i)+m_1(c^+, z_i) \frac{\partial f_{X, Z(1)}(c^+, z_i)}{\partial x_i}\frac{f_Z(z_i)}{f_{X, Z(1)}(c^+, z_i)} \right)dz_i   +O_p( h^3)\\
& = -h^2\kappa_1 \alpha_1 \int \frac{\partial f_{X, Z(1)}(c^+, z_i)}{\partial x_i} \frac{f_Z(z_i)}{f_{X, Z(1)}(c^+, z_i)} dz_i +O_p(h^3)
\end{split}
\end{equation*}
where the $O_p$ terms are valid uniformly over $i$ and the last equality follows as 
 \[\alpha_1 = \int   m_1(c^+, z) f_{Z}(z)d{z},\]
and
\[\beta_1 = \int  \frac{\partial m_1(c^+, z_i)}{\partial x_i} f_Z(z_i) dz_i + \int m_1(c^+, z_i) \frac{\partial f_{X, Z(1)}(c^+, z_i)}{\partial x_i} \frac{f_Z(z_i)}{f_{X, Z(1)}(c^+, z_i)} dz_i.\]  
\\
From Lemma \ref{lemkernel}, some tedious calculation implies
\begin{equation*}
\begin{split}
\mathbb{E}(\text{Part.AI} ) & =  \mathbb{E}(\text{Part.AIII} )+O_p(hh_2^2),
\end{split}
\end{equation*}
and 
\begin{equation*}
\begin{split}
\mathbb{E}(\text{Part.AII} ) & =  \mathbb{E}(\text{Part.AIII} )+2h_1\kappa_1\mathbb{E}\left(\frac{T_if_Z(z_i)}{f_{X, Z(1)}(c^+, z_i)^2}\frac{\partial f_{X, Z(1)}(c^+, z_i)}{\partial x_i}K\Big( \frac{x_i-c}{h}\Big)R_i\right) +O_p(hh_1^2).
\end{split}
\end{equation*}
From Lemma \ref{lemkernel} and Lemma \ref{lemM},
\begin{equation*}
\begin{split}
&h_1\kappa_1\mathbb{E}\left(\frac{T_if_Z(z_i)}{f_{X, Z(1)}(c^+, z_i)^2}\frac{\partial f_{X, Z(1)}(c^+, z_i)}{\partial x_i}K\Big( \frac{x_i-c}{h}\Big)R_i\right)\\
& = h_1\kappa_1\int \int \frac{T_if_Z(z_i)}{f_{X, Z(1)}(c^+, z_i)^2}\frac{\partial f_{X, Z(1)}(c^+, z_i)}{\partial x_i}  K\Big( \frac{x_i-c}{h}\Big)\EE(R_i|x_i,z_i) f_{X, Z(1)}(x_i, z_i)  dx_i dz_i\\
& = \frac{hh_1\kappa_1}{2}\int  \frac{f_Z(z_i)}{f_{X, Z(1)}(c^+, z_i)}\frac{\partial f_{X, Z(1)}(c^+, z_i)}{\partial x_i} (m_1(c^+, z_i) - \alpha_1) dz_i  +O_p(h_1h^2).
\end{split}
\end{equation*}
Thus when $h_1 \asymp h^2$
\[\mathbb{E}(\phi_{i,j}) =  O_p(h^3).\]
This completes the proof. 
\end{proof}

\subsection{Proof of Corollary \ref{corrd}}
In the case when $Z = Z(1) = Z(0)$, we can estimate $\frac{1}{\pi_1(z)} = \frac{f_Z(z)}{f_{X, Z}(c, z)}$ by the following kernel estimator: 
\begin{equation*}
\begin{split}
\frac{1}{\hat\pi_1(z)}=\frac{\hat{f}_Z(z)}{\hat{f}_{X,Z}(c,z)} = \frac{(nh_2)^{-1}\sum_{i=1}^nK(\frac{z-z_i}{h_2})}{(nh_1^2)^{-1}\sum_{i=1}^nK_1(\frac{c-x_i}{h_1}, \frac{z-z_i}{h_1})}.
\end{split}
\end{equation*}
Similar as in theorem \ref{thm2},
\begin{equation*}
\begin{split}
&\frac{1}{(nh_1^2)^{-1}\sum_{j=1}^nK_1(\frac{c-x_j}{h_1}, \frac{z_i-z_j}{h_1})} \\
&=  - \frac{1}{f_{X, Z}(c, z_i)^2}\left((nh_1^2)^{-1}\sum_{j=1}^nK_1(\frac{c-x_j}{h_1}, \frac{z_i-z_j}{h_1}) - f_{X, Z}(c, z_i)\right)\\
& + \frac{1}{f_{X, Z}(c, z_i)}+  O_p\left(r^2\right),
\end{split}
\end{equation*}
where Lemma \ref{lemkernel} implies
$$
r=\sqrt{\frac{\log n}{nh_1^2}} + h_1^2.
$$
The gradient in (\ref{eqopt}) can be written as the following $U$-statistic:
\begin{equation*}
\begin{split}
&\frac{1}{n}\sum_{i=1}^n \frac{T_i \hat{f}_Z(z_i)}{\hat{f}_{X,Z}(c,z_i)}  K\Big( \frac{x_i-c}{h}\Big) R_i = \frac{1}{n}\sum_{i=1}^n\frac{T_i(nh_2)^{-1}\sum_{j=1}^nK(\frac{z_i-z_j}{h_2})}{(nh_1^2)^{-1}\sum_{j=1}^nK_1(\frac{c-x_j}{h_1}, \frac{z_i-z_j}{h_1})}K\Big( \frac{x_i-c}{h}\Big)R_i\\
& = \frac{1}{n}\sum_{i=1}^n \frac{T_if_Z(z_i)}{f_{X, Z}(c, z_i)}K\Big( \frac{x_i-c}{h}\Big)R_i + \frac{T_i}{f_{X, Z}(c, z_i)}\left((nh_2)^{-1}\sum_{j=1}^nK(\frac{z_i-z_j}{h_2})  \right)K\Big( \frac{x_i-c}{h}\Big)R_i \\
&- \frac{T_if_Z(z_i)}{f_{X, Z}(c, z_i)^2}\left( (nh_1^2)^{-1}\sum_{j=1}^nK_1(\frac{c-x_j}{h_1}, \frac{z_i-z_j}{h_1}) \right)K\Big( \frac{x_i-c}{h}\Big)R_i + O_p\left(r^2 +rs\right)\\
& = \frac{1}{n^2}\sum_{i=1}^n\sum_{j=1}^n \Big\{\frac{T_i}{f_{X, Z}(c, z_i)} (h_2)^{-1}K(\frac{z_i-z_j}{h_2}) K\Big( \frac{x_i-c}{h}\Big)R_i\\
& -\frac{T_if_Z(z_i)}{f_{X, Z}(c, z_i)^2}(h_1^2)^{-1}K_1(\frac{c-x_j}{h_1}, \frac{z_i-z_j}{h_1})K\Big( \frac{x_i-c}{h}\Big)R_i + \frac{T_if_Z(z_i)}{f_{X, Z}(c, z_i)}K\Big( \frac{x_i-c}{h}\Big)R_i\Big\}+ O_p\left(r^2+rs\right),
\end{split}
\end{equation*}
where 
$$
s=\sqrt{\frac{\log n}{nh_2}} + h_2^2,
$$
Define
\begin{equation*}
\begin{split}
\phi_{i,j}^{rd}  &= \frac{T_i}{f_{X, Z}(c, z_i)} (h_2)^{-1}K(\frac{z_i-z_j}{h_2}) K\Big( \frac{x_i-c}{h}\Big)R_i\\
 &-  \frac{T_if_Z(z_i)}{f_{X, Z}(c, z_i)^2}(h_1^2)^{-1}K_1(\frac{c-x_j}{h_1}, \frac{z_i-z_j}{h_1})K\Big( \frac{x_i-c}{h}\Big)R_i + \frac{T_if_Z(z_i)}{f_{X, Z}(c, z_i)}K\Big( \frac{x_i-c}{h}\Big)R_i.
\end{split}
\end{equation*}
By Lemma \ref{phiijrd}, we have
\[\mathbb{E}(\phi_{i,j}^{rd}) =  O_p(h^3).\]
The central limit theorem implies
\begin{equation}\label{eqclt}
\frac{1}{(nh)^{1/2}}\sum_{i=1}^n \{\EE(\phi_{i,j}^{rd}+\phi_{j,i}^{rd}|i)-\delta^{rd}\}+O_p(\frac{n^{1/2}}{h^{1/2}}(h^3 +h^2h_1))\rightarrow_d N(0,\xi^2/h),
\end{equation}
where $\delta^{rd} = O_p(h^3)$ and $\xi^2=\EE\{(\EE(\phi_{i,j}^{rd}+\phi_{j,i}^{rd}|i)-\delta)^2\}=\mathbb{E}\left\{\mathbb{E}(\phi_{ij}|i)^2 + \mathbb{E}(\phi_{ji}|i)^2 + 2\mathbb{E}(\phi_{ij}|i)\mathbb{E}(\phi_{ji}|i) \right\}-\delta^2$. We now calculate the asymptotic variance as follows. Since $\mathbb{E}(\phi_{j,i}|i)=\mathbb{E}(\phi_{i,j}|j)$, from lemma \ref{phiij}, we have
\begin{equation*}
\begin{split}
\mathbb{E}(\phi_{ij}|j)  &=  \frac{h}{2} d_1(c^+, z_j) -T_j\frac{h}{h_1} K(\frac{c-x_j}{h_1})  \frac{f_Z(z_j)}{f_{X, Z}(c,z_j)}d_1(c^+, z_j) + O_p(h^2+hh_1+hh_2).
\end{split}
\end{equation*}
Similarly, we can show that 
\[\mathbb{E}(\phi_{i,j}|i) = \frac{T_if_Z(z_i)}{f_{X, Z}(c, z_i)}K\Big( \frac{x_i-c}{h}\Big)R_i +O_p(h_2^2 + h_1).\]
Recall that  $\sigma^2=\EE(Y(1)-m_1(X, Z))^2$. Since $h\asymp h_1\asymp h_2$, after some tedious calculation we can show that 
\begin{equation*}
\begin{split}
\frac{1}{n} \mathbb{E}\left(\mathbb{E}(\phi_{ij}|i)^2 \right)&  = \frac{\sigma^2}{n} \int_{z_i}\int_{x_i}\frac{T_if_Z(z_i)^2}{f_{X, Z}(c, z_i)^2}K\Big( \frac{x_i-c}{h}\Big)^2f_{X, Z}(x_i, z_i)dx_idz_i \\
& +\frac{1}{n} \int_{z_i}\int_{x_i}\frac{T_if_Z(z_i)^2}{f_{X, Z}(c, z_i)^2}K\Big( \frac{x_i-c}{h}\Big)^2f_{X, Z}(x_i, z_i) (m_1(x_i, z_i) - \alpha_1)^2dx_idz_i +O\left(\frac{h^2}{n}\right)\\
& = \frac{\sigma^2}{n}h\kappa_{20} \int_{z_i}\frac{f_Z(z_i)^2}{f_{X, Z}(c, z_i)} dz_i + \frac{h}{n} \kappa_{20}\int_{z_i}\frac{f_Z(z_i)^2}{f_{X, Z}(c, z_i)}d_1(c^+, z_i)^2 dz_i+O\left(\frac{h^2}{n}\right)\\
& = \kappa_{20}\frac{h}{n}\sigma^2 \mathbb{E}_Z\left(\frac{f_Z(z_i)}{f_{X, Z}(c, z_i)} \right) + \kappa_{20}\frac{h}{n} \mathbb{E}_Z\left(\frac{f_Z(z_i)}{f_{X, Z}(c, z_i)}d_1(c^+, z_i)^2 \right)+O\left(\frac{h^2}{n}\right),
\end{split}
\end{equation*}
\begin{equation*}
\begin{split}
\frac{1}{n} \mathbb{E}\left(\mathbb{E}(\phi_{ji}|i)^2 \right)&  = \frac{1}{n}\frac{h^2}{h_1^2} \int_{z_i}\int_{x_i}T_iK(\frac{c-x_i}{h_1})^2  \frac{f_Z(z_i)^2}{f_{X, Z}(c^+,z_i)^2}d_1(c^+, z_i)^2f_{X, Z}(x_i, z_i)dx_idz_i + O\left(\frac{h^2}{n}\right)\\
& = \frac{h^2}{h_1} \frac{\kappa_{20}}{n}\int_{z_i}\frac{f_Z(z_i)^2}{f_{X, Z}(c, z_i)}d_1(c^+, z_i)^2 dz_i+O\left(\frac{h^2}{n}\right)\\
& = \frac{h^2}{h_1} \frac{\kappa_{20}}{n}\mathbb{E}_Z\left(\frac{f_Z(z_i)}{f_{X, Z}(c, z_i)}d_1(c^+, z_i)^2 \right)+O\left(\frac{h^{2}}{n}\right),
\end{split}
\end{equation*}
\begin{equation*}
\begin{split}
\frac{2}{n} \mathbb{E}\left(\mathbb{E}(\phi_{ij}|i) \mathbb{E}(\phi_{ji}|i) \right)&  = -2 \cdot \frac{h}{h_1n} \int_{z_i}\int_{x_i}T_iK(\frac{c-x_j}{h_1})^2  \frac{f_Z(z_j)^2}{f_{X, Z}(c,z_j)^2}d_1(c^+, z_j)^2f_{X, Z}(x_i, z_i)dx_idz_i+O\left(\frac{h^{2}}{n}\right) \\
& = -2 \cdot\kappa_{20}\frac{h}{n} \int_{z_i}\frac{f_Z(z_i)^2}{f_{X, Z}(c, z_i)}d_1(c^+, z_j)^2 dz_i + O\left(\frac{h^{2}}{n}\right)\\
& =  -2 \cdot\kappa_{20}\frac{h}{n} \mathbb{E}_Z\left(\frac{f_Z(z_i)}{f_{X, Z}(c|z_i)}d_1(c^+, z_i)^2 \right) + O\left(\frac{h^{2}}{n}\right).
\end{split}
\end{equation*}
Thus, choosing $h_1 = h$, we have
$$
\xi^2=h\kappa_{20}\underbrace{\sigma^2 \mathbb{E}_Z\left( \frac{f_Z(z_i)}{f_{X, Z}(c, z_i)}\right)}_{\omega}  + O(h^2).
$$
Combining (\ref{equstat}) and (\ref{eqclt}), 
$$
\frac{n^{1/2}}{h^{1/2}}\Big(\frac{1}{n(n-1)}\sum_{i<j}(\phi_{i,j}+\phi_{j,i})-\delta/2\Big)+O\Big(\frac{n^{1/2}}{h^{1/2}}(h^3 +h^2h_1)\Big)\rightarrow_d N(0,\xi^2/h).
$$
Finally, note that in (\ref{eqgradient}), 
$$
\frac{1}{n^2}\sum_{i=1}^n\phi_{i,i}\lesssim \frac{1}{n}\EE(\phi_{i,j}) = O_p\left( \frac{h^2}{n}\right),
$$
and therefore we obtain that
$$
\frac{1}{(nh)^{1/2}}\sum_{i=1}^n \frac{T_i \hat{f}_Z(z)}{\hat{f}_{X, Z}(c,z)}  K\Big( \frac{x_i-c}{h}\Big)R_i+\phi\rightarrow_d N(\frac{n^{1/2}\delta}{2h^{1/2}},\xi^2/h),
$$
where  
$$
\phi=O_p\Big(\frac{n^{1/2}}{h^{1/2}}(h^3+r^2+rs)\Big).
$$
Following the similar argument, we can show the joint convergence
$$
\frac{1}{(nh)^{1/2}}\sum_{i=1}^n \frac{T_i \hat{f}_Z(z)}{\hat{f}_{X, Z}(c,z)}  K\Big( \frac{x_i-c}{h}\Big) R_i [1, (x_i-c)]^T\rightarrow_d N\left(\frac{n^{1/2}}{h^{1/2}} \left(\begin{matrix}
O_p(h^3) \\O_p(h^4)
\end{matrix}\right), \omega\left(\begin{matrix}
 \kappa_{20} & h\kappa_{21} \\
  h\kappa_{21} & h^2\kappa_{22} \\
 \end{matrix}\right)
\right).
$$
By the least squared formulation, the estimator $\hat\alpha_1$ satisfies
$$
\sqrt{nh}(\hat{\alpha}_1 - \alpha_1)=-e_1^T\left(\begin{matrix}
\hat C_n/(nh) & -\hat B_n/(nh)  \\ -\hat B_n/(nh)  & \hat A_n/(nh) 
\end{matrix}\right)^{-1} \frac{1}{(nh)^{1/2}}\sum_{i=1}^n \frac{T_i \hat{f}_Z(z)}{\hat{f}_{X, Z(1)}(c,z)}  K\Big( \frac{x_i-c}{h}\Big) R_i [1, (x_i-c)]^T,
$$
where $e_1^T=(1,0)$.
From lemma \ref{lemABCD} and the matrix inversion formula, 
$$
\left(\begin{matrix}
\hat C_n/(nh) & -\hat B_n/(nh)  \\ -\hat B_n/(nh)  & \hat A_n/(nh) 
\end{matrix}\right)^{-1}=\frac{1}{\hat D_n/(nh)^2} \left(\begin{matrix}
\hat A_n/(nh) & \hat B_n/(nh) \\ \hat B_n/(nh) & \hat C_n/(nh)
\end{matrix}\right)\rightarrow_p \frac{1}{h^2(\kappa_2/2 -\kappa_1^2)}\left(\begin{matrix}
h^2 \kappa_2 & -h \kappa_1 \\ -h \kappa_1 & \frac{1}{2}
\end{matrix}\right).
$$
Thus, the asymptotic bias of $\sqrt{nh}(\hat{\alpha}_1 - \alpha_1)$ is
$$
\frac{-e_1^T}{h^2(\kappa_2/2 -\kappa_1^2)}\left(\begin{matrix}
h^2 \kappa_2 & -h \kappa_1 \\ -h \kappa_1 & \frac{1}{2}
\end{matrix}\right)\frac{n^{1/2}}{h^{1/2}} \left(\begin{matrix}
 O_p(h^3) \\ O_p(h^4)
\end{matrix}\right)=O(n^{1/2}h^{5/2})=o(1).
$$
Similarly, the asymptotic variance  of $\sqrt{nh}(\hat{\alpha}_1 - \alpha_1)$ is
\begin{align*}
&\frac{\omega }{h^4(\kappa_2/2 -\kappa_1^2)^2}e_1^T\left(\begin{matrix}
h^2 \kappa_2 & -h \kappa_1 \\ -h \kappa_1 & \frac{1}{2}
&\end{matrix}\right)\left(\begin{matrix}
 \kappa_{20} & h\kappa_{21} \\
  h\kappa_{21} & h^2\kappa_{22} \\
 \end{matrix}\right)
\left(\begin{matrix}
h^2 \kappa_2 & -h \kappa_1 \\ -h \kappa_1 & \frac{1}{2}
\end{matrix}\right)e_1\\
&=\sigma^2\int  \frac{f_Z(z_i)^2}{f_{X, Z}(c, z_i)}dz_i\cdot C_v,
\end{align*}
where 
\[C_v = \frac{\kappa_2^2\kappa_{20}  + \kappa_1^2\kappa_{22} - 2\kappa_1\kappa_2\kappa_{21}}{\left(\frac{1}{2}\kappa_2-\kappa_1^2\right)^2}.\]
This completes the proof.

\begin{lemma}\label{phiijrd}
Recall that 
\begin{equation*}
\begin{split}
\phi_{i,j}^{rd}  &= \frac{T_i}{f_{X, Z}(c, z_i)} (h_2)^{-1}K(\frac{z_i-z_j}{h_2}) K\Big( \frac{x_i-c}{h}\Big)R_i\\
 &- \frac{T_if_Z(z_i)}{f_{X, Z}(c, z_i)^2}(h_1^2)^{-1}K_1(\frac{c-x_j}{h_1}, \frac{z_i-z_j}{h_1})K\Big( \frac{x_i-c}{h}\Big)R_i + \frac{T_if_Z(z_i)}{f_{X, Z}(c, z_i)}K\Big( \frac{x_i-c}{h}\Big)R_i.
\end{split}
\end{equation*}
Under the same condition as in Theorem \ref{thm2}, and choose $h_1 \asymp h$ 
\begin{align*}
&\mathbb{E}(\phi_{i,j}^{rd}|j) = \frac{h}{2} d_1(c^+, z_j) -\frac{h}{2h_1} K(\frac{c-x_j}{h_1})  \frac{f_Z(z_j)}{f_{X, Z}(c,z_j)}d_1(c^+, z_j) + O_p(h^2+hh_1+hh_2),\\
&\mathbb{E}(\phi_{i,j}^{rd}|i) = \frac{T_if_Z(z_i)}{f_{X, Z}(c, z_i)}K\Big( \frac{x_i-c}{h}\Big)R_i +O_p(h_2^2 + h_1^2), \\
&\mathbb{E}(\phi_{i,j}^{rd}) = O_p(h^3),
\end{align*}
where $O_p$ terms are valid uniformly over $i$ or $j$.
\end{lemma}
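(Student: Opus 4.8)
The plan is to reproduce, step for step, the three conditional-expectation computations in the proof of Lemma~\ref{phiij}, adjusting for the pre-treatment structure $Z=Z(1)=Z(0)$. Two features make the $rd$ case cleaner than the sharp case. First, when $Z$ is pre-treatment the joint density $f_{X,Z}$ is smooth from both sides at $c$ (it lies in $\cF^-\cap\cF^+$), so all one-sided limits collapse: $f_{X,Z}(c^+,\cdot)=f_{X,Z}(c^-,\cdot)=f_{X,Z}(c,\cdot)$. Second, the density estimator $\hat f_{X,Z}(c,z)$ of~(\ref{eqkernelf00}) is an ordinary two-sided kernel estimator — no indicator $T_j$ and no factor $2$ — so by the symmetry of $K$ its bias is $O(h_1^2)$ rather than $O(h_1)$; this is exactly the ``$Z=Z(1)=Z(0)$'' line of Lemma~\ref{lemkernel}, and it is why the $K_1$-summand of $\phi_{i,j}^{rd}$ carries coefficient $-1$ in place of the sharp-case $-2T_j$.

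First I would prove the conditioning-on-$j$ identity. Split $\mathbb{E}(\phi_{i,j}^{rd}|j)$ into the three conditional expectations coming from the three summands of $\phi_{i,j}^{rd}$. For the first summand, condition on $z_j$, write the inner expectation over $i$ as an integral against $(h_2)^{-1}K((z_i-z_j)/h_2)$, and apply Lemma~\ref{lemM} (with $f_{X,Z(1)}$ replaced throughout by $f_{X,Z}$); smoothing over the $h_2$-kernel leaves $\frac{h}{2}d_1(c^+,z_j)$ up to $O_p(h^2+hh_2)$. For the $K_1$-summand, condition on $(z_j,x_j)$, use the product form $K_1(u,v)=K(u)K(v)$ to extract the boundary factor $K((c-x_j)/h_1)$, and collapse the residual $h_1$-smoothing in $z$ via Lemma~\ref{lemM}; because the coefficient is now $-1$ rather than $-2T_j$, this produces $-\frac{h}{2h_1}K((c-x_j)/h_1)\frac{f_Z(z_j)}{f_{X,Z}(c,z_j)}d_1(c^+,z_j)$ (the sharp-case prefactor $\frac{h}{h_1}$ is halved and the $T_j$ disappears), up to $O_p(hh_1+h^2)$. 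The third summand does not involve $j$, so its conditional expectation equals its unconditional expectation, which by Lemma~\ref{lemM} is $\frac{h}{2}\int d_1(c^+,z)f_Z(z)\,dz+O_p(h^2)=O_p(h^2)$ since $\alpha_1=\int m_1(c^+,z)f_Z(z)\,dz$. Adding the three pieces gives the first identity in the statement.

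Next I would handle the conditioning-on-$i$ identity and then the unconditional one. For $\mathbb{E}(\phi_{i,j}^{rd}|i)$, split into three parts again: for the first summand use $\mathbb{E}((h_2)^{-1}K((z_i-z_j)/h_2)|z_i)=f_Z(z_i)+O_p(h_2^2)$ from Lemma~\ref{lemkernel}; for the $K_1$-summand use the pre-treatment bound $\mathbb{E}((h_1^2)^{-1}K_1((c-x_j)/h_1,(z_i-z_j)/h_1)|z_i,x_i)=f_{X,Z}(c,z_i)+O_p(h_1^2)$, where the absence of a first-order-in-$h_1$ term is precisely the boundary-bias improvement; the third part is exact. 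The leading ``$f_Z(z_i)/f_{X,Z}(c,z_i)$'' terms of parts one, two and three combine as $1-1+1=1$, leaving $\mathbb{E}(\phi_{i,j}^{rd}|i)=\frac{T_if_Z(z_i)}{f_{X,Z}(c,z_i)}K((x_i-c)/h)R_i+O_p(h_1^2+h_2^2)$, the second identity. For the third identity, take the expectation of this last expression: the remainder contributes $O(h(h_1^2+h_2^2))=O(h^3)$ because $K((x_i-c)/h)$ restricts to an $O(h)$-mass set and $h_1\asymp h\asymp h_2$; and for the leading term write $\mathbb{E}[\frac{T_if_Z(z_i)}{f_{X,Z}(c,z_i)}K((x_i-c)/h)R_i]=h\int\frac{f_Z(z)}{f_{X,Z}(c,z)}M(c,z)\,dz$, insert the second-order expansion of $M$ from Lemma~\ref{lemM}, and invoke the definitions of $\alpha_1$ and $\beta_1$ to cancel the $O(h)$ and $O(h^2)$ contributions — the same cancellation as at the end of the proof of Lemma~\ref{phiij}, now valid under $h_1\asymp h$ because the $\hat f_{X,Z}$ bias is second-order — leaving $O_p(h^3)$.

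The main obstacle is the $K_1$-summand bookkeeping in the first step: one must keep the boundary factor $K((c-x_j)/h_1)$ and the $z$-direction $h_1$-smoothing against $M(c,\cdot)$ separate, and exploit the product-kernel factorization carefully so that the correct prefactor $\frac{h}{2h_1}$ (not the sharp-case $\frac{h}{h_1}$) emerges while the error terms stay at the advertised $O_p(h^2+hh_1+hh_2)$ order. Everything else is the same routine Taylor-expansion-plus-kernel-smoothing computation already carried out in Lemmas~\ref{lemM} and~\ref{phiij}, so I would present those steps by analogy rather than in full detail.
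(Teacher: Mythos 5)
Your proposal follows the paper's own proof essentially step for step: the same three-term decompositions conditional on $j$ (the paper's Parts EI--EIII, via Lemma \ref{lemM}) and conditional on $i$ (Parts FI--FIII, via the pre-treatment case of Lemma \ref{lemkernel}), the same observation that dropping the factor $2T_j$ turns the sharp-case prefactor $h/h_1$ into $h/(2h_1)$, and the same use of the two-sided estimator's $O(h_1^2)$ bias to permit $h_1\asymp h$, ending with the $\alpha_1,\beta_1$ cancellation for $\mathbb{E}(\phi^{rd}_{i,j})$. The only caveat is that your claim that this cancellation removes all $O(h^2)$ contributions is precisely the concluding step the paper itself takes (its own display for $\mathbb{E}(\text{Part.FIII})$ still carries a term $-h^2\kappa_1\alpha_1\int \partial_x f_{X,Z}(c,z)\,\frac{f_Z(z)}{f_{X,Z}(c,z)}\,dz$ before asserting $O_p(h^3)$), so your argument is neither more nor less complete than the paper's at that point.
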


\begin{proof}
\begin{equation*}
\begin{split}
\mathbb{E}(\phi_{i,j}^{rd}|j)  &=\underbrace{\mathbb{E}\left( \frac{T_i}{f_{X, Z}(c, z_i)}  (h_2)^{-1}K(\frac{z_i-z_j}{h_2})  K\Big( \frac{x_i-c}{h}\Big)R_i\Big|z_j\right)}_{\text{(Part.EI)}}\\
 &- \underbrace{\mathbb{E}\left(  \frac{T_if_Z(z_i)}{f_{X, Z}(c, z_i)^2} (h_1^2)^{-1}K_1(\frac{c-x_j}{h_1}, \frac{z_i-z_j}{h_1}) K\Big( \frac{x_i-c}{h}\Big)R_i\Big|z_j, x_j \right)}_{\text{(Part.EII)}}\\
 & + \underbrace{\mathbb{E}\left(\frac{T_if_Z(z_i)}{f_{X, Z}(c, z_i)}K\Big( \frac{x_i-c}{h}\Big)R_i\Big| z_j,x_j \right)}_{\text{(Part.EIII)}}.
\end{split}
\end{equation*}
From lemma \ref{lemM},
\begin{equation*}
\begin{split}
\text{(Part.EI)} & = \mathbb{E}\left( \frac{T_i}{f_{X, Z}(c, z_i)}  (h_2)^{-1}K(\frac{z_i-z_j}{h_2})  K\Big( \frac{x_i-c}{h}\Big)R_i\Big|z_j\right)\\
& = h\int_{z_i} \frac{1}{f_{X, Z}(c, z_i)}  (h_2)^{-1}K(\frac{z_i-z_j}{h_2}) M(c, z_i) dz_i\\
& = h\int_{z_i}   (h_2)^{-1}K(\frac{z_i-z_j}{h_2})\frac{1}{2} (m_1(c^+, z_i) - \alpha_1) dz_i + O_p(h^2)\\
& = \frac{h}{2} (m_1(c^+, z_j) - \alpha_1) + O_p(h^2 +h_2h),
\end{split}
\end{equation*}
where $O_p$ terms are valid uniformly over $j$.  Similarly, for Part.EII and Part.EIII, we can show that
\begin{equation*}
\begin{split}
&\text{(Part.EII)} = \mathbb{E}\left(  \frac{T_if_Z(z_i)}{f_{X, Z}(c, z_i)^2} (h_1^2)^{-1}K_1(\frac{c-x_j}{h_1}, \frac{z_i-z_j}{h_1}) K\Big( \frac{x_i-c}{h}\Big)R_i\Big|z_j, x_j \right) \\
& = h\int_{z_i} \frac{f_Z(z_i)}{f_{X, Z}(c, z_i)^2}  (h_1^2)^{-1}K_1(\frac{c-x_j}{h_1}, \frac{z_i-z_j}{h_1}) M(c, z_i)dz_i\\
& = \frac{h}{2}\int_{z_i}  \frac{f_Z(z_i)}{f_{X, Z}(c, z_i)} (h_1^2)^{-1}K_1(\frac{c-x_j}{h_1}, \frac{z_i-z_j}{h_1})  (m_1(c^+, z_i) - \alpha_1) dz_i + O_p(h^2)\\
& = \frac{h}{2h_1} K(\frac{c-x_j}{h_1})  \frac{f_Z(z_j)}{f_{X, Z}(c,z_j)} (m_1(c^+, z_j) - \alpha_1)+ O_p\left(hh_1\right)  + O_p(h^2),
\end{split}
\end{equation*}
\begin{equation*}
\begin{split}
\text{(Part.EIII)} & = \mathbb{E}\left(\frac{T_if_Z(z_i)}{f_{X, Z}(c, z_i)}K\Big( \frac{x_i-c}{h}\Big)R_i\Big| z_j,x_j \right)
 = \frac{h}{2}\int_{z_i}  (m_1(c^+, z_i) - \alpha_1)f_Z(z_i) dz_i + O_p(h^2) = O_p(h^2),
\end{split}
\end{equation*}
where the last step follows from the definition of $\alpha_1$. Define $d_1(x_i, z_i) = m_1(x_i, z_i) - \alpha_1$. Combining the Part EI, EII and EIII, we obtain 
\begin{equation*}
\begin{split}
\mathbb{E}(\phi_{ij}^{rd}|j)  &=  \frac{h}{2} d_1(c^+, z_j) -\frac{h}{2h_1} K(\frac{c-x_j}{h_1})  \frac{f_Z(z_j)}{f_{X, Z}(c,z_j)}d_1(c^+, z_j) + O_p(h^2+hh_1+hh_2).
\end{split}
\end{equation*}
Following the similar calculation, by lemma \ref{lemkernel} we have 
\begin{equation*}
\begin{split}
\mathbb{E}(\phi_{i,j}^{rd}|i)  &= \underbrace{\frac{T_i}{f_{X, Z}(c, z_i)} \mathbb{E}\left( (h_2)^{-1}K(\frac{z_i-z_j}{h_2}) \Big|z_i\right) K\Big( \frac{x_i-c}{h}\Big)R_i}_{\text{Part. FI}}\\
 &-\underbrace{ \frac{T_if_Z(z_i)}{f_{X, Z}(c, z_i)^2}\mathbb{E}\left(  (h_1^2)^{-1}K_1(\frac{c-x_j}{h_1}, \frac{z_i-z_j}{h_1}) \Big| z_i,x_i \right)K\Big( \frac{x_i-c}{h}\Big)R_i}_{\text{Part. FII}} \\
 &+ \underbrace{\frac{T_if_Z(z_i)}{f_{X, Z}(c, z_i)}K\Big( \frac{x_i-c}{h}\Big)R_i}_{\text{Part. FIII}}\\
&= \frac{T_if_Z(z_i)}{f_{X, Z}(c, z_i)}K\Big( \frac{x_i-c}{h}\Big)R_i +O_p(h_1^2 + h_2^2).
\end{split}
\end{equation*}
Finally, we calculate $\mathbb{E}(\phi_{i,j}^{rd})$ using Lemma \ref{lemM}, 
\begin{equation*}
\begin{split}
\mathbb{E}(\text{ Part.FIII} ) & = h\int   \frac{ f_Z(z_i)}{f_{X, Z}(c, z_i)} M(c, z_i)dz_i  \\
& = \frac{h}{2}\int  (m_1(c^+, z_i) - \alpha_1) f_Z(z_i)dz_i- h^2\kappa_1\beta_1- h^2\kappa_1 \alpha_1 \int \frac{\partial f_{X, Z}(c, z_i)}{\partial x_i} \frac{f_Z(z_i)}{f_{X, Z}(c, z_i)} dz_i\\
& + h^2 \kappa_1\int\left(\frac{\partial m_1(c^+, z_i)}{\partial x_i}f_Z(z_i)+m_1(c^+, z_i) \frac{\partial f_{X, Z}(c, z_i)}{\partial x_i}\frac{f_Z(z_i)}{f_{X, Z}(c, z_i)} \right)dz_i   +O_p( h^3)\\
& = -h^2\kappa_1 \alpha_1 \int \frac{\partial f_{X, Z}(c, z_i)}{\partial x_i} \frac{f_Z(z_i)}{f_{X, Z}(c, z_i)} dz_i +O_p(h^3)
\end{split}
\end{equation*}
where the $O_p$ terms are valid uniformly over $i$ and the last equality follows as 
 \[\alpha_1 = \int   m_1(c^+, z) f_{Z}(z)d{z},\]
and
\[\beta_1 = \int  \frac{\partial m_1(c^+, z_i)}{\partial x_i} f_Z(z_i) dz_i + \int m_1(c^+, z_i) \frac{\partial f_{X, Z}(c, z_i)}{\partial x_i} \frac{f_Z(z_i)}{f_{X, Z}(c, z_i)} dz_i.\]  
\\
From Lemma \ref{lemkernel}, some tedious calculation implies
\begin{equation*}
\begin{split}
\mathbb{E}(\text{Part.FI} ) & =  \mathbb{E}(\text{Part.FIII} )+O_p(hh_2^2),
\end{split}
\end{equation*}
and 
\begin{equation*}
\begin{split}
\mathbb{E}(\text{Part.FII} ) & =  \mathbb{E}(\text{Part.FIII} ) +O_p(hh_1^2).
\end{split}
\end{equation*}
Thus when $h_1 \asymp h$
\[\mathbb{E}(\phi_{i,j}) =  O_p(h^3).\]
This completes the proof. 
\end{proof}

\subsection{Proof of Theorem \ref{thm4}}

\begin{proof}
Assume the treatment assignment function for individual $i$ is $T_i(X_i)$ and only depends on $X_i$ directly. The assignment function is not known to the individual but they are aware of a jump in $T_i(X_i)$ if $X_i>c$. Thus a post selection may exist and we can write 

\[Z_i = (X_i>c)Z_i(1) + (X_i\leq c)Z_i(0)\]

\begin{equation*}
\begin{split}
\mathbb{E}(T_iY_i|X_i = c^+,Z_i=z) &= \mathbb{E}(T_iY_i|X_i = c^+,Z_i=z, \tau_\epsilon = co) \mathbb{P}(\tau_\epsilon = co | X_i = c^+, Z_i = z)\\
& + \mathbb{E}(T_iY_i|X_i = c^+,Z_i=z, \tau_\epsilon = at) \mathbb{P}(\tau_\epsilon = at | X_i = c^+, Z_i = z)\\
& + \mathbb{E}(T_iY_i|X_i = c^+,Z_i=z, \tau_\epsilon = nt) \mathbb{P}(\tau_\epsilon = nt | X_i = c^+, Z_i = z)\\
\end{split}
\end{equation*}

\begin{equation*}
\begin{split}
\mathbb{E}(T_iY_i|X_i = c^-,Z_i=z) &= \mathbb{E}(T_iY_i|X_i = c^-,Z_i=z, \tau_\epsilon = co) \mathbb{P}(\tau_\epsilon = co | X_i = c^-, Z_i = z)\\
& + \mathbb{E}(T_iY_i|X_i = c^-,Z_i=z, \tau_\epsilon = at) \mathbb{P}(\tau_\epsilon = at | X_i = c^-, Z_i = z)\\
& + \mathbb{E}(T_iY_i|X_i = c^-,Z_i=z, \tau_\epsilon = nt) \mathbb{P}(\tau_\epsilon = nt | X_i = c^-, Z_i = z)\\
\end{split}
\end{equation*}

Assume $\mathbb{P}(\tau_\epsilon  | X_i = c^+, Z_i(1) = z) = \mathbb{P}(\tau_\epsilon | X_i = c^-, Z_i(0) = z)$


\begin{equation*}
\begin{split}
& \mathbb{E}(T_iY_i|X_i = c^+,Z_i=z) - \mathbb{E}(T_iY_i|X_i = c^-,Z_i=z) \\
&  = \Big(\mathbb{E}(T_iY_i|X_i = c^+,Z_i=z, \tau_\epsilon = co) - \mathbb{E}(T_iY_i|X_i = c^-,Z_i=z, \tau_\epsilon = co)\Big) \mathbb{P}(\tau_\epsilon = co | X_i = c, Z_i = z)\\
& + \Big(\mathbb{E}(T_iY_i|X_i = c^+,Z_i=z, \tau_\epsilon = at) - \mathbb{E}(T_iY_i|X_i = c^-,Z_i=z, \tau_\epsilon = at)\Big) \mathbb{P}(\tau_\epsilon = at | X_i = c, Z_i = z)\\
& + \Big(\mathbb{E}(T_iY_i|X_i = c^+,Z_i=z, \tau_\epsilon = nt) - \mathbb{E}(T_iY_i|X_i = c^-,Z_i=z, \tau_\epsilon = nt)\Big) \mathbb{P}(\tau_\epsilon = nt | X_i = c, Z_i = z)\\
&  = \mathbb{E}(Y_i(1)|Z_i(1)=z, \tau_\epsilon = co)\cdot  \mathbb{P}(\tau_\epsilon = co | X_i = c, Z_i = z)\\
& + \Big(\mathbb{E}(Y_i(1)|Z_i(1)=z, \tau_\epsilon = at) - \mathbb{E}(Y_i(1)|Z_i(0)=z, \tau_\epsilon = at)\Big) \mathbb{P}(\tau_\epsilon = at | X_i = c, Z_i = z)
\end{split}
\end{equation*}

Similarly

\begin{equation*}
\begin{split}
& \mathbb{E}((1-T_i)Y_i|X_i = c^+,Z_i=z) - \mathbb{E}((1-T_i)Y_i|X_i = c^-,Z_i=z) \\
&  = \Big(\mathbb{E}((1-T_i)Y_i|X_i = c^+,Z_i=z, \tau_\epsilon = co) - \mathbb{E}((1-T_i)Y_i|X_i = c^-,Z_i=z, \tau_\epsilon = co)\Big) \mathbb{P}(\tau_\epsilon = co | X_i = c, Z_i = z)\\
& + \Big(\mathbb{E}((1-T_i)Y_i|X_i = c^+,Z_i=z, \tau_\epsilon = at) - \mathbb{E}((1-T_i)Y_i|X_i = c^-,Z_i=z, \tau_\epsilon = at)\Big) \mathbb{P}(\tau_\epsilon = at | X_i = c, Z_i = z)\\
& + \Big(\mathbb{E}((1-T_i)Y_i|X_i = c^+,Z_i=z, \tau_\epsilon = nt) - \mathbb{E}((1-T_i)Y_i|X_i = c^-,Z_i=z, \tau_\epsilon = nt)\Big) \mathbb{P}(\tau_\epsilon = nt | X_i = c, Z_i = z)\\
&  = - \mathbb{E}(Y_i(0)|Z_i(0)=z, \tau_\epsilon = co) \cdot \mathbb{P}(\tau_\epsilon = co | X_i = c, Z_i = z)\\
& + \Big(\mathbb{E}(Y_i(0)|Z_i(1)=z, \tau_\epsilon = nt) - \mathbb{E}(Y_i(0)|Z_i(0)=z, \tau_\epsilon = nt)\Big) \mathbb{P}(\tau_\epsilon = nt | X_i = c, Z_i = z)\\
\end{split}
\end{equation*}

Assume $\mathbb{E}(Y_i(1)|Z_i(c^+)=z, \tau_\epsilon = at) = \mathbb{E}(Y_i(1)|Z_i(c^-)=z, \tau_\epsilon = at)$ and $\mathbb{E}(Y_i(0)|Z_i(c^+)=z, \tau_\epsilon = nt) = \mathbb{E}(Y_i(0)|Z_i(c^-)=z, \tau_\epsilon = nt)$, then

\begin{equation}\label{fuzzyeq1}
\begin{split}
&\mathbb{E}(Y_i|X_i = c^+,Z_i=z) - \mathbb{E}(Y_i|X_i = c^-,Z_i=z)\\
& = \underbrace{\Big(\mathbb{E}(Y_i(1)|Z_i(1)=z, \tau_\epsilon = co) - \mathbb{E}(Y_i(0)|Z_i(0)=z, \tau_\epsilon = co)\Big)}_{:=\Delta} \cdot \mathbb{P}(\tau_\epsilon = co | X_i = c, Z_i = z)
\end{split}
\end{equation}

Apply the same decomposition to $T_i$ 
\begin{equation}\label{fuzzyeq2}
\begin{split}
& \mathbb{E}(T_i|X_i = c^+,Z_i=z) - \mathbb{E}(T_i|X_i = c^-,Z_i=z) \\
&  = \Big(\mathbb{E}(T_i|X_i = c^+,Z_i=z, \tau_\epsilon = co) - \mathbb{E}(T_i|X_i = c^-,Z_i=z, \tau_\epsilon = co)\Big) \mathbb{P}(\tau_\epsilon = co | X_i = c, Z_i = z)\\
& + \Big(\mathbb{E}(T_i|X_i = c^+,Z_i=z, \tau_\epsilon = at) - \mathbb{E}(T_i|X_i = c^-,Z_i=z, \tau_\epsilon = at)\Big) \mathbb{P}(\tau_\epsilon = at | X_i = c, Z_i = z)\\
& + \Big(\mathbb{E}(T_i|X_i = c^+,Z_i=z, \tau_\epsilon = nt) - \mathbb{E}(T_i|X_i = c^-,Z_i=z, \tau_\epsilon = nt)\Big) \mathbb{P}(\tau_\epsilon = nt | X_i = c, Z_i = z)\\
&  =  \mathbb{P}(\tau_\epsilon = co | X_i = c, Z_i = z)\\
\end{split}
\end{equation}

First notice that when integrating both sides with $f_{Z(0)|X}(z|c)$, 
\begin{align*}
&\mathbb{E}_{Z(0)}\left(\mathbb{E}(Y_i|X_i = c^+,Z_i=z) - \mathbb{E}(Y_i|X_i = c^-,Z_i=z)\Big|X_i=c^-\right) \\
&= \int \Delta\cdot \mathbb{P}(\tau_\epsilon = co | X_i = c, Z_i = z)f_{Z(0)|X}(z|c^-)dz\\
& = \mathbb{P}(\tau_\epsilon = co | X_i = c) \int \Delta\cdot f_{Z(0)|X,\tau_\epsilon = co }(z|c^-)dz\\
& = \mathbb{P}(\tau_\epsilon = co | X_i = c) \cdot \mathbb{E}_{Z(0)}\Big(\mathbb{E}(Y_i(1)|Z_i(c^+)=z, \tau_\epsilon = co) - \mathbb{E}(Y_i(0)|Z_i(c^-)=z, \tau_\epsilon = co)\Big|  X_i = c^-, \tau_\epsilon = co\Big)
\end{align*}

And devide \eqref{fuzzyeq1} with \eqref{fuzzyeq2}, we obtain
\begin{align*}
&\mathbb{E}_{Z(0)}\Big(\mathbb{E}(Y_i(1)|Z_i(c^+)=z, \tau_\epsilon = co) - \mathbb{E}(Y_i(0)|Z_i(c^-)=z, \tau_\epsilon = co)\Big|  X_i = c^-, \tau_\epsilon = co\Big)\\
& = \frac{\mathbb{E}_{Z(0)}\left(\mathbb{E}(Y_i|X_i = c^+,Z_i=z) - \mathbb{E}(Y_i|X_i = c^-,Z_i=z)\Big| X_i = c^-\right)}{\mathbb{E}_{Z(0)}\left(\mathbb{E}(T_i|X_i = c^+,Z_i=z) - \mathbb{E}(T_i|X_i = c^-,Z_i=z)\Big| X_i = c^-\right)}
\end{align*}

Similarly, when integrating both sides with $\frac{(f_{Z(0)|X}(z_i, c^-) + f_{Z(1)|X}(z_i, c^+))}{2}$

\begin{align*}
&\int \big(\mathbb{E}(Y_i|X_i = c^+,Z_i=z) - \mathbb{E}(Y_i|X_i = c^-,Z_i=z)\Big)  \frac{f_{Z(0)|X}(z_i|c^-) + f_{Z(1)|X}(z_i|c^+)}{2} dz \\
&= \int \Delta\cdot \mathbb{P}(\tau_\epsilon = co | X_i = c, Z_i = z)\frac{f_{Z(0)|X}(z_i|c^-) + f_{Z(1)|X}(z_i|c^+)}{2} dz\\
& = \mathbb{P}(\tau_\epsilon = co | X_i = c) \int \Delta\cdot \frac{f_{Z(0)|X, \tau_\epsilon = co}(z_i|c^-) + f_{Z(1)|X, \tau_\epsilon = co}(z_i|c^+)}{2} dz\\
\end{align*}

And we obtain the estimand
\begin{align*}
&\lim_{\epsilon \rightarrow 0} \mathbb{E}\Big(\mathbb{E}(Y_i(1)|Z_i(1)=z, \tau_\epsilon = co) - \mathbb{E}(Y_i(0)|Z_i(0)=z, \tau_\epsilon = co)\Big|X \in \mathcal{N}_\epsilon,  \tau_\epsilon = co\Big)\\
& = \frac{\int \left(\mathbb{E}(Y_i|X_i = c^+,Z_i=z) - \mathbb{E}(Y_i|X_i = c^-,Z_i=z)\right) (f_{Z(0)|X}(z_i|c^-) + f_{Z(1)|X}(z_i|c^+)) dz }{\int \left(\mathbb{E}(T_i|X_i = c^+,Z_i=z) - \mathbb{E}(T_i|X_i = c^-,Z_i=z)\right) (f_{Z(0)|X}(z_i|c^-) + f_{Z(1)|X}(z_i|c^+)) dz}
\end{align*}

Lastly under CIA, one can show that 

\begin{equation}
\begin{split}
&\mathbb{E}(Y_i|X_i = c^+,Z_i=z) - \mathbb{E}(Y_i|X_i = c^-,Z_i=z)\\
& = \underbrace{\Big(\mathbb{E}(Y_i(1)|Z_i(1)=z, \tau_\epsilon = co) - \mathbb{E}(Y_i(0)|Z_i(0)=z, \tau_\epsilon = co)\Big)}_{:=\Delta} \cdot \mathbb{P}(\tau_\epsilon = co |  Z_i = z)
\end{split}
\end{equation}

and

\begin{equation}
\begin{split}
\mathbb{E}(T_i|X_i = c^+,Z_i=z) - \mathbb{E}(T_i|X_i = c^-,Z_i=z)  =  \mathbb{P}(\tau_\epsilon = co |Z_i = z)\\
\end{split}
\end{equation}

when we integrating with both sides with $f_{Z}(z_i)$, 

\begin{align*}
&\int \big(\mathbb{E}(Y_i|X_i = c^+,Z_i=z) - \mathbb{E}(Y_i|X_i = c^-,Z_i=z)\Big)  f_{Z}(z_i) dz \\
&= \int \Delta\cdot \mathbb{P}(\tau_\epsilon = co |  Z_i = z)f_{Z}(z_i) dz\\
& = \mathbb{P}(\tau_\epsilon = co) \int \Delta\cdot f_{Z}(z_i) dz\\
\end{align*}

Thus

\begin{align*}
&\mathbb{E}_{Z}\Big(\mathbb{E}(Y_i(1)|Z_i(1)=z, \tau_\epsilon = co) - \mathbb{E}(Y_i(0)|Z_i(0)=z, \tau_\epsilon = co)\Big)\\
& = \frac{\mathbb{E}_{Z}\left(\mathbb{E}(Y_i|X_i = c^+,Z_i=z) - \mathbb{E}(Y_i|X_i = c^-,Z_i=z)\right)}{\mathbb{E}_{Z}\left(\mathbb{E}(T_i|X_i = c^+,Z_i=z) - \mathbb{E}(T_i|X_i = c^-,Z_i=z)\right)}
\end{align*}

\end{proof}

\begin{lemma}\label{lemmahoff}
When either $h\asymp h_1 \asymp h_2$ or $h\asymp \sqrt{h_1} \asymp h_2$, the condition in Theorem 12.3 in \cite{van2000asymptotic} holds, such that $\mathbb{E}(\phi_{ij}^2)<\infty$ 
\end{lemma}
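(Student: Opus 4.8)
The plan is to verify the square-integrability hypothesis of Theorem 12.3 in \cite{van2000asymptotic} for the permutation-symmetrized kernel $\tfrac12(\phi_{ij}+\phi_{ji})$ of the $U$-statistic appearing in \eqref{eqgradient}, and, at the same time, to record the order of $\mathbb{E}(\phi_{ij}^2)$ in $(h,h_1,h_2)$ precisely enough that the completely degenerate component of the Hoeffding decomposition is negligible after the $n^{1/2}h^{-1/2}$ rescaling used in \eqref{equstat}. First I would split $\phi_{ij}=\phi^{(1)}_{ij}-2\phi^{(2)}_{ij}+\phi^{(3)}_{ij}$ according to its three summands and bound each $\mathbb{E}((\phi^{(k)}_{ij})^2)$ separately, using only the following facts from Assumptions \ref{ContinuityofConditionalDensity}--\ref{SharpRDdesign}: $K$ and the product kernel $K_1$ are bounded with compact support; $f_{X,Z(1)}$ is bounded away from $0$ (so $1/f_{X,Z(1)}^2$ is bounded) and, being in $\cF^+$, bounded above on a right neighborhood of $c$; $f_Z$ is bounded on its compact support; and $\mathbb{E}(R_i^2\mid x_i,z_i)$ is uniformly bounded on the support of $u\mapsto K((u-c)/h)$, since $m_1\in\cF^+$ is continuous hence bounded on $(c-\delta,c+\delta)\times\cZ$, the constants $\alpha_1,\beta_1$ are finite, and $\sigma_1^2<\infty$.

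For the bounds I would carry out the standard changes of variables. Conditioning on $(x_i,z_i,R_i)$ and integrating out the $j$-variables gives $\mathbb{E}[h_2^{-2}K((z_i-z_j)/h_2)^2\mid z_i]=O(h_2^{-1})$ and $\mathbb{E}[h_1^{-4}K_1((c-x_j)/h_1,(z_i-z_j)/h_1)^2T_j\mid z_i]=O(h_1^{-2})$, while integrating out the $i$-variables gives $\mathbb{E}[T_iK((x_i-c)/h)^2R_i^2\,(\text{bdd})]=O(h)$. Multiplying these yields $\mathbb{E}((\phi^{(1)}_{ij})^2)=O(h/h_2)$, $\mathbb{E}((\phi^{(2)}_{ij})^2)=O(h/h_1^2)$, and $\mathbb{E}((\phi^{(3)}_{ij})^2)=O(h)$ (the last because $\phi^{(3)}_{ij}$ does not depend on $j$), whence $\mathbb{E}(\phi_{ij}^2)=O(h/h_2+h/h_1^2+h)<\infty$ for every fixed $n$ --- this is precisely the hypothesis of Theorem 12.3. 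Substituting the couplings: in the regime $h\asymp h_1\asymp h_2$ one gets $\mathbb{E}(\phi_{ij}^2)=O(h^{-1})$, and in the regime $h\asymp\sqrt{h_1}\asymp h_2$ (so $h_1\asymp h^2$) one gets $\mathbb{E}(\phi_{ij}^2)=O(h^{-3})$. Consequently the variance of the doubly-degenerate part of the Hoeffding decomposition is $O(n^{-2}\mathbb{E}(\phi_{ij}^2))$, and after multiplication by $n/h$ it is $O((nh_1^2)^{-1})$, which tends to $0$ under the bandwidth and sample-size restrictions adopted for Theorem \ref{thm2} ($h^3n^{1/2}=o(1)$ together with $nh^2\to\infty$) and for Corollary \ref{corrd} ($h^5n=o(1)$ together with $nh\to\infty$); this is exactly what legitimizes the projection approximation \eqref{equstat}.

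The main obstacle is not establishing $\mathbb{E}(\phi_{ij}^2)<\infty$, which is immediate once the kernels are bounded and $R_i$ is square integrable, but tracking the exact polynomial order in $(h,h_1,h_2)$ of each term and confirming that in both couplings it stays below the variance scale $nh$ of the H\'ajek projection. The delicate term is the middle one, $\phi^{(2)}_{ij}$, whose $L_2$ norm blows up like $h_1^{-1}$ because it carries the boundary density estimator $h_1^{-2}K_1$; one has to check that the coupling $h_1\asymp h^2$ does not push $\mathbb{E}((\phi^{(2)}_{ij})^2)$ past $nh$, and this is where $nh_1^2\to\infty$ (hence the rate conditions of Theorem \ref{thm2} and Corollary \ref{corrd}) is genuinely needed. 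A secondary bookkeeping point is that in the symmetrized kernel the three summands interact, so one should bound $\mathbb{E}((\phi_{ij}+\phi_{ji})^2)\le 4\max_k\mathbb{E}((\phi^{(k)}_{ij})^2)+(\text{cross terms})$ and note the cross terms are of the same or smaller order by Cauchy--Schwarz, so that the above per-term estimates suffice.
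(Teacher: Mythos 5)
Your argument is correct and takes essentially the same route as the paper's proof: expand $\mathbb{E}(\phi_{ij}^2)$ over the three kernel components of $\phi_{ij}$, bound each second moment by conditioning and a change of variables, using the boundedness of $K$ and $K_1$, the boundedness of $f_Z$ and of $\mathbb{E}(R_i^2\mid x_i,z_i)$ near the cutoff, and the lower bound on $f_{X,Z(1)}$ (the paper evaluates all six square and cross terms explicitly via Lemma \ref{lemAM}, whereas you dispatch the cross terms by Cauchy--Schwarz, which suffices for finiteness). The only caveat concerns your closing remark, which goes beyond the lemma: in the regime $h_1\asymp h^2$ the rescaled degenerate remainder is of order $(nh_1^2)^{-1}\asymp(nh^4)^{-1}$, and $nh^4\to\infty$ is not implied by $nh^2\to\infty$ together with $h^3n^{1/2}=o(1)$ (it does hold at the optimal $h\asymp n^{-1/6}$); this does not affect the conclusion $\mathbb{E}(\phi_{ij}^2)<\infty$ itself.
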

\begin{proof}
$\mathbb{E}(\phi_{ij}^2)$ is a linear combination of the following quantities:
\begin{align}
 &\mathbb{E}\left( \frac{1}{f_{X, Z(1)}(c^+, z_i)^2} (h_2)^{-2}K^2(\frac{z_i-z_j}{h_2}) K^2\Big( \frac{x_i-c}{h}\Big)R_i^2 \right)\label{eqA1}\\[10pt]
 &\mathbb{E}\left( \frac{f_Z(z_i)^2}{f_{X, Z(1)}(c^+, z_i)^4}(h_1)^{-4}K_1^2(\frac{c-x_j}{h_1}, \frac{z_i-z_j}{h_1})K^2\Big( \frac{x_i-c}{h}\Big)R_i^2 \right)\label{eqA2}\\[10pt]
&\mathbb{E}\left( \frac{f_Z^2(z_i)}{f_{X, Z(1)}^2(c^+, z_i)}K^2\Big( \frac{x_i-c}{h}\Big)R_i^2\right)\label{eqA3}\\[10pt]
& \mathbb{E}\left( \frac{f_Z(z_i)}{f_{X, Z(1)}^3(c^+, z_i)} (h_2)^{-1}(h_1)^{-2}K_1(\frac{c-x_j}{h_1}, \frac{z_i-z_j}{h_1})K(\frac{z_i-z_j}{h_2}) K^2\Big( \frac{x_i-c}{h}\Big)R_i^2\right)\label{eqA4} \\[10pt]
& \mathbb{E}\left( \frac{f_Z(z_i)}{f_{X, Z(1)}^2(c^+, z_i)} (h_2)^{-1}K(\frac{z_i-z_j}{h_2}) K^2\Big( \frac{x_i-c}{h}\Big)R_i^2\right)\label{eqA5} \\[10pt]
&\mathbb{E}\left(\frac{f_Z^2(z_i)}{f_{X, Z(1)}(c^+, z_i)^3}(h_1^2)^{-1}K_1(\frac{c-x_j}{h_1}, \frac{z_i-z_j}{h_1})K^2\Big( \frac{x_i-c}{h}\Big)R_i^2\right)\label{eqA6}\\[10pt] \nonumber
\end{align}

\begin{itemize}

\item From Lemma \ref{lemAM}, equation \ref{eqA1} can be written as
\begin{align*}
&\mathbb{E}\left(\frac{1}{f_{X, Z(1)}(c^+, z_i)^2} (h_2)^{-2}K^2(\frac{z_i-z_j}{h_2}) K^2\Big( \frac{x_i-c}{h}\Big)R_i^2 \Big|z_j\right)\\
& = \int_{z_i}  \frac{1}{f_{X, Z(1)}(c^+, z_i)^2} (h_2)^{-1}K^2(\frac{z_i-z_j}{h_2})  J(x_i, z_i)  dz_i \qquad \text{($z_j \independent (x_i, z_i)$)}\\
& = \kappa_{20}\int_{z_i}  \frac{1}{f_{X, Z(1)}(c^+, z_i)} (h_2)^{-1}K^2(\frac{z_i-z_j}{h_2})  (\sigma^2 + d_1(c^+, z_i)^2)     dz_i  +O_p(h)\\
& =   \frac{\kappa_{20}^2}{f_{X, Z(1)}(c^+, z_j)}  (\sigma^2 + d_1(c^+, z_j)^2)   +O_p(h)
\end{align*}

Thus
\begin{align*}
&\mathbb{E}\left(\frac{1}{f_{X, Z(1)}(c^+, z_i)^2} (h_2)^{-2}K^2(\frac{z_i-z_j}{h_2}) K^2\Big( \frac{x_i-c}{h}\Big)R_i^2 \right)\\
& = \kappa_{20}^2\left(\sigma^2 \int \frac{f_Z(z_j)}{f_{X, Z(1)}(c^+, z_j)}dz_j + \int d_1(c^+, z_j)^2 \frac{f_Z(z_j)}{f_{X, Z(1)}(c^+, z_j)}dz_j\right) +O_p(h) < \infty
\end{align*}

\item When $h\asymp h_1 \asymp h_2$, by law of iterated expectation and Lemma \ref{lemAM}, equation \ref{eqA2} can be written as

\begin{align*}
&\mathbb{E}\left( \frac{f_Z(z_i)^2}{f_{X, Z(1)}(c^+, z_i)^4}(h_1)^{-4}K_1^2(\frac{c-x_j}{h_1}, \frac{z_i-z_j}{h_1})K^2\Big( \frac{x_i-c}{h}\Big)R_i^2 \Big| x_j, z_j\right)\\
& = \int_{z_i} \frac{f_Z(z_i)^2}{f_{X, Z(1)}(c^+, z_i)^4}(h_1)^{-3}K_1^2(\frac{c-x_j}{h_1}, \frac{z_i-z_j}{h_1}) J(c, z_i) dz_i\\
& =  \kappa_{20}\int_{z_i} \frac{f_Z(z_i)^2}{f_{X, Z(1)}(c^+, z_i)^3}(h_1)^{-3}K_1^2(\frac{c-x_j}{h_1}, \frac{z_i-z_j}{h_1}) (\sigma^2 + d_1(c^+, z_i)^2)  dz_i +O_p(h)\\
& = \kappa_{20}^2 \frac{f_Z(z_j)^2}{f_{X, Z(1)}(c^+, z_j)^3} (\sigma^2 + d_1(c^+, z_j)^2)(h_1)^{-1}K^2(\frac{c-x_j}{h_1})  +O_p(h) 
\end{align*}

Thus 
\begin{align*}
&\mathbb{E}\left( \frac{f_Z(z_i)^2}{f_{X, Z(1)}(c^+, z_i)^4}(h_1)^{-4}K_1^2(\frac{c-x_j}{h_1}, \frac{z_i-z_j}{h_1})K^2\Big( \frac{x_i-c}{h}\Big)R_i^2 \right)\\
& = \kappa_{20}^2 \int \int \frac{f_Z(z_j)^2}{f_{X, Z(1)}(c^+, z_j)^3} (\sigma^2 + d_1(c^+, z_j)^2)(h_1)^{-1}K^2(\frac{c-x_j}{h_1}) f_{X, Z(1)}(x_j, z_j)dz_jdx_j   +O_p(h) \\
& = \kappa_{20}^3 \int \frac{f_Z(z_j)^2}{f_{X, Z(1)}(c^+, z_j)^2} (\sigma^2 + d_1(c^+, z_j)^2) dz_j   +O_p(h) \\
& = \kappa_{20}^3\left(\sigma^2 \int \frac{f_Z(z_j)^2}{f_{X, Z(1)}(c^+, z_j)^2}dz_j + \int d_1(c^+, z_j)^2 \frac{f_Z(z_j)^2}{f_{X, Z(1)}(c^+, z_j)^2}dz_j\right) +O_p(h) < \infty
\end{align*}

When $h\asymp \sqrt{h_1} \asymp h_2$

\begin{align*}
&\mathbb{E}\left( \frac{f_Z(z_i)^2}{f_{X, Z(1)}(c^+, z_i)^4}(h_1)^{-4}K_1^2(\frac{c-x_j}{h_1}, \frac{z_i-z_j}{h_1})K^2\Big( \frac{x_i-c}{h}\Big)R_i^2 \Big| x_j, z_j\right)\\
& = \int_{z_i} \frac{f_Z(z_i)^2}{f_{X, Z(1)}(c^+, z_i)^4}(h_1)^{-2}K_1^2(\frac{c-x_j}{h_1}, \frac{z_i-z_j}{h_1}) J(c, z_i) dz_i\\
& =  \kappa_{20}\int_{z_i} \frac{f_Z(z_i)^2}{f_{X, Z(1)}(c^+, z_i)^3}(h_1)^{-2}K_1^2(\frac{c-x_j}{h_1}, \frac{z_i-z_j}{h_1}) (\sigma^2 + d_1(c^+, z_i)^2)  dz_i +O_p(h)\\
& = \kappa_{20}^2 \frac{f_Z(z_j)^2}{f_{X, Z(1)}(c^+, z_j)^3} (\sigma^2 + d_1(c^+, z_j)^2)K^2(\frac{c-x_j}{h_1})  +O_p(h) 
\end{align*}
Thus 
\begin{align*}
&\mathbb{E}\left( \frac{f_Z(z_i)^2}{f_{X, Z(1)}(c^+, z_i)^4}(h_1)^{-4}K_1^2(\frac{c-x_j}{h_1}, \frac{z_i-z_j}{h_1})K^2\Big( \frac{x_i-c}{h}\Big)R_i^2 \right)\\
& = \kappa_{20}^2 \int \int \frac{f_Z(z_j)^2}{f_{X, Z(1)}(c^+, z_j)^3} (\sigma^2 + d_1(c^+, z_j)^2)K^2(\frac{c-x_j}{h_1}) f_{X, Z(1)}(x_j, z_j)  +O_p(h) \\
& = O_p(h) < \infty
\end{align*}
\item From Lemma \ref{lemAM}, equation \ref{eqA3} can be written as
\begin{align*}
&\mathbb{E}\left( \frac{f_Z(z_i)^2}{f_{X, Z(1)^2}(c^+, z_i)}K^2\Big( \frac{x_i-c}{h}\Big)R_i^2\right)\\
& = h\int_{z_i} \frac{f_Z(z_i)^2}{f_{X, Z(1)}(c, z_i)^2}J(c, z_i) dz_i\\
& = h\kappa_{20}\int_{z_i} \frac{f_Z(z_i)^2}{f_{X, Z(1)}(c, z_i)}(\sigma^2 + d_1(c^+, z_j)^2)  dz_i +O_p(h^2)\\
& = O_p(h) < \infty
\end{align*}

\item From Lemma \ref{lemAM}, and using the fact that the kernel is bounded by a constant $\mathcal{K}$, equation \ref{eqA4} can be written as


\begin{align*}
 &\mathbb{E}\left( \frac{f_Z(z_i)}{f_{X, Z(1)}(c^+, z_i)^3} (h_2)^{-1}(h_1)^{-2}K_1(\frac{c-x_j}{h_1}, \frac{z_i-z_j}{h_1})K(\frac{z_i-z_j}{h_2}) K^2\Big( \frac{x_i-c}{h}\Big)R_i^2\Big| x_j, z_j\right)\\
 & = \int_{z_i}\frac{f_Z(z_i)}{f_{X, Z(1)}(c^+, z_i)^3}(h_1)^{-2} K_1(\frac{c-x_j}{h_1}, \frac{z_i-z_j}{h_1})K(\frac{z_i-z_j}{h_2}) J(c, z_j) dz_i\\
 & = \kappa_{20}K(\frac{c-x_j}{h_1})\int_{z_i}\frac{f_Z(z_i)}{f_{X, Z(1)}(c^+, z_i)^2}(h_1)^{-2} K( \frac{z_i-z_j}{h_1})K(\frac{z_i-z_j}{h_2})  (\sigma^2 + d_1(c^+, z_i)^2) dz_i +O_p(h)\\
 & \leq \mathcal{K}\kappa_{20}K(\frac{c-x_j}{h_1})(h_1)^{-1}\frac{f_Z(z_j)}{f_{X, Z(1)}(c^+, z_j)^2}  (\sigma^2 + d_1(c^+, z_j)^2)  +O_p(h)
\end{align*}

Thus 
\begin{align*}
&\mathbb{E}\left( \frac{f_Z(z_i)}{f_{X, Z(1)}(c^+, z_i)^3} (h_2)^{-1}(h_1)^{-2}K_1(\frac{c-x_j}{h_1}, \frac{z_i-z_j}{h_1})K(\frac{z_i-z_j}{h_2}) K^2\Big( \frac{x_i-c}{h}\Big)R_i^2\right)\\
& \leq \mathcal{K}\kappa_{20} \int \int K(\frac{c-x_j}{h_1})(h_1)^{-1}\frac{f_Z(z_j)}{f_{X, Z(1)}(c^+, z_j)^2}  (\sigma^2 + d_1(c^+, z_j)^2) f_{X, Z(1)}(x_j, z_j)dz_jdx_j  +O_p(h) \\
& = \mathcal{K}\kappa_{20} \int \frac{f_Z(z_j)}{f_{X, Z(1)}(c^+, z_j)} (\sigma^2 + d_1(c^+, z_j)^2) dz_j  +O_p(h) \\
& = \mathcal{K}\kappa_{20} \left(\sigma^2 \int \frac{f_Z(z_j)}{f_{X, Z(1)}(c^+, z_j)}dz_j + \int d_1(c^+, z_j)^2 \frac{f_Z(z_j)}{f_{X, Z(1)}(c^+, z_j)}dz_j\right) +O_p(h) < \infty
\end{align*}

\item From Lemma \ref{lemAM}, equation \ref{eqA5} can be written as
\begin{align*}
 &\mathbb{E}\left( \frac{f_Z(z_i)}{f_{X, Z(1)}(c^+, z_i)^2} (h_2)^{-1}K(\frac{z_i-z_j}{h_2}) K^2\Big( \frac{x_i-c}{h}\Big)R_i^2\Big|x_j, z_j\right)\\
 & = \int \frac{f_Z(z_i)}{f_{X, Z(1)}(c^+, z_i)^2} K(\frac{z_i-z_j}{h_2}) J(c, z_i)dz_i \\
 & =  \kappa_{20}\int \frac{f_Z(z_i)}{f_{X, Z(1)}(c^+, z_i)} K(\frac{z_i-z_j}{h_2})  (\sigma^2 + d_1(c^+, z_i)^2) dz_i +O_p(h) \\
 & = O_p(h)
\end{align*}
Thus
\begin{align*}
 &\mathbb{E}\left( \frac{f_Z(z_i)}{f_{X, Z(1)}(c^+, z_i)^2} (h_2)^{-1}K(\frac{z_i-z_j}{h_2}) K^2\Big( \frac{x_i-c}{h}\Big)R_i^2\right) < \infty
\end{align*}
\item From Lemma \ref{lemAM}, equation \ref{eqA6} can be written as
\begin{align*}
 &\mathbb{E}\left(\frac{f_Z(z_i)^2}{f_{X, Z(1)}(c^+, z_i)^3}(h_1^2)^{-1}K_1(\frac{c-x_j}{h_1}, \frac{z_i-z_j}{h_1})K^2\Big( \frac{x_i-c}{h}\Big)R_i^2\Big|x_j, z_j\right)\\
 & = h\int \frac{f_Z(z_i)^2}{f_{X, Z(1)}(c^+, z_i)^3}(h_1^2)^{-1}K_1(\frac{c-x_j}{h_1}, \frac{z_i-z_j}{h_1})J(c, z_i) dz_i\\
 & = h\kappa_{20}K(\frac{c-x_j}{h_1})\int \frac{f_Z(z_i)^2}{f_{X, Z(1)}(c^+, z_i)^2}(h_1^2)^{-1} K(\frac{z_i-z_j}{h_1})(\sigma^2 + d_1(c^+, z_i)^2) dz_i +O_p(h)\\
 & = hh_1^{-1}\kappa_{20}K(\frac{c-x_j}{h_1}) \frac{f_Z(z_j)^2}{f_{X, Z(1)}(c^+, z_j)^2} (\sigma^2 + d_1(c^+, z_j)^2) dz_i +O_p( h)
\end{align*}
Thus when $h\asymp h_1 \asymp h_2$
\begin{align*}
&\mathbb{E}\left( \frac{f_Z(z_i)^2}{f_{X, Z(1)}(c^+, z_i)^2}K_1^2(\frac{c-x_j}{h_1}, \frac{z_i-z_j}{h_1})K^2\Big( \frac{x_i-c}{h}\Big)R_i^2 \right)\\
& = \kappa_{20}^2 \int \int \frac{f_Z(z_j)^2}{f_{X, Z(1)}(c^+, z_j)^3} (\sigma^2 + d_1(c^+, z_j)^2)K(\frac{c-x_j}{h_1}) f_{X, Z(1)}(x_j, z_j)dz_jdx_j   +O_p(h) \\
& = O_p(h)<\infty
\end{align*}
and when $h\asymp \sqrt{h_1} \asymp h_2$
\begin{align*}
&\mathbb{E}\left( \frac{f_Z(z_i)^2}{f_{X, Z(1)}(c^+, z_i)^2}K_1^2(\frac{c-x_j}{h_1}, \frac{z_i-z_j}{h_1})K^2\Big( \frac{x_i-c}{h}\Big)R_i^2 \right) = O_p(\sqrt{h})<\infty
\end{align*}
\end{itemize}

\end{proof}

\begin{lemma} \label{lemAM}
Under the same condition as in Theorem \ref{thm2}, 
\begin{equation*}
\begin{split}
J(c, z_i) &:= \frac{1}{h}\int K^2\left(\frac{x_i-c}{h}\right)\EE(R_i^2|x_i,z_i) f_{X, Z(1)}(x_i, z_i) dx_i \\
& =(\sigma^2 + d_1(c^+, z_i)^2) \cdot f_{X, Z(1)}(c^+, z_i)\kappa_{20}  +O_p(h),,
\end{split}
\end{equation*}
where $O_p$ terms are valid uniformly over $i$.
\end{lemma}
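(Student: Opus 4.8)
The plan is to evaluate $\EE(R_i^2\mid x_i,z_i)$ in closed form, substitute it into the integral, rescale the integration variable, and Taylor-expand at $x=c^+$, while keeping track that the remainder is uniform in $z_i\in\cZ$. As in Lemma \ref{lemM}, the integral defining $J$ is effectively restricted to $x_i>c$ (the factor $\mathbf{1}(x_i>c)$ being inherited from the context in which $J$ arises), so throughout we may take $y_i=y_i(1)$. First I would decompose
\[
y_i(1)-\alpha_1-\beta_1(x_i-c)=\big(y_i(1)-m_1(x_i,z_i)\big)+\big(m_1(x_i,z_i)-\alpha_1-\beta_1(x_i-c)\big),
\]
observe that the first term has conditional mean $0$ and, by part 4 of Assumption \ref{SharpRDdesign}, conditional variance $\sigma^2$ not depending on $(x_i,z_i)$, while the second term is deterministic given $(x_i,z_i)$; hence
\[
\EE(R_i^2\mid x_i,z_i)=\sigma^2+\big(m_1(x_i,z_i)-\alpha_1-\beta_1(x_i-c)\big)^2.
\]

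Substituting this into $J(c,z_i)$ and changing variables $u=(x_i-c)/h$ (so the region $x_i>c$ becomes $u>0$) gives
\[
J(c,z_i)=\int_{u>0}K^2(u)\Big[\sigma^2+\big(m_1(c+uh,z_i)-\alpha_1-\beta_1 uh\big)^2\Big]f_{X,Z(1)}(c+uh,z_i)\,du.
\]
Because $m_1\in\cF^+$ and $f_{X,Z(1)}\in\cF^+$ (Assumption \ref{ContinuityofConditionalDensity}) with first derivatives bounded uniformly on $[c,c+\delta)\times\cZ$ and $\cZ$ bounded (Assumption \ref{SharpRDdesign}), I would Taylor-expand to first order: $m_1(c+uh,z_i)=m_1(c^+,z_i)+O(uh)$ and $f_{X,Z(1)}(c+uh,z_i)=f_{X,Z(1)}(c^+,z_i)+O(uh)$ uniformly in $z_i$, and therefore $\big(m_1(c+uh,z_i)-\alpha_1-\beta_1 uh\big)^2=d_1(c^+,z_i)^2+O(uh)$, using $\beta_1 uh=O(uh)$ and the boundedness of $d_1(c^+,z_i)=m_1(c^+,z_i)-\alpha_1$. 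Multiplying the three factors and using that $K$ has compact support (so $\int_{u>0}K^2(u)u^q\,du<\infty$ for all $q$, and $\int_{u>0}K^2(u)\,du=\kappa_{20}$) absorbs every cross and remainder term into $O(h)$ and leaves the leading term $\kappa_{20}\big(\sigma^2+d_1(c^+,z_i)^2\big)f_{X,Z(1)}(c^+,z_i)$, which is the asserted identity.

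The argument is essentially routine. The two points that require care are: (i) recognizing that the integral runs over $x_i>c$ rather than all of $\R$, which is precisely why the half-kernel moment $\kappa_{20}$ (and not $\int K^2$) appears in the answer; and (ii) checking that the $O(h)$ remainder is genuinely uniform over $i$, which follows from the uniform boundedness of the first derivatives of $m_1$ and $f_{X,Z(1)}$ on the compact region $[c,c+\delta)\times\cZ$ (Assumptions \ref{ContinuityofConditionalDensity} and \ref{SharpRDdesign}) together with finiteness of $\alpha_1$ and $\beta_1$. I do not expect any real obstacle beyond this bookkeeping.
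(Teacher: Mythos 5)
Your proposal is correct and follows essentially the same route as the paper's proof: the same decomposition $\EE(R_i^2\mid x_i,z_i)=\sigma^2+\big(d_1(x_i,z_i)-(x_i-c)\beta_1\big)^2$, the same change of variables $u=(x_i-c)/h$, and the same first-order Taylor expansion of $m_1$ and $f_{X,Z(1)}$ at $c^+$ with uniformity over $z_i$ coming from the bounded derivatives on $[c,c+\delta)\times\cZ$. If anything, you are slightly more explicit than the paper about the restriction to $x_i>c$ (hence the half-moment $\kappa_{20}$), which the paper leaves implicit by writing the $u$-integral without limits.
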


\begin{proof}
Recall $R_i = \epsilon_i + m_1(x_i, z_i) - \alpha_1 - (x_i-c)\beta_1 = \epsilon_i + d_1(x_i, z_i)  - (x_i-c)\beta_1$, thus

\begin{align*}
\EE(R_i^2|x_i,z_i) &= d_1(x_i, z_i)^2 + \EE(\epsilon_i^2|x_i, z_i) + (x_i-c)^2\beta_1^2\\
& + 2d_1(x_i, z_i)\EE(\epsilon_i|x_i, z_i) +2(x_i-c)\beta_1\EE(\epsilon_i|x_i, z_i) + 2d_1(x_i, z_i) (x_i-c)\beta_1 \\
& = \sigma^2 + d_1(x_i, z_i)^2 + (x_i-c)^2\beta_1^2 + 2d_1(x_i, z_i) \cdot (x_i-c)\beta_1
\end{align*}

Following the standard Taylor expansion, we can show that
\begin{align*}
& J(c, z_i) = \frac{1}{h}\int K^2\left(\frac{x_i-c}{h}\right)\EE(R_i^2|x_i,z_i) f_{X, Z(1)}(x_i, z_i) dx_i \\
& = \sigma^2\int_{u} K^2(u) f_{X, Z(1)}(c+uh, z_i) du + \int_{u} K^2(u) d_1(c+uh, z_i)^2 f_{X, Z(1)}(c+uh, z_i) du \\
& + \beta_1^2h^2\int_{u} K^2(u) u^2 f_{X, Z(1)}(c+uh, z_i) du + 2h\beta_1 \cdot \int_{u} K^2(u) ud_1(c+uh, z_i) f_{X, Z(1)}(c+uh, z_i) du\\
& =(\sigma^2 + d_1(c^+, z_i)^2) \cdot f_{X, Z(1)}(c^+, z_i)\int_{u} K^2(u)  du  +O_p(h),
\end{align*}
where $O_p$ terms are valid uniformly over $i$ as the (mixed) third derivatives of $f_{X, Z(1)}(x_i, z_i)$ are all bounded. 
\end{proof}

\bibliographystyle{ims}
\bibliography{balancedRD}

\end{document}